\newcommand{\E}{\mathrm{E}}
\newcommand{\Ep}{\mathrm{E}_p}
\newcommand{\Eq}{\mathrm{E}_q}
\newtheorem{lemma}{\underline{\bf Lemma}}
\newtheorem{example}{\underline{\bf Example}}
\newtheorem{pro}{\underline{\bf Proposition}}
\newtheorem{Th}{\underline{\bf Theorem}}
\newtheorem{Rem}{\underline{\bf Remark}}
\newtheorem{Cor}{\underline{\bf Corollary}}
\def\bse{\begin{eqnarray*}}
\def\ese{\end{eqnarray*}}
\def\be{\begin{eqnarray}}
\def\ee{\end{eqnarray}}
\def\bsq{\begin{equation*}}
\def\esq{\end{equation*}}
\def\bq{\begin{equation}}
\def\eq{\end{equation}}
\def\bi{\begin{itemize}}
\def\ei{\end{itemize}}
\def\sumi{\sum_{i=1}^n}
\def\sumj{\sum_{j=1}^n}
\def\wh{\widehat}
\def\wt{\widetilde}
\def\pr{\hbox{pr}}
\def\calF{{\cal F}}
\def\calL{{\cal L}}
\def\calT{{\cal T}}
\def\trans{^{\rm T}}
\def\eff{_{\rm eff}}
\def\n{\nonumber}
\def\bb{{\boldsymbol\beta}}
\def\bt{{\boldsymbol\theta}}
\def\bT{{\boldsymbol\Theta}}
\def\bg{{\boldsymbol\gamma}}
\def\bSigma{{\boldsymbol\Sigma}}
\def\ba{{\boldsymbol\alpha}}
\def\bphi{{\boldsymbol\phi}}
\def\bPhi{{\boldsymbol\Phi}}
\def\bd{{\boldsymbol\delta}}
\def\bzeta{{\boldsymbol\zeta}}
\def\0{{\bf 0}}
\def\X{{\bf X}}
\def\x{{\bf x}}
\def\c{{\bf c}}
\def\w{{\bf w}}
\def\U{{\bf U}}
\def\V{{\bf V}}
\def\v{{\bf v}}
\def\S{{\bf S}}
\def\B{{\bf B}}
\def\b{{\bf b}}
\def\A{{\bf A}}
\def\a{{\bf a}}
\def\g{{\bf g}}
\def\G{{\bf G}}
\def\I{{\bf I}}
\def\t{{\bf t}}
\def\w{{\bf w}}
\def\y{{\bf y}}
\def\a{{\bf a}}
\def\boxit#1{\vbox{\hrule\hbox{\vrule\kern6pt\vbox{\kern6pt#1\kern6pt}\kern6pt\vrule}\hrule}}
\def\macomment#1{\vskip 2mm\boxit{\vskip 2mm{\color{red}\bf#1} {\color{blue}\bf -- MA\vskip 2mm}}\vskip 2mm}
\def\px{p_{\X}}
\def\py{p_{Y}}
\def\pyx{p_{Y\mid\X}}
\def\pxy{p_{\X\mid Y}}
\def\qx{q_{\X}}
\def\qy{q_{Y}}
\def\qyx{q_{Y\mid\X}}
\def\qxy{q_{\X\mid Y}}
\def\mp{{\mathcal{P}}}
\def\mq{{\mathcal{Q}}}
\def\var{\hbox{var}}
\def\widebreve{\mathpalette\wide@breve}
\def\wide@breve#1#2{\sbox\z@{$#1#2$}%
     \mathop{\vbox{\m@th\ialign{##\crcr
\kern0.08em\brevefill#1{0.8\wd\z@}\crcr\noalign{\nointerlineskip}%
                    $\hss#1#2\hss$\crcr}}}\limits}
\def\brevefill#1#2{$\m@th\sbox\tw@{$#1($}%
  \hss\resizebox{#2}{\wd\tw@}{\rotatebox[origin=c]{90}{\upshape(}}\hss$}
\begin{document}

\baselineskip 15 pt

\title{Doubly Flexible Estimation under Label Shift}

\author{
Seong-ho Lee\thanks{Penn State University}~~~~~
Yanyuan Ma\thanks{Penn State University}~~~~~
Jiwei Zhao\thanks{University of Wisconsin-Madison; e-mail: \texttt{jiwei.zhao@wisc.edu}}
}

\maketitle
\thispagestyle{empty}

\newpage
\setcounter{page}{1} 
\thispagestyle{empty}

\begin{center}
{\Large Doubly Flexible Estimation under Label Shift}
\end{center}

\begin{abstract}
\noindent
In studies ranging from clinical medicine to policy research, complete
data  are usually available from a population $\mp$,
but the quantity of interest is often sought for  a related but
different population $\mq$ which only has partial data.
In this paper, we consider the setting that both outcome $Y$ and
covariate $\X$ are available from $\mp$ whereas only $\X$ is available
from $\mq$, under the so-called label shift assumption, i.e., the
conditional distribution of $\X$ given $Y$ remains the same across the two populations.
To estimate the parameter of interest in population $\mq$ via
leveraging the information from population $\mp$, the following three
ingredients are essential:
(a) the common conditional distribution of $\X$ given $Y$, (b) the
regression model of $Y$ given $\X$ in population $\mp$, and (c) the
density ratio of the outcome $Y$ between the two populations.
We propose an estimation procedure that only needs some standard
nonparametric regression technique to approximate the conditional
expectations with respect to (a),
while by no means needs an estimate or model for
(b) or (c); i.e., doubly flexible to the possible model
misspecifications of both (b) and (c).
This is conceptually different from the well-known doubly robust
estimation in that, double robustness allows at most one model to be
misspecified whereas our proposal here can allow both (b)
and (c) to be misspecified.
This is of particular interest in our setting because estimating (c)
is difficult, if not impossible, by virtue of the absence of the
$Y$-data in population $\mq$.
Furthermore, even though the estimation of (b) is sometimes
off-the-shelf, it can face curse of dimensionality or
  computational challenges.
We develop the large sample theory for the proposed estimator, and
examine its finite-sample performance through simulation studies as
well as an application to the MIMIC-III database.
\end{abstract}

{\bf Key Words:}
Distribution shift, label shift, doubly flexible, semiparametric statistics, efficient influence function, model misspecification.

\newpage

\setcounter{equation}{0}

\section{Introduction}

In studies ranging from clinical medicine to policy research, there
often exist data and information from a population $\mp$, while the
quantity of interest is defined on a particular target
$\mq$, relevant but different from $\mp$.
For instance, in a clinical trial setting, physicians may be left
interpreting evidence from a randomized controlled trial consisting of patients
who have demographics and comorbidities that are quite different from
those of their own patients (population $\mq$).
As another example, to build a predictive model on pneumonia outbreak
for the flu season (population $\mq$), researchers might find a
similar model during the non-flu season relevant and useful.
In these scenarios, there is a discrepancy between the distributions
of $\mp$ and $\mq$, termed distribution shift throughout.
Distribution shift
can also refer to the fact that the distribution of the training
sample is different from that of the testing sample, in the evaluation of a learning algorithm.

In all of these situations, it is of vital interest to propose methods
that can appropriately leverage the information from $\mp$ into the
statistical tasks for $\mq$.
Our methodology will use the information from both outcome (output,
response, label) $Y$ and covariate (input, predictor, feature) $\X$ in
population $\mp$ as well as covariate $\X$ in population
$\mq$.
This setting is also named unsupervised domain adaptation \citep{quinonero2008dataset, moreno2012unifying, kouw2019review}.

Without any assumptions on the nature of shift, it is certainly impossible to
leverage information between two heterogeneous populations.
Two major types of distribution shifts have been defined in the literature.
The first is called  covariate shift where the shift
happens between the marginal distributions of $\X$ while the
conditional distribution of $Y$ given $\X$ does not change; i.e., $\px(\x)\neq \qx(\x)$ and
$\pyx(y,\x)=\qyx(y, \x)$.
The difference between $\mp$ and $\mq$ can be summarized as a
density ratio $\qx(\x)/\px(\x)$, which is, fortunately, estimable
since covariate $\X$ is available from both populations.
Covariate shift aligns
with the causal learning setting \citep{scholkopf2012causal} where $\X$ is the cause and $Y$ is
the effect.
Covariate shift has attracted a great deal of attention and has been
investigated in many literatures, such as
\cite{shimodaira2000improving, huang2007correcting,
  sugiyama2008direct, gretton2009covariate, sugiyama2012machine,
  kpotufe2021marginal} and the references therein.

The second type, which is the focus of this paper, is named label
shift, because it assumes that the shift is induced by the marginal
distributions of $Y$ while the
process generating $\X$ given $Y$ is identical in both populations.
Formally, it assumes
\bse
\py(y)\neq \qy(y), \mbox{ and } \pxy(\x,y)=\qxy(\x,y)\equiv g(\x,y).
\ese
Label shift is also called prior probability shift
\citep{storkey2009training, tasche2017fisher}, target shift
\citep{zhang2013domain, nguyen2016continuous}, or class prior change
\citep{du2014semi, iyer2014maximum}.
Label shift aligns with the anticausal learning setting in which the
outcome/label $Y$ causes the covariate/feature $\X$; for example, diseases
cause symptoms or objects cause sensory observations.
Consider the situation that one fits a model to predict whether a
patient has pneumonia based on observed symptoms, and that this model
predicts reliably when deployed in the clinic during the
non-flu season.
When the flu season starts, there is a sudden surge of
pneumonia cases and the probability of patients developing pneumonia given
that they show symptoms rises, while the mechanism of showing
symptoms of pneumonia is rather stable.
Label shift also exists in many computer vision applications, such as
predicting object locations and directions, and human poses; see
\cite{martinez2017simple, yang2018position, guo2020ltf}.

In the label shift framework, one fundamental problem
\citep{garg2020unified} is determining whether the shift
has occurred and estimating the label distribution $\qy(y)$, or
equivalently, assessing the density ratio
$\qy(y)/\py(y)\equiv \rho(y)$.
In contrast to estimating the density ratio $\qx(\x)/\px(\x)$ under
covariate shift, estimating
$\rho(y)$ is a daunting task due to the absence of the $Y$ observations in population $\mq$.
Works in the label shift framework are mainly limited to the
classification problems in the machine learning literature.
\cite{saerens2002adjusting} proposed a simple Expectation-Maximization
(EM) \citep{dempster1977maximum} procedure, named maximum likelihood
label shift (MLLS), to estimate $\qy(y)$ assuming access to a
classifier that outputs the true conditional probabilities of the
population $\mp$, $\pyx(y,\x)$.
Later on, \cite{chan2005word} proposed an EM algorithm that requires
the estimation of $g(\x,y)$, which is unfortunately difficult for
high-dimensional $\X$ and moreover, it does not apply to regression
problems.
Alternatively, \cite{lipton2018detecting} and
\cite{azizzadenesheli2019regularized} proposed moment-matching based
estimators, named black box shift learning (BBSL) and regularized
learning under label shift (RLLS), that make use of the invertible
confusion matrix of a classifier learned from population $\mp$.
The connection and comparison of these two lines of research, either
empirical or theoretical, remain unclear.
To our best knowledge, neither BBSL nor RLLS has been benchmarked against EM.
\cite{alexandari2020} showed that, in combination with a calibration
named bias-corrected temperature scaling, MLLS outperforms BBSL and
RLLS empirically; whereas MLLS underperforms BBSL when applied naively.
Under label shift, \cite{maity2020minimax} also studied the minimax
rate of convergence for nonparametric classification.

For continuous $Y$ in regression problems, estimating $\qy(y)$ becomes
the problem of estimating a function instead of a finite number of
parameters. Not surprisingly, its literature is quite scarce.
\cite{zhang2013domain} proposed a nonparametric method to estimate the
density ratio by kernel mean matching of distributions. However, this approach
does not scale to large data as the computational cost is quadratic in the sample size.
\cite{nguyen2016continuous} considered continuous label shift
adaptation and studied an importance weight estimator, but their
approach relies on a parametric model for $\pyx(y,\x)$ hence can only be applied in supervised learning.

In this paper, we take a completely distinct approach from all of the existing literature.
Different from the current majority, our methodology is devised to accommodate both classification and regression.
Whether the outcome $Y$ is discrete or continuous is not essential in our proposal.
We directly estimate a characteristic of the population $\mq$.
Specifically, we estimate the parameter $\bt$ such that $\Eq\{\U(\X,Y,\bt)\}=\0$ where $\U(\cdot)$ is a user specified function and $\Eq(\cdot)$ stands for the expectation
with respect to $\qy(y)g(\x,y)$ or equivalently to $\qyx(y,\x)\qx(\x)$.
This is a general framework, including estimating the mean of $Y$ or the $t$-th quantile of $Y$ as special cases.
According to how the nuisance components are estimated, detailed in the next three paragraphs, we propose various estimators for $\bt$, and develop large sample theory for these estimators to quantify the estimation uncertainties and to conduct statistical inference.

To estimate $\bt$, three nuisance components are involved.
First and foremost is the density ratio $\rho(y)$, which is almost
infeasible to estimate based on the observed data due to the lack of
$Y$-observations in population $\mq$.
Our intention is to bypass the challenging task of estimating $\rho(y)$.
This turns out achievable through careful manipulation of other
components of the influence function.  In fact, a unique feature of
our work is that, we do not need
to estimate $\rho(y)$ throughout the estimation procedure. Instead, only a working model, denoted
as $\rho^*(y)$, is needed.

The second one is $\pyx(y,\x)$, or  some dependent
quantities such as $\Ep(\cdot\mid \x)$. In contrast to
$\rho(y)$, estimating $\Ep(\cdot\mid \x)$ is blessed with the observed data in
population $\cal P$.
Indeed, we can use off-the-shelf machine learning methods or
nonparametric regression methods to obtain the corresponding estimator
$\wh \E_p(\cdot\mid \x)$.
Nonetheless, we can also choose to give up estimating
$\Ep(\cdot\mid\x)$ even though we can do it. This means that we can
misspecify the conditional distribution
$\pyx(y,\x)$ while we also misspecify the density ratio $\rho(y)$.
We call such an estimator $\wh\bt$ doubly flexible---the working
density ratio model $\rho^*(y)$ is flexible, so is the working
conditional distribution model $\pyx^\star(y,\x)$.
Note that our superscripts here are different: superscript $^*$ stands for the working model of the density ratio whereas $^\star$ is for the conditional distribution model.
This doubly flexible property is much more favorable than the classic ``doubly
robust'' in the literature.
The standard double robustness means that one can misspecify either
one of two models but not both, while here, we can misspecify both
models.
As an alternative, if one chooses to estimate
$\Ep(\cdot\mid\x)$, say, $\wh\E_p(\cdot\mid\x)$, we name the corresponding estimator $\wt\bt$
singly flexible---only flexible in working model $\rho^*(y)$.

The third nuisance is the conditional density function $g(\x,y)$, whose estimation might be subject to the curse of dimensionality.
Fortunately, in our estimation procedure, $g(\x,y)$ only affects
quantities of the form $\E(\cdot\mid y)$, which are  one dimensional
regression problems hence can be easily solved via the most basic
nonparametric regression procedure such as the Nadaraya-Watson
estimation.

The remaining of the paper is structured as follows.
In Section~\ref{sec:prelim}, we first outline our strategy of how to
incorporate samples from two heterogeneous populations.
The proposed doubly flexible estimator is presented in
Section~\ref{sec:theta}, and the alternative singly flexible estimator
is contained in Section~\ref{sec:singleflex}.
For easier understanding and improved readability,
we present both the methodology and the theory for a special
parameter $\theta=\Eq(Y)$ in the main text, while defer the general
results for $\bt$ such that $\Eq\{\U(\X,Y,\bt)\}=\0$ to the Supplement.
Section~\ref{sec:sim} contains empirical results for extensive
simulation studies. We  present an application to the MIMIC-III database in Section  \ref{sec:data}.
The paper is concluded with discussions in Section~\ref{sec:disc}.
All the technical details are also included in the Supplement.

\section{Model Structure}\label{sec:prelim}

We consider independent and identically distributed (iid) observations
$\{Y_i,\X_i\}, i=1,\ldots,n_1$ from population $\mp$, and iid
observations $\X_j, j=n_1+1, \ldots, n_1+n_0=n$ from population
$\mq$.
To use the information in population $\mp$ under label shift, we
stack the two random samples together and assemble a new data set of
size $n$, which represents a random sample for an imaginary population
consisting of $100\pi\%$ population $\mp$ members and $100(1-\pi)\%$
population $\mq$ members.
Here we define $\pi\equiv n_1/n$.
Throughout our derivation, other than $\Ep(\cdot)$ and $\Eq(\cdot)$,
we also compute $\E(\cdot)$ that is with respect to this imaginary
population; however, this imaginary population is only used as an
intermediate tool to leverage information from two heterogeneous
populations under label shift.
Our final conclusion will only be made for the target population
$\mq$.

For convenience, we introduce a binary indicator $R$ in this stacked
random sample, where $R=1$ means the subject is from population $\mp$
and $R=0$ population $\mq$.
Thus, the likelihood of one observation from the stacked random sample is
\be
&& \{g(\x, y)\py(y)\}^r \left\{\int g(\x, y)\qy(y)\dd
  y\right\}^{1-r} \pi^r(1-\pi)^{1-r}\label{eq:likelihood}\\
&=& \{g(\x, y)\py(y)\}^r \left\{\int g(\x, y)\rho(y)\py(y)\dd
  y\right\}^{1-r} \pi^r(1-\pi)^{1-r}.\label{eq:likelihoodwithrho}
\ee
Although $g(\x,y)$ and $\py(y)$ can be identified from
(\ref{eq:likelihood}), unfortunately $\qy(y)$ may not.
Below is a simple example illustrating the possible nonidentifiability of $\qy(y)$.
\begin{example}
  Consider a discrete $Y$ with three supporting values $0,1,2$ and a
  discrete $X$ with two supporting values $0,1$. In both
  populations $\mp$ and $\mq$, $g(x,y)$ is given as
  \bse
  \pr(X=0\mid Y=0)=1/5, \pr(X=0\mid Y=1)=1/8, \mbox{ and } \pr(X=0\mid Y=2)=2/3.
  \ese
  In population $\mp$, the marginal distribution of $Y$, $\py(y)$,  is given as
  \bse
  \pr(Y=0)=5/16, \pr(Y=1)=1/2, \mbox{ and } \pr(Y=2)=3/16.
  \ese
  In population $\mq$, the marginal distribution of $Y$, $\qy(y)$, is given as
  \bse
 \pr(Y=0)=\frac{5(25-416t)}{336}, \pr(Y=1)=\frac{32t-1}{6}, \mbox{ and } \pr(Y=2)=\frac{89+96t}{112},
  \ese
  where $t\in(1/32, 25/336)$.
  Clearly this satisfies the label shift assumption, and the marginal distribution of $X$ in population $\mq$ is identifiable since $\pr(X=0)=7/12$ is free of $t$; however, the marginal distribution of $Y$ in population $\mq$, $\qy(y)$, is not identifiable.
\end{example}
The following result demonstrates that the completeness condition on
$\pyx(y,\x)$ would ensure the identifiability.
Its proof is contained in Section~\ref{sec:slemmaproof} of the Supplement.

\begin{lemma}\label{lem:iden}
If the conditional pdf/pmf $\pyx(y, \x)$ of population $\mp$ satisfies
the completeness condition in the sense that, for any function $h(Y)$
with finite mean, $\Ep\{h(Y)\mid\X\}=\int h(y)\pyx(y, \x)\dd y=0$
implies $h(Y)=0$ almost surely, then all the unknown components in
(\ref{eq:likelihoodwithrho}), i.e., $g(\x, y)$, $\py(y)$ and $\rho(y)$, are
 identifiable. Subsequently, $\qy(y)$ is also identifiable.
\end{lemma}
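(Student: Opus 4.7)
The plan is to peel off the two pieces of the likelihood in (\ref{eq:likelihoodwithrho}) and to handle the $R=1$ and $R=0$ parts separately. From the $R=1$ factor $\{g(\x,y)\py(y)\}^r$ alone, the joint density of $(\X,Y)$ under $\mp$ is identified, and hence so are its conditional $g(\x,y)$ (viewed as $\X\mid Y$) and its marginal $\py(y)$. The nontrivial task is therefore identification of $\rho(y)$ from the $R=0$ factor, because what one observes there is only the mixed quantity
\[
\qx(\x)=\int g(\x,y)\rho(y)\py(y)\dd y .
\]

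To handle this, I would proceed by contradiction: suppose two candidate ratios $\rho_1(y)$ and $\rho_2(y)$ both reproduce the same $\qx(\x)$. Setting $h(y)=\rho_1(y)-\rho_2(y)$, this immediately gives
\[
\int g(\x,y)\,h(y)\,\py(y)\dd y=0\quad \mbox{for almost every } \x.
\]
The key rewriting step is to convert this integral against the joint $g(\x,y)\py(y)$ into an integral against the conditional $\pyx(y,\x)$. Using Bayes' rule $\pyx(y,\x)=g(\x,y)\py(y)/\px(\x)$ (with $\px(\x)=\int g(\x,y)\py(y)\dd y$ already identified in the first step and strictly positive wherever relevant), dividing the previous display by $\px(\x)$ yields
\[
\Ep\{h(Y)\mid \X=\x\}=\int h(y)\,\pyx(y,\x)\dd y=0 \quad \mbox{a.s.}
\]
The completeness hypothesis on $\pyx(y,\x)$ then forces $h(Y)=0$ almost surely, so $\rho_1=\rho_2$ almost everywhere with respect to $\py$. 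This gives identifiability of $\rho$.

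Finally, since $\qy(y)=\rho(y)\py(y)$ and both factors on the right have been shown to be identifiable, $\qy$ is identifiable as well, which completes the proof. The main conceptual step, and the only nonroutine one, is the Bayes reexpression that turns the mixed-integral equation for $\qx$ into a conditional expectation statement under $\mp$; once this is done, the completeness assumption is tailor-made to conclude. The absence of $Y$-observations in $\mq$ is exactly what makes identifiability nontrivial here, and completeness plays the same role it does in nonparametric instrumental variable and measurement-error problems.
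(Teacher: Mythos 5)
Your proposal is correct and takes essentially the same route as the paper's own proof: identify $g(\x,y)$ and $\py(y)$ from the $\mp$-part of the likelihood, rewrite $g(\x,y)\py(y)=\pyx(y,\x)\px(\x)$ and cancel $\px(\x)$ to turn the equality of the $\mq$-marginals into $\Ep\{\rho_1(Y)-\rho_2(Y)\mid\X\}=0$, and then invoke completeness. No substantive differences.
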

The completeness condition in Lemma~\ref{lem:iden} is mild and has
been widely assumed in instrumental variables, measurement error
models, and econometrics; see, e.g, \cite{newey2003instrumental,
  d2011completeness, hu2018nonparametric}.
Because the condition is imposed on population $\mp$ and we have random
observations $(\X_i, Y_i)$'s from population $\mp$, it
can be examined and verified in empirical studies.
One can easily check that many commonly-used distributions such as exponential
families satisfy the completeness condition.
In particular, if the outcome $Y$ is discrete with finitely many supporting
values, \cite{newey2003instrumental} pointed out that the completeness
condition only means that the covariate $\X$ has a
support whose cardinality is no smaller than that of $Y$.

In this paper, we focus on estimating a characteristic of
population $\mq$. For better clarity,
we will present the results for
$\theta=\Eq(Y)=\int y g(\x,y)\qy(y)\dd\x\dd y$ in the main text, then generalize
the results to $\bt$ that satisfies $\Eq\{\U(Y,\X,\bt)\}=\0$ in the Supplement.
The main challenge in estimating $\theta$ is caused by the lack of
knowledge and data on $\qy(y)$,
or equivalently, $\rho(y)$.
Nevertheless, we will construct an
estimator that bypasses the difficulty of assessing $\rho(y)$. We will
show that we only need a working model of $\rho(y)$, denoted as
$\rho^*(y)$, that can be flexible.
Furthermore, we find that our procedure can also simultaneously
avoid estimating $\pyx(y, \x)$, in that we can insert a possibly
misspecified working model $\pyx^\star(y, \x)$. Thus, our procedure
is flexible with respect to both $\rho(y)$ and $\pyx(y, \x)$---doubly flexible.
This is a property different from the classic ``double
robustness'' which means that one
can only misspecify one of two models but not both. In contrast, here, we can misspecify
both.

\section{Proposed Doubly Flexible Estimation for
  $\theta=\Eq(Y)$}\label{sec:theta}

If the density ratio function $\rho(y)$ were known, an intuitive
estimator of $\theta=\Eq(Y)$ can be created by noticing the relation
$\theta=\Eq(Y)=\Ep\{\rho(Y)Y\}=\E\{R\rho(Y)Y\}/\pi$; that is,
\be\label{eq:shift-dependent}
\widecheck\theta = \frac1n\sumi \frac{r_i}{\pi}\rho(y_i)y_i.
\ee
We call this estimator shift-dependent since it requires the correct specification of $\rho(y)$.
Clearly, if a working model $\rho^*(y)$ is adopted, the corresponding
estimator $\widecheck\theta^*$ is likely biased.

\subsection{General Approach to Estimating $\theta$}\label{sec:eif}

The creation of an estimator that is not solely shift-dependent is possible.
To motivate our proposed estimator, we first make some simple observations via balancing the samples from populations $\mp$ and $\mq$.
Recognizing the relation between $\Ep(\cdot)$, $\Eq(\cdot)$ and $\E(\cdot)$, the balancing of $Y$ is
\bse
\E\left\{\frac{R}{\pi}\rho(Y)Y\right\} = \Ep\{\rho(Y)Y\} = \Eq(Y)=
\E\left(\frac{1-R}{1-\pi}\theta\right).
\ese

Further, replacing the variable $Y$ above by an arbitrary function of
$\X$, we obtain another balancing function
\bse
\E\left\{\frac{R}{\pi}\rho(Y)b(\X)\right\}= \E\left\{\frac{1-R}{1-\pi}b(\X)\right\}.
\ese
Certainly, we also have
\bse
\E\left(\frac{R}{\pi}c\right) = \E\left(\frac{1-R}{1-\pi}c\right)
\ese
for any constant $c$.
Combining the above three, we can obtain a family of mean zero
functions
\be\label{eq:meanzero}
\frac{r}{\pi}\{\rho(y)y-b(\x)\rho(y)+c\}+\frac{1-r}{1-\pi}\{b(\x)-\theta-c\}:
  \forall b(\x), \forall c.
\ee

Note that the model in (\ref{eq:likelihoodwithrho}) contains three unknown functions $\py(y)$, $g(\x,y)$ and $\rho(y)$.
For this model, in Section
\ref{sec:influence} of the Supplement, we establish that
\bse
\calF
\equiv\left[\frac{r}{\pi}\{\rho(y)y-b(\x)\rho(y)+c\}+\frac{1-r}{1-\pi}\{b(\x)-\theta-c\}:
  \E\{b(\X)\mid y\}=y, \forall c \right]
\ese
is the family of all influence functions \citep{bickel1993efficient,
  tsiatis2006semiparametric} for estimating $\theta$.
According to the definition of the influence function, $\calF$ is
sufficiently comprehensive since it can generate any regular
asymptotically linear estimator of
$\theta$.
The requirement $\E\{b(\X)\mid y\}=y$ in the definition of $\calF$ is pivotal.
Different from the mean zero function in (\ref{eq:meanzero}), which
critically relies on the correct specification of $\rho(y)$, the
element in $\calF$ preserves its zero mean even if $\rho(y)$ is
misspecified as long as an appropriate $b(\x)$ is chosen so that
$\E\{b(\X)\mid y\}=y$.
To further discover a wise choice of such a $b(\x)$, we first derive a
special element in $\calF$, the efficient influence function
$\phi\eff(\x,r,ry)$, that corresponds to the semiparametric
efficiency bound and that provides guidance on constructing flexible
estimators for $\theta$.
\begin{pro}\label{pro:eff}
The efficient influence function $\phi\eff(\x,r,ry)$ for $\theta$ is
\bse
\phi\eff(\x,r,ry)
&=&\frac{r}{\pi}\rho(y)\left[y
-\frac{\Ep\{a(Y)\rho(Y)\mid\x\}}{\Ep\{\rho^2(Y)\mid\x\}+\pi/(1-\pi)\Ep\{\rho(Y)\mid\x\}}\right]\\
&&+\frac{1-r}{1-\pi}
\left[\frac{\Ep\{a(Y)\rho(Y)\mid\x\}}{\Ep\{\rho^2(Y)\mid\x\}+\pi/(1-\pi)\Ep\{\rho(Y)\mid\x\}}-\theta\right],
\ese
where $a(y)$ satisfies
\be\label{eq:a}
\E\left[\frac{\Ep\{a(Y)\rho(Y)\mid\X\}}{\Ep\{\rho^2(Y)\mid\X\}+\pi/(1-\pi)\Ep\{\rho(Y)\mid\X\}}\mid y\right]
=y.
\ee
\end{pro}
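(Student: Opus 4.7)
The plan is to identify the efficient influence function as the unique element of $\calF$ that attains minimum variance, via constrained variational calculus over the pair $(b,c)$ that parameterizes $\calF$.

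First, I would write down the variance of a generic $\phi\in\calF$, which splits cleanly because $R(1-R)=0$:
\[
V(b,c)=\frac{1}{\pi}\Ep\bigl[\{\rho(Y)(Y-b(\X))+c\}^2\bigr]+\frac{1}{1-\pi}\Eq\bigl[\{b(\X)-\theta-c\}^2\bigr].
\]
This is a convex quadratic functional, so any stationary point subject to $\Ep\{b(\X)\mid Y\}=Y$ is the global minimum. Minimizing over $c$ with $b$ fixed, the constraint together with the reweighting identity $\Ep\{\rho(Y)h(\X,Y)\}=\Eq\{h(\X,Y)\}$ makes the $c$-free terms in $\partial V/\partial c$ vanish, leaving $\partial V/\partial c=2c/\{\pi(1-\pi)\}$ and hence $c^{\star}=0$.

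Next, with $c=0$ fixed, I would compute the first variation $\delta V$ along admissible perturbations $\delta b$ obeying $\Ep\{\delta b(\X)\mid Y\}=0$. After expanding the squares and using $\Ep\{\rho(Y)\delta b(\X)\}=\Eq\{\E(\delta b(\X)\mid Y)\}=0$ to eliminate the $\theta$ contribution, the calculation collapses to
\[
\delta V=\frac{2}{\pi}\,\E\Bigl[\delta b(\X)\,\bigl\{b(\X)K(\X)-\Ep(\rho^2(Y)Y\mid\X)\bigr\}\Bigr],\qquad K(\x)\equiv\Ep\{\rho^2(Y)\mid\x\}+\frac{\pi}{1-\pi}\Ep\{\rho(Y)\mid\x\}.
\]
Imposing $\delta V=0$ for every admissible $\delta b$ forces $b(\X)K(\X)-\Ep\{\rho^2(Y)Y\mid\X\}$ to lie in the $L^2(\px)$-orthogonal complement of $\{h(\X):\Ep(h(\X)\mid Y)=0\}$. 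A standard Hilbert-space argument identifies this complement with the closed range of the adjoint of $h\mapsto\Ep\{h(\X)\mid Y\}$, namely the closure of $\{\Ep\{u(Y)\mid\X\}:u\in L^2(\py)\}$.

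Consequently there exists some $u$ with $b(\x)K(\x)=\Ep\{\rho^2(Y)Y+u(Y)\mid\x\}$; defining $a$ through $a(y)\rho(y)\equiv\rho^2(y)y+u(y)$ yields the announced form $b(\x)=\Ep\{a(Y)\rho(Y)\mid\x\}/K(\x)$, and re-imposing the original constraint $\Ep\{b(\X)\mid Y\}=Y$ on this $b$ reproduces exactly the integral equation~(\ref{eq:a}) that pins down $a$. Substituting $(b,c)=(m,0)$ into the parameterization of $\calF$ then delivers the expression for $\phi\eff$ as stated. The main obstacle I anticipate is not the variational algebra itself but justifying that (\ref{eq:a}) admits a solution $a(\cdot)$: this amounts to invertibility of the integral operator $a\mapsto \E\{\Ep[a(Y)\rho(Y)\mid\X]/K(\X)\mid Y\}$ on an appropriate function space, which hinges on positivity of $K(\X)$ together with a completeness-type assumption on $\pyx$ of the same flavor as that invoked in Lemma~\ref{lem:iden}.
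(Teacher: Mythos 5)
Your proposal is correct in substance but takes a genuinely different route from the paper. The paper's proof is a pure verification argument: it takes the stated $\phi\eff$ as given, rewrites $b(\x)$ in terms of $\Eq(\cdot\mid\x)$ via the reweighting identity, checks that $\phi\eff$ belongs to $\calF$ (with $c=0$), and then decomposes it into pieces lying in $\calT_\ba$, $\calT_\bb$ and $\calT_\bg$ to conclude it sits in the tangent space; efficiency then follows from the standard fact that the unique influence function lying in the tangent space is the efficient one. You instead characterize $\phi\eff$ as the variance minimizer over $\calF$ and derive the form of $b$ by constrained variational calculus. Your computations check out: the split of $V(b,c)$ using $R(1-R)=0$, the conclusion $c^{\star}=0$ (both cross terms vanish by the constraint $\E\{b(\X)\mid y\}=y$ and $\Ep\{\rho(Y)h\}=\Eq\{h\}$), and the first variation in $b$ after converting $\Eq\{f(\X)\}=\Ep[f(\X)\Ep\{\rho(Y)\mid\X\}]$ all agree with a direct calculation, and re-imposing the constraint on $b(\x)=\Ep\{a(Y)\rho(Y)\mid\x\}/K(\x)$ is exactly \eqref{eq:a}. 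Your route constructs the answer rather than verifying a guessed candidate, and convexity gives uniqueness of the minimizer for free; the paper's route is shorter but presupposes the form. Two minor points. First, your displayed $\delta V$ should carry $\Ep$ rather than the mixture expectation $\E$; your subsequent appeal to $L^2(\px)$-orthogonality shows this is what you intend. Second, the passage from ``lies in the closure of the range of the adjoint'' to ``there exists $u$ with $b(\x)K(\x)=\Ep\{\rho^2(Y)Y+u(Y)\mid\x\}$'' tacitly assumes that range is closed (equivalently, that the integral equation is solvable); you correctly flag this, and the paper does not resolve it in this proof either---it simply postulates a solution $a(y)$ of \eqref{eq:a} in the statement and addresses invertibility of the relevant operator only later, in Lemma~\ref{lem:l}, under completeness and boundedness conditions.
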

The detailed derivation of the efficient influence function in
Proposition \ref{pro:eff} is provided in Section \ref{sec:eff} of
the Supplement.
Clearly, the unique $b(\x)$ that leads to
the efficient influence function is
\bse
b(\x)\equiv
\frac{\Ep\{a(Y)\rho(Y)\mid\x\}}{\Ep\{\rho^2(Y)\mid\x\}
+\pi/(1-\pi)\Ep\{\rho(Y)\mid\x\}}
= \frac{\Eq\{a(Y)\mid\X\}}{\Eq\{\rho(Y)\mid\X\}+\pi/(1-\pi)}.
\ese
In principle, if both $\rho(y)$ and $b(\x)$ were known, we can estimate
$\theta$ by solving the estimating equation\\
$\sumi\phi\eff(\x_i,r_i,r_iy_i)=0$, which leads to
\be\label{eq:thetahat}
\widebreve\theta=
\frac{1}{n}\sumi\left[\frac{r_i}{\pi}\rho(y_i)\{y_i-b(\x_i)\}
+\frac{1-r_i}{1-\pi}b(\x_i)\right].
\ee
However, the estimator $\widebreve\theta$ is impractical because
of the following three obstacles.
First, as we pointed out, $\rho(y)$ is almost infeasible to estimate based on the observed data.
Second, $\Ep(\cdot\mid\x)$ is unknown and needs to be estimated.
Though various off-the-shelf machine learning or nonparametric
regression methods are available, when
the dimension of $\x$ is high, their performances are not always
satisfactory and their computation can be expensive.
The third obstacle lies in solving $a(y)$ from the integral equation
(\ref{eq:a}), which requires $g(\x,y)$ to evaluate its left hand side.
Estimating conditional density $g(\x,y)$ could be even more
difficult than estimating $\Ep(\cdot\mid\x)$, due to the curse of dimensionality.

Our proposed estimator will bypass the challenging task of estimating $\rho(y)$.
Throughout the estimation procedure, only a working model $\rho^*(y)$
is needed, which can be arbitrarily misspecified hence is flexible.
This turns out achievable through careful manipulation of other
components of the efficient influence function.
Our proposed estimator can also avoid estimating $\Ep(\cdot\mid\x)$
even though we can do it if we decide to.
This means that we can misspecify the conditional density model
$\pyx(y,\x)$, encoded as $\pyx^\star(y,\x)$, while we also misspecify
the density ratio $\rho(y)$. We call such an estimation procedure doubly flexible.
To overcome the third obstacle, we recognize that $g(\x,y)$ only
affects quantities of the form
$\E(\cdot\mid y)$, which are one dimensional regression problems hence
can be easily solved via the most basic nonparametric regression
procedure such as the Nadaraya-Watson estimator.

In a nutshell, a unique feature of our work is the tolerance of
both $\rho^*(y)$ and
$\pyx^\star(y,\x)$, which can be simultaneously misspecified.
We thus name the procedure doubly flexible.

\subsection{Proposed Estimator $\wh\theta$: Doubly Flexible in $\rho^*(y)$ and $\pyx^\star(y,\x)$}\label{sec:doubleflex}

Interestingly and critically, we discover that, even when both
$\rho^*(y)$ and $\pyx^\star(y,\x)$ are misspecified, the corresponding
estimator following the implementation of $\widebreve\theta$ in
(\ref{eq:thetahat}) is still consistent for $\theta$.
We summarize this result in Proposition~\ref{pro:consistency} and give
its proof in Section~\ref{sec:consistency} of the Supplement.
Below, we use superscripts $^*$ and $^\star$ to indicate that the
corresponding quantities are calculated based on the working models
$\rho^*(y)$ and $\pyx^\star(y,\x)$ respectively.
\begin{pro}\label{pro:consistency}
Define
\bse
\wh\theta_t=
\frac{1}{n}\sumi\left[\frac{r_i}{\pi}\rho^*(y_i)\{y_i-b^{*\star}(\x_i)\}
+\frac{1-r_i}{1-\pi}b^{*\star}(\x_i)\right],
\ese
where
\bse
b^{*\star}(\x)\equiv
\frac{\Ep^\star\{a^{*\star}(Y)\rho^*(Y)\mid\x\}}{\Ep^\star\{\rho^{*2}(Y)\mid\x\}
+\pi/(1-\pi)\Ep^\star\{\rho^*(Y)\mid\x\}},
\ese
and $a^{*\star}(y)$ is a solution to
\be\label{eq:a*star}
\E\left[\frac{\Ep^\star\{a^{*\star}(Y)\rho^*(Y)\mid\X\}}
{\Ep^\star\{\rho^{*2}(Y)\mid\X\}
+\pi/(1-\pi)\Ep^\star\{\rho^*(Y)\mid\X\}}\mid y\right]
=y.
\ee
Then $\wh\theta_t$  is a consistent estimator of $\theta$.
\end{pro}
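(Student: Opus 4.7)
The plan is to establish consistency by showing $\E[\wh\theta_t]=\theta$ and invoking the law of large numbers under standard integrability of $\rho^*$ and $b^{*\star}$. This reduces the problem to an algebraic identity for the population expectation of the summand.

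First I would exploit the stacked-sample structure: for any integrable $f(\X,Y)$, the construction gives $\E\{(R/\pi)f(\X,Y)\}=\Ep\{f(\X,Y)\}$ and $\E\{((1-R)/(1-\pi))f(\X,Y)\}=\Eq\{f(\X,Y)\}$. Combined with the label-shift identity
\begin{equation*}
\Eq\{h(\X)\} = \int h(\x)\,g(\x,y)\,\qy(y)\,\dd\x\dd y = \Ep\{\rho(Y)h(\X)\},
\end{equation*}
applied to $h=b^{*\star}$, the expectation of the summand defining $\wh\theta_t$ becomes
\begin{equation*}
\E[\wh\theta_t]=\Ep\{\rho^*(Y)Y\}-\Ep\{\rho^*(Y)b^{*\star}(\X)\}+\Ep\{\rho(Y)b^{*\star}(\X)\}.
\end{equation*}
Since $\theta=\Eq(Y)=\Ep\{\rho(Y)Y\}$, the target identity $\E[\wh\theta_t]=\theta$ is equivalent to
\begin{equation*}
\Ep\bigl\{[\rho(Y)-\rho^*(Y)]\,[b^{*\star}(\X)-Y]\bigr\}=0.
\end{equation*}

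Next I would invoke the defining equation (\ref{eq:a*star}) for $a^{*\star}$. Its left-hand side is $\E\{b^{*\star}(\X)\mid y\}$ under the stacked distribution; by label shift the conditional of $\X$ given $Y=y$ is the same $g(\x,y)$ in both populations, so $\E\{b^{*\star}(\X)\mid y\}=\Ep\{b^{*\star}(\X)\mid y\}$. Equation (\ref{eq:a*star}) thus translates to the pivotal balancing condition $\Ep\{b^{*\star}(\X)\mid Y\}=Y$. Because $\rho(Y)-\rho^*(Y)$ is a function of $Y$ alone, iterated expectation under $\mp$ yields
\begin{equation*}
\Ep\bigl\{[\rho(Y)-\rho^*(Y)]\,[b^{*\star}(\X)-Y]\bigr\}
=\Ep\bigl\{[\rho(Y)-\rho^*(Y)]\,\Ep[b^{*\star}(\X)-Y\mid Y]\bigr\}=0,
\end{equation*}
closing the argument.

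The main subtlety I expect is careful bookkeeping of the three expectation operators $\Ep$, $\Eq$, and the stacked $\E$, together with their starred working-model versions $\Ep^\star$, $\Eq^\star$. The conceptual heart of double flexibility lies in recognizing that the working conditional model $\pyx^\star$ enters only through the formula for $b^{*\star}(\x)$ (via $\Ep^\star(\cdot\mid\x)$), whereas the constraint selecting $a^{*\star}$ in (\ref{eq:a*star}) is evaluated under the \emph{true} marginalization over $\X\mid Y$. That decoupling is precisely what permits both $\rho^*$ and $\pyx^\star$ to be misspecified without biasing $\wh\theta_t$. Beyond this point, only standard moment/integrability conditions are needed to justify the law of large numbers and the interchange of expectations; I would impose these as regularity assumptions and not belabor them.
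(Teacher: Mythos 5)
Your proof is correct and takes essentially the same route as the paper's: both arguments rest on the law of large numbers plus the balancing identity $\E\{b^{*\star}(\X)\mid y\}=y$ implied by the defining equation for $a^{*\star}$, followed by iterated expectations. Your regrouping of the bias into $\Ep\{[\rho(Y)-\rho^*(Y)][b^{*\star}(\X)-Y]\}=0$ is only a cosmetic rearrangement of the paper's two-term cancellation, though your explicit remark that the conditional law of $\X$ given $Y$ is the same $g(\x,y)$ under $\mp$, $\mq$, and the stacked population is a point the paper leaves implicit.
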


In Proposition~\ref{pro:consistency}, the subscript $_t$ in $\wh\theta_t$ indicates the conditional density $g(\x,y)$ in (\ref{eq:a*star}) is the truth.
In reality, note that $g(\x,y)$ only involves in the evaluation of  the conditional expectation $\E(\cdot\mid y)$ on the left hand side of (\ref{eq:a*star}).
This is a one dimensional regression problem and  can be easily estimated
by many  basic nonparametric regression procedures such as the
Nadaraya-Watson estimator. Specifically,
we approximate the integral equation (\ref{eq:a*star}) by
\be\label{eq:a*hat}
y&=&\wh\E\left[\frac{\Ep^\star\{a^{*\star}(Y)\rho^*(Y)\mid\X\}}
{\Ep^\star\{\rho^{*2}(Y)\mid\X\}+\pi/(1-\pi)\Ep^\star\{\rho^*(Y)\mid\X\}}\mid y\right]\n\\
&=&\sumi\frac{\Ep^\star\{a^{*\star}(Y)\rho^*(Y)\mid\x_i\}}
{\Ep^\star\{\rho^{*2}(Y)\mid\x_i\}+\pi/(1-\pi)\Ep^\star\{\rho^*(Y)\mid\x_i\}}
\frac{r_iK_h(y-y_i)}{\sumj r_jK_h(y-y_j)}\n\\
&=&\int a^{*\star}(t)\rho^*(t)
\sumi\frac{\pyx^\star(t,\x_i)}{\Ep^\star\{\rho^{*2}(Y)\mid\x_i\}+\pi/(1-\pi)\Ep^\star\{\rho^*(Y)\mid\x_i\}}
\frac{r_iK_h(y-y_i)}{\sumj r_jK_h(y-y_j)}\dd t,
\ee
where $K_h(\cdot)\equiv K(\cdot/h)/h$, $K(\cdot)$ is a kernel
function and $h$ is a bandwidth, with conditions imposed later in our theoretical investigation.
(\ref{eq:a*hat}) is a Fredholm integral equation of the first type, which is ill-posed.
Numerical methods to provide stable and reliable solutions have been well studied in the literature \citep{hansen1992numerical}.
In our numerical implementations in Sections \ref{sec:sim} and \ref{sec:data}, we
use Landweber's iterative method \citep{landweber1951iteration} that is well-known to produce a
convergent solution.
We provide those technical details in Section~\ref{sec:fredholm} of
the Supplement.

We summarize the complete estimation procedure in Algorithm \ref{algdouble}.
\begin{algorithm}[H]
    \caption{Proposed Estimator $\wh\theta$: Doubly Flexible in $\rho^*(y)$ and $\pyx^\star(y,\x)$}\label{algdouble}
 	\begin{algorithmic}
        \STATE \textbf{Input}: data from population $\mp$: $(y_i, \x_i, r_i=1)$, $i=1,\ldots,n_1$, data from population $\mq$: $(\x_j, r_j=0)$, $j=n_1+1,\ldots,n$, and value $\pi=n_1/n$.
            \STATE \textbf{do}
            \STATE (a) adopt a working model for $\rho(y)$, denoted as $\rho^*(y)$;
            \STATE (b) adopt a working model for $\pyx(y,\x)$, denoted as $\pyx^\star(y,\x)$ or $\pyx^\star(y,\x,\wh\bzeta)$;
            \STATE (c) compute
    $ w_i=[\E_p^\star\{\rho^{*2}(Y)\mid\x_i\}+\pi/(1-\pi)\E_p^\star\{\rho^*(Y)\mid\x_i\}]^{-1}$ for $i=1,\dots,n$;
            \STATE (d) obtain $\wh a^{*\star}(\cdot)$ by solving the integral equation (\ref{eq:a*hat});
            \STATE (e) compute
    $\wh b^{*\star}(\x_i)=w_i \E_p^\star\{\wh a^{*\star}(Y)\rho^*(Y)\mid\x_i\}$ for $i=1,\dots,n$;
            \STATE (f) obtain $\wh\theta$ as
    \be\label{eq:thetawh}
    \wh\theta=
    \frac{1}{n}\sumi\left[\frac{r_i}{\pi}\rho^*(y_i)\{y_i-\wh b^{*\star}(\x_i)\}
    +\frac{1-r_i}{1-\pi}\wh b^{*\star}(\x_i)\right].
    \ee
 		\STATE \textbf{Output}: $\wh\theta$.
    \end{algorithmic}
\end{algorithm}

\begin{Rem}
In step (b) of Algorithm~\ref{algdouble}, one may adopt a completely
specified $\pyx^\star(y,\x)$ or a partially specified model
$\pyx^\star(y,\x,\bzeta)$ with an unknown parameter $\bzeta$.
If the latter case, a natural strategy is
to estimate $\bzeta$ first based on the observed samples from $\cal
P$ via, say MLE, to obtain $\wh\bzeta$, then use
$\pyx^\star(y,\x,\wh\bzeta)$ to replace the completely fixed
$\pyx^\star(y,\x)$.
In fact, we will show
that the action of estimating $\bzeta$ has no consequence in terms
of estimating $\theta$. This is an important discovery, because this
means one can always include a reasonably flexible model
$\pyx^\star(y,\x,\bzeta)$ so that it has a good chance of
approximating the true
$\pyx(y,\x)$. If $\pyx(y,\x)=\pyx^\star(y,\x,\bzeta_0)$ for certain
$\bzeta_0$, then even though the
additional parameter $\bzeta$ causes extra work, the reward is that
$\theta$ can be estimated as efficiently as if we knew $\pyx(y,\x)$ completely.
In all the subsequent steps, we replace
$\pyx^\star(y,\x)$ by $\pyx^\star(y,\x,\wh\bzeta)$ for its generality,
bearing in mind that $\pyx^\star(y,\x,\wh\bzeta)$
degenerates to $\pyx^\star(y,\x)$ when the parameter $\bzeta$ vanishes.
\end{Rem}
We now study the theoretical properties of $\wh\theta$ defined in (\ref{eq:thetawh}).
The main technical challenge is quantifying
the gap between the solutions for the integral equations \eqref{eq:a*star}
and \eqref{eq:a*hat}, encoded as $a^{*\star}(y)$ and $\wh a^{*\star}(y)$ respectively.
To facilitate the derivation, we define the linear operator
\bse
\calL^{*\star}(a)(y) &\equiv& \py(y)\E\left[\frac{\Ep^\star\{a(Y)\rho^*(Y)\mid\X\}}
{\Ep^\star\{\rho^{*2}(Y)\mid\X\}+\pi/(1-\pi)\Ep^\star\{\rho^*(Y)\mid\X\}}\mid y\right]
=\int a(t)u^{*\star}(t,y)\dd t, \mbox{ where}\\
u^{*\star}(t,y)&\equiv&\py(y)\int\frac{\rho^*(t)\pyx^\star(t,\x,\bzeta)}
{\Ep^\star\{\rho^{*2}(Y)\mid\x\}+\pi/(1-\pi)\Ep^\star\{\rho^*(Y)\mid\x\}}g(\x,y)\dd\x.
\ese
Apparently, $a^{*\star}(y)$ satisfies
\bse
\calL^{*\star}(a^{*\star})(y)=v(y), \mbox{ where } v(y)\equiv\py(y)y.
\ese
Similarly,
$\wh a^{*\star}(y)$ satisfies $\wh\calL^{*\star}(\wh a^{*\star})(y)=\wh v(y)$, where
\bse
\wh\calL^{*\star}(a)(y) &\equiv&
n_1^{-1}\sumi r_i K_h(y-y_i)\frac{\Ep^\star\{a(Y)\rho^*(Y)\mid\x_i,\wh\bzeta\}}
{\Ep^\star\{\rho^{*2}(Y)\mid\x_i,\wh\bzeta\}+\pi/(1-\pi)\Ep^\star\{\rho^*(Y)\mid\x_i,\wh\bzeta\}},\mbox{ and}\\
\wh v(y) &\equiv& n_1^{-1}\sumi v_{i,h}(y)  \equiv n_1^{-1}\sumi r_i K_h(y-y_i)y.
\ese
We first establish in Lemma~\ref{lem:l} that given regularity conditions
\ref{con:1complete}-\ref{con:4compact}, the linear operator
$\calL^{*\star}$, as well as its inverse, is well behaved.
\begin{enumerate}[label=(A\arabic*),ref=(A\arabic*),start=1]
    \item\label{con:1complete}
    The working model $\pyx^\star(y,\x)$ or
    $\pyx^\star(y,\x,\wh\bzeta)$ satisfies the completeness condition
    stated in Lemma~\ref{lem:iden}.
    \item\label{con:2rho}
    $\rho^*(y)>\delta$ for all $y$ on the support of $\py(y)$
    where $\delta$ is a positive constant, and
    $\rho^*(y)$ is twice differentiable and its derivative is bounded.
    \item\label{con:3boundedstar}
    The function $u^{*\star}(t,y)$ is bounded
    and has bounded derivatives with respect to $t$ and $y$ on its support.
    The function $a^{*\star}(y)$ in \eqref{eq:a*star} is bounded.
    \item\label{con:4compact}
    The support sets of $g(\x,y),\py(y),\rho^*(y)$ are compact.
\end{enumerate}
\begin{lemma}\label{lem:l}
Let $\|a\|_\infty\equiv\sup_y|a(y)|$.
Under Conditions~\ref{con:1complete}-\ref{con:4compact},
 the linear operator $\calL^{*\star}: L^\infty(R)\to L^\infty(R)$ is invertible.
In addition,
there exist positive finite constants $c_1, c_2$ such that
for all $a(y)\in L^\infty(R)$,
\begin{enumerate}[label=(\roman*)]
    \item $c_1\|a\|_\infty\leq\|\calL^{*\star}(a)\|_\infty\leq c_2\|a\|_\infty$,
    \item $\|{\cal L}^{*\star-1}(a)\|_\infty\leq c_1^{-1}\|a\|_\infty$.
\end{enumerate}
\end{lemma}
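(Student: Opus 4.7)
My plan is to split the lemma into three logically separate tasks: (i) the upper norm bound on $\calL^{*\star}$, (ii) injectivity, and (iii) surjectivity with bounded inverse. Each leans on a different combination of the conditions \ref{con:1complete}--\ref{con:4compact}, and I would organize the proof to dispatch them in that order.

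For the upper bound, I would argue directly from the integral representation
$\calL^{*\star}(a)(y)=\int a(t)u^{*\star}(t,y)\dd t$. Condition~\ref{con:3boundedstar} gives $\sup_{t,y}|u^{*\star}(t,y)|<\infty$, and Condition~\ref{con:4compact} confines the $t$-integration to a compact set, so
$|\calL^{*\star}(a)(y)|\le \|a\|_\infty\int|u^{*\star}(t,y)|\dd t\le c_2\|a\|_\infty$ uniformly in $y$. For injectivity, I would start from $\calL^{*\star}(a)=0$, which on the support of $\py(y)$ is equivalent to $\Ep\{h_a(\X)\mid Y=y\}=0$, where
\[
h_a(\x)\;\equiv\;\frac{\Ep^\star\{a(Y)\rho^*(Y)\mid\x\}}{\Ep^\star\{\rho^{*2}(Y)\mid\x\}+\pi/(1-\pi)\Ep^\star\{\rho^*(Y)\mid\x\}}.
\]
Applying the completeness of $\pxy$ (the label-shift counterpart of Condition~\ref{con:1complete}, which I would invoke via the identification argument behind Lemma~\ref{lem:iden}) yields $h_a(\X)=0$ a.s. The positivity of $\rho^*$ from Condition~\ref{con:2rho} forces the denominator of $h_a$ to be strictly positive, so $\Ep^\star\{a(Y)\rho^*(Y)\mid\x\}=0$ for almost every $\x$. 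Condition~\ref{con:1complete} then says $a(Y)\rho^*(Y)=0$ a.s., and using $\rho^*>\delta$ once more (Condition~\ref{con:2rho}) gives $a\equiv 0$ on the support of $\py$, which is the relevant domain under Condition~\ref{con:4compact}.

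For surjectivity and the quantitative lower bound in (i), my plan is to exploit compactness of $\calL^{*\star}$: boundedness of $u^{*\star}$ together with boundedness of its derivatives (Condition~\ref{con:3boundedstar}) and compactness of the supports (Condition~\ref{con:4compact}) imply, via Arzelà--Ascoli, that $\calL^{*\star}$ maps bounded sets of $L^\infty$ into equicontinuous, uniformly bounded families in $C(K)$, hence into relatively compact subsets of $L^\infty(K)$. Combined with injectivity, I would run a contradiction argument: if the lower bound failed, there would exist $a_n$ with $\|a_n\|_\infty=1$ and $\|\calL^{*\star}(a_n)\|_\infty\to 0$; compactness gives a subsequential limit of $\calL^{*\star}(a_n)$, which must be the image of some limit point $a^*$ by continuity, and injectivity forces $a^*=0$, contradicting $\|a_n\|_\infty=1$ (this last step requires invoking the Banach--Steinhaus/open-mapping-style consequence on the closed range). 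The bound in (ii), $\|\calL^{*\star-1}(a)\|_\infty\le c_1^{-1}\|a\|_\infty$, is then an immediate consequence of the lower bound in (i).

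The main obstacle will be step (iii). Operators of Fredholm-first-kind with bounded continuous kernels on compact support are typically compact, and compactness combined with injectivity alone does not yield a bounded inverse on an infinite-dimensional space. To push through the lower bound, I would need to pass carefully from weak-$*$ to uniform convergence on the subsequence $a_n$, or alternatively exploit the concrete multiplicative structure (the factor $\rho^*(t)$ in the kernel together with the positive denominator $\Ep^\star\{\rho^{*2}(Y)\mid\x\}+\pi/(1-\pi)\Ep^\star\{\rho^*(Y)\mid\x\}$) to recast $\calL^{*\star}$ as a composition whose individual pieces are each continuously invertible on appropriate spaces of sufficiently smooth test functions. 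Either route is where the work concentrates, and the smoothness and positivity encoded in Conditions~\ref{con:2rho}--\ref{con:3boundedstar} are doing the essential work that distinguishes this problem from a generic ill-posed Fredholm equation.
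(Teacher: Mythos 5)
Your plan diverges from the paper at the two places that matter, and both divergences are problematic. First, your injectivity argument has a genuine gap: passing from $\calL^{*\star}(a)=0$, i.e.\ $\E\{h_a(\X)\mid y\}=0$ on the support of $\py$, to $h_a(\X)=0$ a.s.\ requires completeness of the conditional distribution of $\X$ given $Y$ (the family $g(\cdot,y)$). That is \emph{not} among Conditions~\ref{con:1complete}--\ref{con:4compact}; Condition~\ref{con:1complete} asserts completeness of $\pyx^\star$, which runs in the opposite direction, and the completeness in Lemma~\ref{lem:iden} is likewise on $\pyx$, not on $\pxy$. The paper avoids this entirely: it proves uniqueness of the solution to $\calL^{*\star}(a)=v$ by noting that two distinct solutions $a_1\neq a_2$ would, via Conditions~\ref{con:1complete} and \ref{con:2rho} (which give $\Ep^\star\{a_1\rho^*\mid\x\}\neq\Ep^\star\{a_2\rho^*\mid\x\}$), produce two distinct versions of the efficient score under the posited models, contradicting the uniqueness of the efficient influence function. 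Your second and third steps should be replaced by (or at least supplemented with) this semiparametric uniqueness argument, since it is the only route that uses the conditions actually assumed.

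Second, your step (iii) is not a proof but an accurate diagnosis of why a proof along your lines cannot close: as you note, an integral operator with a bounded kernel on a compact support is compact, and a compact injective operator on an infinite-dimensional $L^\infty$ cannot have a bounded inverse defined on the whole space, so neither the Arzel\`a--Ascoli contradiction argument nor the open-mapping route can deliver the lower bound $c_1\|a\|_\infty\le\|\calL^{*\star}(a)\|_\infty$ without additional structure. The paper's own treatment of this step is a one-line appeal to the bounded inverse theorem after establishing uniqueness of solutions, which presupposes surjectivity onto $L^\infty(R)$ that is never verified; so you have not found the paper's missing ingredient, you have merely (correctly) located the same hole and declined to fill it. As submitted, your proposal establishes only the upper bound in (i) by the same direct kernel estimate as the paper; the injectivity step rests on an unassumed condition, and the quantitative lower bound and part (ii) remain unproved.
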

The proof of Lemma \ref{lem:l} is in Section~\ref{sec:ldoubleproof} of the Supplement.
To analyze the asymptotic normality of the estimator $\wh\theta$, we
add two more regularity conditions on the kernel function and the
bandwidth $h$.
\begin{enumerate}[label=(A\arabic*),ref=(A\arabic*),start=5]
    \item\label{con:5kernel}
    The kernel function $K(\cdot)\ge0$ is
    symmetric, bounded, and twice differentiable with bounded first derivative.
    It has support on $(-1,1)$ and satisfies $\int_{-1}^1K(t)\dd t=1$.
    \item\label{con:6bandwidth}
    The bandwidth $h$ satisfies $n_1(\log n_1)^{-4}h^2\to\infty$ and
    $n^2n_1^{-1}h^4\to 0$.
\end{enumerate}
Condition \ref{con:5kernel} is standard for kernel functions. Note
that we only need a one-dimensional kernel function $K(\cdot)$ in our
estimation procedure.
Condition \ref{con:6bandwidth} specifies the requirement of the
bandwidth $h$ associated with kernel function $K(\cdot)$. In general
we need both $h^{-1}=o\{n_1^{1/2}(\log n_1)^{-2}\}$ and
$h=o(n^{-1/2}n_1^{1/4})$. If $\pi$ is further assumed to be bounded
away from zero, the second requirement becomes $h=o(n_1^{-1/4})$ and
one can simply choose $h=n_1^{-1/3}$ to meet both requirements.
We are now ready to present the asymptotic normality of the estimator $\wh\theta$ below.
Its proof is contained in Section~\ref{sec:doubleproof} of the Supplement.
\begin{Th}\label{th:theta}
Assume $\wh\bzeta$ satisfies $\|\wh\bzeta-\bzeta\|_2=O_p(n_1^{-1/2})$
and $\Ep^\star\{\|\S_\bzeta^\star(Y,\x,\bzeta)\|_2\mid\x\}$ is bounded,
where $\S_\bzeta^\star(y,\x,\bzeta)\equiv\partial\log\pyx^\star(y,\x,\bzeta)/\partial\bzeta$.
For any choice of $\pyx^\star(y,\x,\bzeta)$ and $\rho^*(y)$,
under Conditions \ref{con:1complete}-\ref{con:6bandwidth},
\bse
\sqrt{n_1}(\wh\theta-\theta)\to N(0,\sigma_\theta^2)
\ese
in distribution as $n_1\to\infty$, where
\bse
\sigma_\theta^2
=\var\left(\sqrt{\pi}\phi\eff^{*\star}(\X,R,RY)
+\frac{R}{\sqrt{\pi}}
\left[\frac{\Ep^\star\{a^{*\star}(Y)\rho^*(Y)\mid\X\}}
{\Ep^\star\{\rho^{*2}(Y)\mid\X\}+\pi/(1-\pi)\Ep^\star\{\rho^*(Y)\mid\X\}}-Y\right]
\{\rho^*(Y)-\rho(Y)\}\right).
\ese
\end{Th}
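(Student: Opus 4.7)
The plan is to decompose $\wh\theta-\theta=T_1+T_2$ with
\bse
T_1 = \frac{1}{n}\sumi\phi\eff^{*\star}(\x_i,r_i,r_iy_i),\qquad
T_2 = \frac{1}{n}\sumi\left[\frac{1-r_i}{1-\pi}-\frac{r_i}{\pi}\rho^*(y_i)\right]\{\wh b^{*\star}(\x_i)-b^{*\star}(\x_i)\},
\ese
so that $T_1$ is the oracle average using the population quantity $b^{*\star}$, while $T_2$ captures the effect of replacing $a^{*\star}$ by the Nadaraya-Watson solution $\wh a^{*\star}$ of~\eqref{eq:a*hat}. That $\phi\eff^{*\star}$ has mean zero follows from the identity $\Ep\{\rho^*(Y)b^{*\star}(\X)\}=\Ep\{\rho^*(Y)Y\}$, a direct consequence of $\E\{b^{*\star}(\X)\mid y\}=y$ from~\eqref{eq:a*star} combined with $\Eq\{b^{*\star}(\X)\}=\theta$. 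Hence the ordinary CLT gives $\sqrt{n_1}T_1\to N(0,\pi\var(\phi\eff^{*\star}))$, delivering the first summand of $\sigma_\theta^2$.

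The principal task is to expand $T_2$, which requires a first-order expansion of $\wh a^{*\star}-a^{*\star}$. The key algebraic identity is
\bse
\wh v(y)-\wh\calL^{*\star}(a^{*\star})(y) = n_1^{-1}\sumi r_i K_h(y-y_i)\{y-b^{*\star}(\x_i,\wh\bzeta)\},
\ese
so $\wh a^{*\star}-a^{*\star}=\wh\calL^{*\star-1}[\wh v-\wh\calL^{*\star}(a^{*\star})]$ is already encoded in the residuals $y-b^{*\star}(\x_i)$. Replacing $\wh\calL^{*\star-1}$ by $\calL^{*\star-1}$ to leading order is justified by Lemma~\ref{lem:l}(ii) together with a uniform operator rate for $\wh\calL^{*\star}-\calL^{*\star}$ obtained from standard kernel bounds under Conditions~\ref{con:5kernel}-\ref{con:6bandwidth}; the $\wh\bzeta-\bzeta=O_p(n_1^{-1/2})$ contribution is absorbed by a Taylor expansion using the bounded score $\S_\bzeta^\star$.

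Substituting this expansion into $T_2$, swapping the finite sum with the integral against $\rho^*$, and concentrating $K_h$ around $y_i$ gives
\bse
\sqrt{n_1}T_2 = n_1^{-1/2}\sumi r_i\Phi(y_i)\{y_i-b^{*\star}(\x_i)\} + o_p(1),
\ese
where $\Phi$ is determined by the transposed integral equation $\int u^{*\star}(t,y)\Phi(y)\,dy = \rho^*(t)M(t)$ and $M(t)$ is the in-probability limit of the empirical weight that multiplies $\rho^*(t)\{\wh a^{*\star}-a^{*\star}\}(t)$. Direct computation shows the integrand of $M$ carries the factor $\rho(y)-\rho^*(y)$; a completeness argument using Condition~\ref{con:1complete} on $\pyx^\star$ together with Lemma~\ref{lem:iden} on the true $\pyx$ identifies the unique solution as $\Phi(y)=\rho(y)-\rho^*(y)$. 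Two sign flips then give $\sqrt{n_1}T_2=n_1^{-1/2}\sumi r_i\{b^{*\star}(\x_i)-y_i\}\{\rho^*(y_i)-\rho(y_i)\}+o_p(1)$, and combining with $\sqrt{n_1}T_1$ produces a sum of iid mean-zero summands matching the variance integrand in the theorem, to which the ordinary CLT applies.

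The main obstacle is the rate analysis for $\wh a^{*\star}-a^{*\star}$: since~\eqref{eq:a*hat} is ill-posed, uniform control of the inverse operator is indispensable, which is precisely what Lemma~\ref{lem:l}(ii) supplies. A delicate secondary issue is that the Nadaraya-Watson variance and squared bias must both be negligible at the $\sqrt{n_1}$-rate, which is what forces the two-sided bandwidth restriction in Condition~\ref{con:6bandwidth}: $n_1(\log n_1)^{-4}h^2\to\infty$ kills the variance and $n^2n_1^{-1}h^4\to 0$ kills the squared bias. Finally, showing that $\wh\bzeta$ contributes only $o_p(n_1^{-1/2})$ rests on the same algebraic cancellation $\Ep\{\rho^*(Y)b^{*\star}(\X)\}=\Ep\{\rho^*(Y)Y\}$ that made $\phi\eff^{*\star}$ mean-zero.
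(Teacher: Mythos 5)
Your proposal is correct and follows essentially the same route as the paper's proof: the same decomposition into the $\phi\eff^{*\star}$ average plus a correction driven by $\wh b^{*\star}-b^{*\star}$, the same linearization of $\wh a^{*\star}-a^{*\star}$ through the bounded inverse of $\calL^{*\star}$ supplied by Lemma~\ref{lem:l}, the same U-statistic projection producing the extra influence term $r\{b^{*\star}(\x)-y\}\{\rho^*(y)-\rho(y)\}$, and the same cancellation $\E\{b^{*\star}(\X)\mid y\}=y$ to dispose of both $\wh\bzeta$ and the mean-zero property of $\phi\eff^{*\star}$. The only cosmetic difference is that you organize the projection step as solving a transposed integral equation for a weight $\Phi$ and identifying $\Phi=\rho-\rho^*$ by completeness, whereas the paper reaches the identical conclusion directly via Fubini and the identity $\calL^{*\star}\calL^{*\star-1}=\mathrm{id}$ applied to $g_{j,h}$, so the uniqueness/completeness step in your version is harmless but not actually needed.
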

In Theorem \ref{th:theta},  the only requirement on $\wh\bzeta$
is $\|\wh\bzeta-\bzeta\|_2=O_p(n_1^{-1/2})$.
Thus, the asymptotic variance of $\sqrt{n_1}(\wh\bzeta-\bzeta)$
does not affect the result in Theorem \ref{th:theta}
as long as $\wh\bzeta$ is $\sqrt{n_1}$-consistent for $\bzeta$.
This is easily achievable by constructing a standard MLE or moment
based  estimator for $\bzeta$
in the regression model of $Y$ given $\X$, $\pyx^\star(y,\x,\bzeta)$,
based on the $n_1$ observations from population $\cal P$.

In addition, it is clear that the estimator $\wh\theta$ is
$\sqrt{n_1}$-consistent, even if $n$ goes to infinity much faster than
$n_1$ does.
The intuition is that when we only have $n_1$ complete observations in
this problem, although a much larger $n_0$ can help us better
understand the label shift mechanism, it cannot improve the
convergence rate of $\wh\theta$.

Finally, Theorem~\ref{th:theta} also indicates that, when $\wh\bzeta$
is a $\sqrt{n_1}$-consistent estimator for $\bzeta$
such that $\pyx^\star(y,\x,\bzeta)=\pyx(y,\x)$,
and $\rho^*(y)$ is correctly specified as $\rho(y)$, the corresponding
estimator $\wh\theta$ achieves the semiparametric efficiency bound and
is the efficient estimator.
We state this result formally as Corollary \ref{th:thetaeff}.
Since it is a special case of Theorem~\ref{th:theta}, its proof is omitted.
\begin{Cor}\label{th:thetaeff}
Assume $\wh\bzeta$ satisfies $\|\wh\bzeta-\bzeta\|_2=O_p(n_1^{-1/2})$
and $\Ep^\star\{\|\S_\bzeta^\star(Y,\x,\bzeta)\|_2\mid\x\}$ is bounded.
If $\pyx^\star(y,\x,\bzeta)=\pyx(y,\x)$ and $\rho^*(y)=\rho(y)$,
under Conditions \ref{con:1complete}-\ref{con:6bandwidth},
\bse
\sqrt{n_1}(\wh\theta\eff-\theta)\to N[0,\var\{\sqrt{\pi}\phi\eff(\X,R,RY)\}]
\ese
in distribution as $n_1\to\infty$.
\end{Cor}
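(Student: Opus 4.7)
The plan is to obtain Corollary \ref{th:thetaeff} directly as a specialization of Theorem \ref{th:theta}, checking that under the twin correct-specification assumptions the asymptotic variance formula collapses to the efficient variance. All six regularity conditions \ref{con:1complete}--\ref{con:6bandwidth} are assumed, and the $\sqrt{n_1}$-consistency plus the score-moment condition on $\wh\bzeta$ are taken as given, so Theorem \ref{th:theta} applies verbatim and delivers asymptotic normality of $\wh\theta$ with the variance $\sigma_\theta^2$ stated there.

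Next, I would inspect the two summands inside $\var(\cdot)$ in the expression for $\sigma_\theta^2$. Because $\rho^*(y)=\rho(y)$ by hypothesis, the factor $\rho^*(Y)-\rho(Y)$ in the second summand vanishes almost surely; hence $\sigma_\theta^2$ reduces to $\var\{\sqrt{\pi}\phi\eff^{*\star}(\X,R,RY)\}$.

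It then remains to identify $\phi\eff^{*\star}$ with the true efficient influence function $\phi\eff$ of Proposition \ref{pro:eff}. Since $\pyx^\star(y,\x,\bzeta)=\pyx(y,\x)$, the conditional-expectation operators $\Ep^\star(\cdot\mid\x)$ and $\Ep(\cdot\mid\x)$ coincide, which makes the integral equation \eqref{eq:a*star} defining $a^{*\star}(y)$ identical to the equation \eqref{eq:a} defining $a(y)$. By Lemma \ref{lem:l}(i), the associated linear operator $\calL^{*\star}=\calL$ is invertible on $L^\infty(\mR)$ under the completeness assumption \ref{con:1complete}, which in this correctly-specified case is exactly the completeness hypothesis of Lemma \ref{lem:iden}. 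Invertibility yields uniqueness, so $a^{*\star}(y)=a(y)$; substituting this equality into the formulas for $b^{*\star}(\x)$ and $\phi\eff^{*\star}(\x,r,ry)$ gives $\phi\eff^{*\star}=\phi\eff$, and the stated limit $N[0,\var\{\sqrt{\pi}\phi\eff(\X,R,RY)\}]$ follows.

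There is no real obstacle beyond book-keeping; the only subtlety is the uniqueness step just mentioned, which is needed so that $a^{*\star}$ and $a$ agree as functions rather than merely as two fixed points of the same equation. This is precisely what Lemma \ref{lem:l} supplies, so no additional technical work is required and the corollary follows immediately from Theorem \ref{th:theta}.
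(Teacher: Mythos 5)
Your proposal is correct and matches the paper's (implicit) argument: the paper states that Corollary \ref{th:thetaeff} is a special case of Theorem \ref{th:theta} and omits the proof, and your specialization — the $\rho^*(Y)-\rho(Y)$ factor killing the second variance term, and the identification $a^{*\star}=a$ (hence $\phi\eff^{*\star}=\phi\eff$) via the uniqueness supplied by Lemma \ref{lem:l} — is exactly the book-keeping the authors leave to the reader.
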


\section{Alternative Estimator $\wt\theta$: Singly Flexible in $\rho^*(y)$
  }\label{sec:singleflex}

Because the assessment of $\E_p(\cdot\mid\x)$ only relies on the
observed data, instead of adopting an arbitrary known model
$\E_p^\star(\cdot\mid\x)$ or parametric model $\E_p^\star(\cdot\mid\x,
\bzeta)$, one might be willing to estimate
$\E_p(\cdot\mid\x)$ in a model free fashion and replace $\E_p^\star(\cdot\mid\x)$ in the
estimation procedure presented in Section~\ref{sec:doubleflex} by a
well-behaved estimator $\wh \E_p(\cdot\mid\x)$. Here
we consider a general estimator $\wh \E_p(\cdot\mid\x)$ which has
convergence rate  faster than $n_1^{-1/4}$.
This rate is achievable for many nonparametric regression or machine
learning algorithms \citep{chernozhukov2018double}, see for example,
\cite{chen1999improved} for a class of neural network models,
\cite{wager2015adaptive} for a class of regression trees and random
forests, and \cite{bickel2009simultaneous, buhlmann2011statistics,
  belloni2011l1, belloni2013least} for a variety of sparse models.
Meanwhile, we still do not aim to estimate $\rho(y)$ since we
do not have the $Y$-data in population $\mq$.
We denote the corresponding estimator $\wt\theta$ and call it singly flexible because of its flexibility in using a working model $\rho^*(y)$.

The idea behind the estimator $\wt\theta$ is similar to $\wh\theta$,
therefore we only emphasize the difference from Section~\ref{sec:doubleflex}.
Similar to (\ref{eq:a*star}), we define $a^{*}(y)$ as the solution of
\be\label{eq:a*}
\E\left[\frac{\Ep\{a^{*}(Y)\rho^*(Y)\mid\X\}}
{\Ep\{\rho^{*2}(Y)\mid\X\}
+\pi/(1-\pi)\Ep\{\rho^*(Y)\mid\X\}}\mid y\right]
=y.
\ee
Equivalently, $a^{*}(y)$ satisfies $\calL^*(a^*)(y)=v(y)$, where
\bse
\calL^*(a)(y)&\equiv&\py(y)\E\left[\frac{\Ep\{a(Y)\rho^*(Y)\mid\X\}}
{\Ep\{\rho^{*2}(Y)\mid\X\}+\pi/(1-\pi)\Ep\{\rho^*(Y)\mid\X\}}\mid y\right]
=\int a(t)u^*(t,y)\dd t, \mbox{ and}\\
u^*(t,y)&\equiv&\py(y)\int\frac{\rho^*(t)\pyx(t,\x)}
{\Ep\{\rho^{*2}(Y)\mid\x\}+\pi/(1-\pi)\Ep\{\rho^*(Y)\mid\x\}}g(\x,y)\dd\x.
\ese
Using the estimator $\wh \E_p(\cdot\mid \x)$, we approximate the
integral equation (\ref{eq:a*}) as
\be
y&=&\wh\E\left[\frac{\wh\E_p\{a^{*}(Y)\rho^*(Y)\mid\X\}}
{\wh\E_p\{\rho^{*2}(Y)\mid\X\}+\pi/(1-\pi)\wh\E_p\{\rho^*(Y)\mid\X\}}\mid y\right]\nonumber\\
&=&\sumi\frac{\wh\E_p\{a^{*}(Y)\rho^*(Y)\mid\x_i\}}
{\wh\E_p\{\rho^{*2}(Y)\mid\x_i\}+\pi/(1-\pi)\wh\E_p\{\rho^*(Y)\mid\x_i\}}
\frac{r_iK_h(y-y_i)}{\sumj r_jK_h(y-y_j)},\label{eq:a*approx}
\ee
and we write $\wh a^*(y)$ as the solution to $\wh\calL^*(\wh a^*)(y)=\wh v(y)$, where
\bse
\wh\calL^*(a)(y)\equiv
n_1^{-1}\sumi r_i K_h(y-y_i)\frac{\wh\E_p\{a(Y)\rho^*(Y)\mid\x_i\}}
{\wh\E_p\{\rho^{*2}(Y)\mid\x_i\}+\pi/(1-\pi)\wh\E_p\{\rho^*(Y)\mid\x_i\}}.
\ese
We summarize the algorithm for computing the estimator $\wt\theta$ below.
\begin{algorithm}[H]
    \caption{Alternative Estimator $\wt\theta$: Single Flexible in $\rho^*(y)$}\label{algsingle}
 	\begin{algorithmic}
        \STATE \textbf{Input}: data from population $\mp$: $(y_i, \x_i, r_i=1)$, $i=1,\ldots,n_1$, data from population $\mq$: $(\x_j, r_j=0)$, $j=n_1+1,\ldots,n$, and value $\pi=n_1/n$.
            \STATE \textbf{do}
            \STATE (a) adopt a working model for $\rho(y)$, denoted as $\rho^*(y)$;
            \STATE (b) adopt a nonparametric or machine learning
            algorithm for estimating $\E_p(\cdot\mid\x)$, denoted as $\wh \E_p(\cdot\mid \x)$;
            \STATE (c) compute
    $\wh w_i=[\wh\E_p\{\rho^{*2}(Y)\mid\x_i\}+\pi/(1-\pi)\wh\E_p\{\rho^*(Y)\mid\x_i\}]^{-1}$ for $i=1,\dots,n$;
            \STATE (d) obtain $\wh a^{*}(\cdot)$ by solving the integral equation (\ref{eq:a*approx});
            \STATE (e) compute
    $\wh b^{*}(\x_i)=\wh w_i\wh\E_p\{\wh a^{*}(Y)\rho^*(Y)\mid\x_i\}$ for $i=1,\dots,n$;
            \STATE (f) obtain $\wt\theta$ as
    \be\label{eq:thetatilde}
    \wt\theta=
    \frac{1}{n}\sumi\left[\frac{r_i}{\pi}\rho^*(y_i)\{y_i-\wh b^{*}(\x_i)\}
    +\frac{1-r_i}{1-\pi}\wh b^{*}(\x_i)\right].
    \ee
 		\STATE \textbf{Output}: $\wt\theta$.
    \end{algorithmic}
\end{algorithm}
To develop the asymptotic normality of the estimator $\wt\theta$, instead of Condition~\ref{con:3boundedstar}, we need
\begin{enumerate}[label=(A\arabic*),ref=(A\arabic*),start=7]
    \item\label{con:3bounded}
    The function $u^{*}(t,y)$ is bounded
    and has bounded derivatives with respect to $t$ and $y$ on its support.
    The function $a^{*}(y)$ in \eqref{eq:a*} is bounded.
\end{enumerate}
We present Theorem \ref{th:theta2}, with its proof contained in
Section~\ref{sec:singleproof} of the Supplement.
\begin{Th}\label{th:theta2}
Assume $\wh\E_p$ satisfies $|\wh\E_p\{a(Y)\mid\x\}-\Ep\{a(Y)\mid\x\}|=o_p(n_1^{-1/4})$
for any bounded function $a(y)$.
For any choice of $\rho^*(y)$,
under Conditions~\ref{con:2rho}, \ref{con:4compact}-\ref{con:3bounded},
\bse
\sqrt{n_1}(\wt\theta-\theta)\to N(0,\sigma_\theta^2)
\ese
in distribution as $n_1\to\infty$, where
\bse
\sigma_\theta^2
=\var\left(\sqrt{\pi}\phi\eff^*(\X,R,RY)
+\frac{R}{\sqrt{\pi}}
\left[\frac{\Ep\{a^*(Y)\rho^*(Y)\mid\X\}}
{\Ep\{\rho^{*2}(Y)\mid\X\}+\pi/(1-\pi)\Ep\{\rho^*(Y)\mid\X\}}-Y\right]
\{\rho^*(Y)-\rho(Y)\}\right).
\ese
\end{Th}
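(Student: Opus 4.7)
The plan is to mirror the structure of the proof of Theorem~\ref{th:theta}, but replace the parametric working model $\Ep^\star(\cdot\mid\x,\wh\bzeta)$ with the machine-learning estimator $\wh\E_p(\cdot\mid\x)$, and then track carefully how the rate condition $|\wh\E_p\{a(Y)\mid\x\}-\Ep\{a(Y)\mid\x\}|=o_p(n_1^{-1/4})$ propagates through the analysis. The proof splits naturally into three stages: (i) an operator-theoretic control of $\wh a^*-a^*$, (ii) a decomposition of $\wt\theta-\theta$ into an oracle (linear) part plus remainders, and (iii) CLT for the oracle part.

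First I would establish an analogue of Lemma~\ref{lem:l} for $\calL^*$. Under Conditions~\ref{con:1complete}, \ref{con:2rho}, \ref{con:4compact}, and \ref{con:3bounded}, the completeness of $\pyx(y,\x)$ combined with the strict positivity $\rho^*>\delta$ yields invertibility of $\calL^*: L^\infty(\mathbb{R})\to L^\infty(\mathbb{R})$ with $\|\calL^{*-1}(a)\|_\infty\le c_1^{-1}\|a\|_\infty$. The argument is essentially identical to that for $\calL^{*\star}$, since the only distinction is that the weight $[\Ep\{\rho^{*2}(Y)\mid\x\}+\pi/(1-\pi)\Ep\{\rho^*(Y)\mid\x\}]^{-1}$ uses the true $\Ep$ rather than a misspecified $\Ep^\star$.

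Second, I would bound $\|\wh a^*-a^*\|_\infty$. From $\calL^*(a^*)=v$ and $\wh\calL^*(\wh a^*)=\wh v$, one obtains
\begin{equation*}
\wh a^*-a^*=\calL^{*-1}\bigl\{(\calL^*-\wh\calL^*)\wh a^*+\wh v-v\bigr\}.
\end{equation*}
The operator difference $\calL^*-\wh\calL^*$ aggregates two sources of error: the kernel smoothing error of $\wh\E(\cdot\mid y)$, which is controlled by Condition~\ref{con:6bandwidth} as in the proof of Theorem~\ref{th:theta}, and the plug-in error $\wh\E_p-\Ep$, which is $o_p(n_1^{-1/4})$ by assumption. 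Combining the two, standard uniform bounds yield $\|\wh a^*-a^*\|_\infty=o_p(n_1^{-1/4})$. This is the main technical engine and the place where the $n_1^{-1/4}$ rate for $\wh\E_p$ is exactly what is needed.

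Third, I would write $\wt\theta-\theta=(\wt\theta-\wt\theta^{\mathrm{orc}})+(\wt\theta^{\mathrm{orc}}-\theta)$, where $\wt\theta^{\mathrm{orc}}$ is the analogous quantity obtained by replacing $\wh a^*$ with $a^*$ and $\wh\E_p$ with $\Ep$. For the oracle term, a direct application of Proposition~\ref{pro:eff} (with $\rho^*$ in place of $\rho$) together with the characterizing equation \eqref{eq:a*} shows that each summand has mean zero under the label shift structure; the Lindeberg--Feller CLT then delivers the asymptotic variance formula $\sigma_\theta^2$ stated in the theorem. For the remainder $\wt\theta-\wt\theta^{\mathrm{orc}}$, I would expand it in two pieces: (a) the contribution of $\wh a^*-a^*$ evaluated at the true $\Ep$, which is a mean-zero sample average of order $O_p(n_1^{-1/2}\|\wh a^*-a^*\|_\infty)=o_p(n_1^{-3/4})$; (b) the contribution of $\wh\E_p-\Ep$ evaluated at $a^*$, which by the orthogonality inherent in $b^*$ (the numerator/denominator structure mimicking the efficient influence function) vanishes in expectation, yielding a sample-average remainder of the same order; and (c) the cross term, which is $O_p(\|\wh a^*-a^*\|_\infty\cdot\|\wh\E_p-\Ep\|_\infty)=o_p(n_1^{-1/2})$. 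Summing all three gives $\wt\theta-\wt\theta^{\mathrm{orc}}=o_p(n_1^{-1/2})$ and the theorem follows.

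The hard part will be step (ii), namely the uniform control of the operator inversion when the plug-in nuisance $\wh\E_p$ appears inside both the numerator and the denominator of $\wh\calL^*$. Showing that $\wh\calL^{*-1}$ remains uniformly bounded with high probability requires a Neumann-type expansion that relies on $\|\wh\calL^*-\calL^*\|$ being small in operator norm, which in turn uses a uniform (in $\x$) version of the $o_p(n_1^{-1/4})$ assumption on $\wh\E_p$. Once this uniform bound and the cross-term rate $o_p(n_1^{-1/2})$ are in hand, the remaining steps are routine extensions of the arguments already carried out for Theorem~\ref{th:theta}.
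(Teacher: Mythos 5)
Your overall architecture (operator invertibility, Neumann expansion for $\wh a^*-a^*$, oracle-plus-remainder decomposition) is the right skeleton and matches the paper's strategy up to stage (ii). But stage (iii)(a) contains a fatal error. You claim that the contribution of $\wh a^*-a^*$ to $\wt\theta-\wt\theta^{\mathrm{orc}}$ is a mean-zero sample average of order $O_p(n_1^{-1/2}\|\wh a^*-a^*\|_\infty)=o_p(n_1^{-3/4})$, hence negligible. It is not. That contribution is
$n^{-1}\sumi\{\frac{r_i}{\pi}\rho^*(y_i)-\frac{1-r_i}{1-\pi}\}\,b^*(\x_i,\wh a^*-a^*,\Ep)$, and for a fixed perturbation $\delta$ the summand has expectation $\int \calL^*(\delta)(y)\{\rho^*(y)-\rho(y)\}\dd y$, which is nonzero whenever $\rho^*\neq\rho$ --- the multiplier $\frac{R}{\pi}\rho^*(Y)-\frac{1-R}{1-\pi}$ is only centered against functions $b$ with $\E\{b(\X)\mid y\}$ annihilated by $\rho^*-\rho$. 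So this term is first order, not $o_p(n_1^{-1/2})$. Handling it requires the finer stochastic expansion $\wh a^*=a^*+n_1^{-1}\sumi g_{i,h}+o_p(n_1^{-1/2})$ with $g_{i,h}=\calL^{*-1}\{v_{i,h}-\calL^*_{i,h}(a^*)\}$ (a sup-norm rate $o_p(n_1^{-1/4})$ is not enough), followed by a U-statistic projection of the resulting double sum; the projection onto the $j$-index produces exactly $n_1^{-1/2}\sumj r_j\{b^*(\x_j)-y_j\}\{\rho^*(y_j)-\rho(y_j)\}+o_p(1)$, which is the second component of $\sigma_\theta^2$ in the statement. Your argument, if it went through, would deliver the limit variance $\var\{\sqrt{\pi}\phi\eff^*(\X,R,RY)\}$ alone, contradicting the very formula you are asked to prove (except in the special case $\rho^*=\rho$ of Corollary~\ref{th:thetaeff2}).

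Two smaller points. First, your step (iii)(b) for the $\wh\E_p-\Ep$ contribution is essentially right in spirit --- the paper's $T_2$ analysis shows this term is $o_p(1)$ after centering, using $\E\{b^*(\X,a^*,\Ep)\mid y\}=y$ together with the $o_p(n_1^{-1/4})$ rate --- but the paper controls it via Gateaux derivatives of $b^*$ in the direction $\wh\E_p-\Ep$, with the second-order Gateaux term bounded by the square of the rate; you should make that second-order control explicit since it is where the $n_1^{-1/4}$ requirement bites, not only in the operator inversion. Second, invertibility of $\calL^*$ rests on completeness of the true $\pyx$ (already guaranteed under the identifiability setup), not on Condition~\ref{con:1complete} for the working model, which is not among the hypotheses of this theorem.
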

It is direct from Theorem~\ref{th:theta2} that when the posited model $\rho^*(y)$ is correctly specified, the estimator $\wt\theta$ becomes the efficient estimator for $\theta$.
We point out this consequence as Corollary~\ref{th:thetaeff2} below.
\begin{Cor}\label{th:thetaeff2}
Assume $\wh\E_p$ satisfies $|\wh\E_p\{a(Y)\mid\x\}-\Ep\{a(Y)\mid\x\}|=o_p(n_1^{-1/4})$
for any bounded function $a(y)$.
If $\rho^*(y)=\rho(y)$,
under Conditions~\ref{con:2rho}, \ref{con:4compact}-\ref{con:3bounded},
\bse
\sqrt{n_1}(\wt\theta\eff-\theta)\to N[0,\var\{\sqrt{\pi}\phi\eff(\X,R,RY)\}]
\ese
in distribution as $n_1\to\infty$.
\end{Cor}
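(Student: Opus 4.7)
The plan is to obtain Corollary \ref{th:thetaeff2} as an immediate specialization of Theorem \ref{th:theta2}. Plugging the hypothesis $\rho^*(y)=\rho(y)$ into the variance formula of Theorem \ref{th:theta2} triggers two simultaneous collapses, after which no further analysis is required.

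First, I would display the variance expression from Theorem \ref{th:theta2} and observe that the correction term multiplying the factor $\{\rho^*(Y)-\rho(Y)\}$ vanishes identically under $\rho^*\equiv\rho$. This reduces the expression to $\sigma_\theta^2=\var\{\sqrt{\pi}\phi\eff^*(\X,R,RY)\}$, eliminating the penalty term that Theorem \ref{th:theta2} attaches to misspecification of the density ratio.

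Second, I would verify that $\phi\eff^*=\phi\eff$ whenever $\rho^*=\rho$. Inspecting the integral equations \eqref{eq:a} and \eqref{eq:a*} that define $a(y)$ and $a^*(y)$ respectively, one sees they coincide term by term once $\rho^*$ is replaced by $\rho$. Uniqueness of the solution, which follows from the invertibility of the linear operator $\calL^*$ by the same argument used for $\calL^{*\star}$ in Lemma \ref{lem:l}, then forces $a^*(y)=a(y)$. Substituting into the explicit formula for $\phi\eff$ in Proposition \ref{pro:eff}, every starred quantity agrees with its unstarred counterpart, so $\phi\eff^*(\X,R,RY)=\phi\eff(\X,R,RY)$ pointwise.

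Combining the two reductions yields $\sigma_\theta^2=\var\{\sqrt{\pi}\phi\eff(\X,R,RY)\}$, which is exactly the semiparametric efficiency bound identified in Proposition \ref{pro:eff}, and the stated asymptotic normality follows. I do not anticipate any genuine obstacle; the whole derivation is a one-line specialization of Theorem \ref{th:theta2}, with the uniqueness argument ensuring $a^*=a$ being the only step that invokes previously established machinery.
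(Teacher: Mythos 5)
Your proposal is correct and coincides with the paper's treatment: the paper states the corollary as an immediate consequence of Theorem~\ref{th:theta2} and omits the proof, and your two observations (the $\{\rho^*(Y)-\rho(Y)\}$ correction term vanishes, and $a^*=a$ by uniqueness of the solution of the integral equation so that $\phi\eff^*=\phi\eff$) are precisely the specialization the paper has in mind. Your appeal to the invertibility of $\calL^*$ via the argument of Lemma~\ref{lem:l} is also consistent with how the paper itself uses that lemma in the proof of Theorem~\ref{th:theta2}.
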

Last but not least, Sections~\ref{sec:theta} and \ref{sec:singleflex} here only present the results for estimating $\theta=\Eq(Y)$.
The whole story can be extended to a general parameter $\bt$ such that $\Eq\{\U(\X,Y,\bt)\}=\0$, and the results are stated in
Sections~\ref{sec:generalU} and \ref{sec:proofgeneralU} of the Supplement.
In our numerical studies in Sections~\ref{sec:sim} and \ref{sec:data} below, we analyze both $\Eq(Y)$ and the $t$-th quantile of population $\mq$, defined as
\bse
\tau_{q,t}(Y) = \inf\left[ y: \E_q \{I(Y\leq y)\}\geq t\right],
\ese
where $0<t<1$.
This corresponds to $\E_q\left[\eta_t \left\{Y-\tau_{q,t}(Y)\right\}\right]=0$ where $\eta_t(r) = t-I(r<0)$.

\section{Simulation Studies}\label{sec:sim}

We conduct simulation studies to assess the finite sample performance
of our proposed methods.
We report the results for the mean $\E_q(Y)$ and the median $\tau_{q,0.5}(Y)$ of the outcome $Y$ in population $\mq$.

We first generate a binary indicator $R_i,i=1,\dots,n$
from the Bernoulli distribution with probability $0.5$, and record
$n_1=\sumi r_i$, $\pi=n_1/n$.
Then we generate $Y_i$ from $N(0,1)$ if $R_i=1$ and from $N(1,1)$ if $R_i=0$.
The $\X\mid Y$ distribution is generated from a 3-dimensional normal with mean
$(-0.5,0.5,1)\trans Y_i$ and covariance $\I$, the identity matrix.
This implies, $\E_q(Y)=1$, $\tau_{q,0.5}=1$ and the true density ratio model
\bse
\rho(y)=\exp(-0.5+y).
\ese
One can derive that $\pyx(y,\x,\bzeta)$ follows normal with mean
$(1,\x\trans)\bb$ where $\bb=(0,-0.2,0.2,0.4)\trans$ and variance $\sigma^2=0.4$.
Here we denote $\bzeta\equiv(\bb\trans,\sigma^2)\trans$.

We use the following misspecified working models. We define
\bse
\rho^*(y)\equiv c^*\exp(-0.7+1.2y),
\ese
where $c^*\equiv\pi/\{n^{-1}\sumi r_i\exp(-0.7+1.2y_i)\}$ in order to satisify $\E\{R\rho^*(Y)\}=\pi$.
For the working model $\pyx^\star$, we define $\x^{\star}\equiv[x_1,\exp(x_2/2),x_3/\{1+\exp(x_2)\}+10]\trans$ and define $\pyx^\star(y,\x^\star,\bzeta^\star)$ as the normal distribution with mean
$(1,\x^{\star\rm T})\bb^\star$ where $\bb^{\star}=(-7.000,-0.223,0.363,0.664)\trans$ and variance $\sigma^{\star2}=0.449$.
The parameter $\bzeta^\star\equiv(\bb^{\star\rm T},\sigma^{\star2})\trans$ is obtained by minimizing the Kullback-Leibler distance $D_{kl}(\pyx\| \pyx^\star)$.

We implement the following seven estimators:
\begin{itemize}
  \item[(1)] \verb"shift-dependent"$^*$: $\widecheck\theta$ in (\ref{eq:shift-dependent}) with $\rho^*(y)$;
  \item[(2)] \verb"doubly-flexible"$^{*\star}$: $\wh\theta$ in (\ref{eq:thetawh}) with $\rho^*(y)$ and $\pyx^\star(y,\x,\bzeta^\star)$, theoretically analyzed in Theorem~\ref{th:theta};
  \item[(3)] \verb"singly-flexible"$^*$: $\wt\theta$ in (\ref{eq:thetatilde}) with $\rho^*(y)$, theoretically analyzed in Theorem~\ref{th:theta2};
  \item[(4)] \verb"shift-dependent"$^0$: $\widecheck\theta$ in (\ref{eq:shift-dependent}) with correct $\rho(y)$;
  \item[(5)] \verb"doubly-flexible"$^0$: $\wh\theta\eff$ with correct $\rho(y)$ and $\pyx(y,\x,\bzeta)$, theoretically analyzed in Corollary~\ref{th:thetaeff};
  \item[(6)] \verb"singly-flexible"$^0$: $\wt\theta\eff$ with correct $\rho(y)$, theoretically analyzed in Corollary~\ref{th:thetaeff2};
  \item[(7)] \verb"oracle": the $\sqrt{n_0}$-consistent estimator $\frac1n\sumi \frac{1-r_i}{1-\pi}y_i$.
\end{itemize}
Note that the last four estimators (shown as ``gray'' in
Figures~\ref{fig:boxplotmean} and \ref{fig:boxplotmedian}) are
unrealistic since they either use the unknown models $\rho(y)$ and
$\pyx(y,\x,\bzeta)$ or the $Y$-data in population $\mq$.

In implementing estimators \verb"doubly-flexible"$^{*\star}$,
\verb"singly-flexible"$^*$, \verb"doubly-flexible"$^0$ and
\verb"singly-flexible"$^0$,
we solve the integral equations \eqref{eq:a*hat} and
\eqref{eq:a*approx} using the Nadaraya-Watson estimator for
$\E(\cdot\mid y)$ with Gaussian kernel and bandwidth $h=n_1^{-1/3}$
that is discussed in Condition~\ref{con:6bandwidth}.
Numerically, the integrations are approximated by the Gauss-Legendre
quadrature with 50 points on the interval $[-5,5]$ and the integral
equations are evaluated at $y_i,i=1,\dots,n_1$.
In addition, for estimators \verb"singly-flexible"$^*$ and
\verb"singly-flexible"$^0$, we estimate $\Ep(\cdot\mid \x)$ using the
Nadaraya-Watson estimator based on the product Gaussian kernel with
bandwidth $2.5n_1^{-1/7}$, where the order comes from the optimal bandwidth
$n_1^{-1/(4+d)}$ with $d$ the dimensionality of covariate $\X$. See
Section \ref{sec:fredholm} of the Supplement for technical details on
the numerical implementation.

Based on 1000 simulation replicates, Figures~\ref{fig:boxplotmean} and \ref{fig:boxplotmedian} illustrate the boxplots of the estimates for the mean and the median, respectively.
With the misspecified working model $\rho^*(y)$, the estimator \verb"shift-dependent"$^*$ is biased;
in contrast, the proposed estimators \verb"doubly-flexible"$^{*\star}$ and \verb"singly-flexible"$^*$ are both unbiased.
When the correct model $\rho(y)$ is used, not surprisingly, all of the estimators \verb"shift-dependent"$^0$, \verb"doubly-flexible"$^{0}$ and \verb"singly-flexible"$^0$ are unbiased.
It is also clear that the two proposed flexible estimators are always more efficient than the shift-dependent estimator no matter the correct model $\rho(y)$ is used or not.

To further demonstrate the efficiency comparison and the inference results, in Tables~\ref{tab:mean1} and \ref{tab:median1}, we report the mean squared error (MSE), the empirical bias (Bias), the empirical standard error (SE), the average of estimated standard error ($\wh{\rm SE}$), and the empirical coverage at $95\%$ confidence level (CI), for each of the estimators.
The estimator \verb"shift-dependent"$^*$ has an incorrect coverage
(over-coverage in Table~\ref{tab:mean1} and under-coverage in
Table~\ref{tab:median1}) because of its severe bias.
This issue is not mitigated at all in Table~\ref{tab:mean1} or becomes
even worse in Table~\ref{tab:median1} when we increase the size of the
stacked random sample from 500 to 1000.
On the contrary, the estimators \verb"doubly-flexible"$^{*\star}$ and \verb"singly-flexible"$^*$ are correctly covered.
Though there is no theoretical justification, \verb"doubly-flexible"$^{*\star}$ is slightly less efficient than \verb"singly-flexible"$^*$ in this setting.
This indicates that the effort of correctly estimating $\Ep(\cdot\mid \x)$ pays off in the sense of improving the estimation efficiency.
With the correct $\rho(y)$ model used, each of the estimators \verb"doubly-flexible"$^0$ and \verb"singly-flexible"$^0$ is more efficient than its counterpart \verb"doubly-flexible"$^{*\star}$ and \verb"singly-flexible"$^*$.
The $\sqrt{n_0}$-consistent estimator \verb"oracle" is the most efficient one in this simulation setting.

\begin{figure}[hp]
    \centering
    \caption{Boxplots of estimates for mean in the simulation study.
    Dashed line: the true estimand.}
    \label{fig:boxplotmean}
    \begin{subfigure}{0.45\textwidth}
        \centering
        \caption{mean, $n=500$}
        \includegraphics[width=\textwidth]{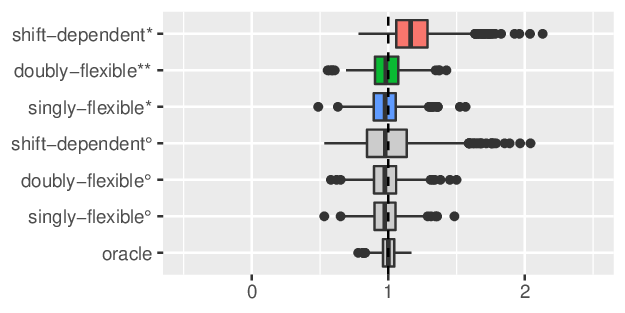}
        \label{fig:mean500}
    \end{subfigure}
    \begin{subfigure}{0.45\textwidth}
        \centering
        \caption{mean, $n=1000$}
        \includegraphics[width=\textwidth]{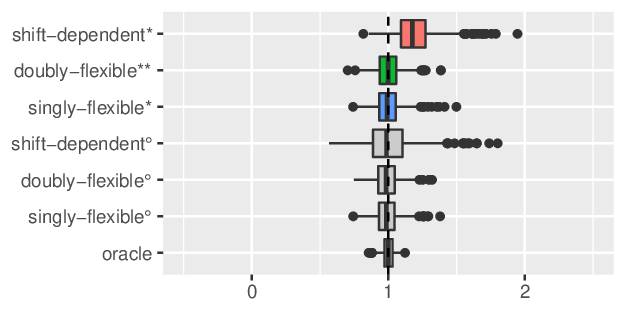}
        \label{fig:mean1000}
    \end{subfigure}
\end{figure}

\begin{table}[hp]
    \centering
    \caption{Summary of mean estimation results in the simulation study.}
    \label{tab:mean1}
    \begin{tabular}{c|lcc|rrrrr}
    $n$ & Estimator & $\rho(y)$ & $\pyx(y,\x)$ & MSE & Bias & SE & $\wh{\rm SE}$ & CI \\
    \hline
    \multirow{7}{*}{500}
    & shift-dependent$^*$        &
    $\rho^*(y)$ & -                           & .0699 & .1840  & .1899 & .2791 & 1.000 \\
    & doubly-flexible$^{*\star}$ &
    $\rho^*(y)$ & $\pyx^\star(y,\x,\wh\bzeta)$ & .0173 & -.0120 & .1311 & .1287 & .943 \\
    & singly-flexible$^*$        &
    $\rho^*(y)$ & $\wh\E_p(\cdot\mid \x)$      & .0162 & -.0201 & .1258 & .1230 & .941 \\
    \cline{2-9}
    & shift-dependent$^0$        &
    $\rho(y)$ & -                     & .0533 & .0049  & .2309 & .2119 & .899 \\
    & doubly-flexible$^0$        &
    $\rho(y)$ & $\pyx(y,\x,\wh\bzeta)$ & .0153 & -.0231 & .1214 & .1212 & .941 \\
    & singly-flexible$^0$        &
    $\rho(y)$ & $\wh\E_p(\cdot\mid\x)$ & .0138 & -.0221 & .1155 & .1169 & .939 \\
    & oracle                     &
    -        & -                     & .0040 & .0006  & .0636 & .0633 & .943 \\
    \hline
    \multirow{7}{*}{1000}
    & shift-dependent$^*$        &
    $\rho^*(y)$ & -                           & .0561 & .1906  & .1406 & .2077 & .999 \\
    & doubly-flexible$^{*\star}$ &
    $\rho^*(y)$ & $\pyx^\star(y,\x,\wh\bzeta)$ & .0085 & .0013  & .0922 & .0912 & .955 \\
    & singly-flexible$^*$        &
    $\rho^*(y)$ & $\wh\E_p(\cdot\mid \x)$      & .0081 & -.0031 & .0899 & .0850 & .952 \\
    \cline{2-9}
    & shift-dependent$^0$        &
    $\rho(y)$ & -                     & .0275 & .0024  & .1660 & .1533 & .927 \\
    & doubly-flexible$^0$        &
    $\rho(y)$ & $\pyx(y,\x,\wh\bzeta)$ & .0075 & -.0125 & .0856 & .0861 & .958 \\
    & singly-flexible$^0$        &
    $\rho(y)$ & $\wh\E_p(\cdot\mid\x)$ & .0069 & -.0094 & .0824 & .0827 & .955 \\
    & oracle                     &
    -        & -                     & .0020 & .0008  & .0451 & .0447 & .945
    \end{tabular}
\end{table}

\begin{figure}[hp]
    \centering
    \caption{Boxplots of estimates for median in the simulation study.
    Dashed line: the true estimand.}
    \label{fig:boxplotmedian}
    \begin{subfigure}{0.45\textwidth}
        \centering
        \caption{median, $n=500$}
        \includegraphics[width=\textwidth]{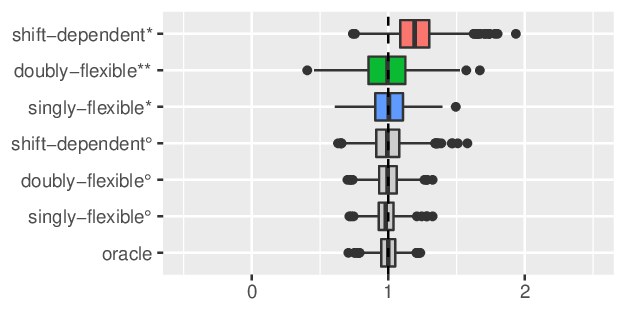}
        \label{fig:median500}
    \end{subfigure}
    \begin{subfigure}{0.45\textwidth}
        \centering
        \caption{median, $n=1000$}
        \includegraphics[width=\textwidth]{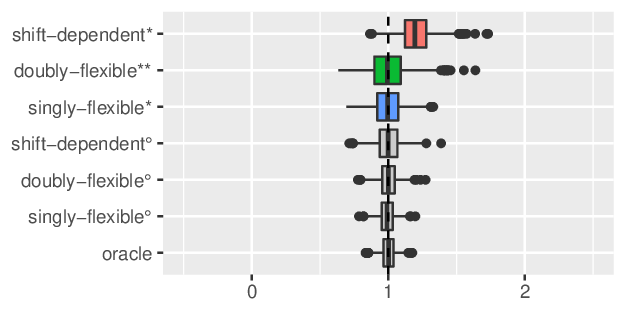}
        \label{fig:median1000}
    \end{subfigure}
\end{figure}

\begin{table}[hp]
    \centering
    \caption{Summary of median estimation results in the simulation study.}
    \label{tab:median1}
    \begin{tabular}{c|lcc|rrrrr}
    $n$ & Estimator &  $\rho(y)$ & $\pyx(y,\x)$ & MSE & Bias & SE & $\wh{\rm SE}$ & CI \\
    \hline
    \multirow{7}{*}{500}
    & shift-dependent$^*$        &
    $\rho^*(y)$ & -                           & .0695 &  .2025 & .1687 & .1547 & .802 \\
    & doubly-flexible$^{*\star}$ &
    $\rho^*(y)$ & $\pyx^\star(y,\x,\wh\bzeta)$ & .0368 & -.0072 & .1918 & .1764 & .940 \\
    & singly-flexible$^*$        &
    $\rho^*(y)$ & $\wh\E_p(\cdot\mid \x)$      & .0211 &  .0024 & .1453 & .1390 & .941 \\
    \cline{2-9}
    & shift-dependent$^0$        &
    $\rho(y)$ & -                     & .0178 &  .0011 & .1336 & .1286 & .947 \\
    & doubly-flexible$^0$        &
    $\rho(y)$ & $\pyx(y,\x,\wh\bzeta)$ & .0093 & -.0033 & .0964 & .0950 & .951 \\
    & singly-flexible$^0$        &
    $\rho(y)$ & $\wh\E_p(\cdot\mid\x)$ & .0071 & -.0162 & .0827 & .0881 & .956 \\
    & oracle                     &
    -        & -                     & .0064 & -.0020 & .0799 & .0821 & .946 \\
    \hline

    \multirow{7}{*}{1000}
    & shift-dependent$^*$        &
    $\rho^*(y)$ & -                           & .0554 &  .2018 & .1210 & .1104 & .560 \\
    & doubly-flexible$^{*\star}$ &
    $\rho^*(y)$ & $\pyx^\star(y,\x,\wh\bzeta)$ & .0229 & -.0013 & .1515 & .1444 & .943 \\
    & singly-flexible$^*$        &
    $\rho^*(y)$ & $\wh\E_p(\cdot\mid \x)$      & .0128 & -.0023 & .1130 & .1124 & .954 \\
    \cline{2-9}
    & shift-dependent$^0$        &
    $\rho(y)$ & -                     & .0091 &  .0005 & .0955 & .0915 & .934 \\
    & doubly-flexible$^0$        &
    $\rho(y)$ & $\pyx(y,\x,\wh\bzeta)$ & .0047 &  .0000 & .0688 & .0669 & .954 \\
    & singly-flexible$^0$        &
    $\rho(y)$ & $\wh\E_p(\cdot\mid\x)$ & .0036 & -.0098 & .0594 & .0625 & .948 \\
    & oracle                     &
    -        & -                     & .0031 &  .0004 & .0558 & .0578 & .959
    \end{tabular}
\end{table}

\section{Data Application}\label{sec:data}

We now illustrate the numerical performance of our proposed method through analyzing the Medical Information Mart for Intensive Care III (MIMIC-III), an openly available electronic health records database, developed by the MIT Lab for Computational Physiology \citep{johnson2016mimic}.
It comprises deidentified health-related records including demographics, vital signs, laboratory test and medications, for 46,520 patients who admitted to the intensive care unit
of the Beth Israel Deaconess Medical Center between 2001 and 2012.

The outcome of interest $Y$ in our analysis is the sequential organ failure assessment (SOFA) score \citep{singer2016third}, used to track a patient's status during the stay in an intensive care unit to determine the extent of a patient's organ function or rate of failure.
The score is based on six different sub-scores, with one of each for the respiratory, cardiovascular, hepatic, coagulation, renal and neurological systems.
The SOFA score ranges from 0 (best) to 24 (worst).
We include 16 covariates from either chart events (6 variables, diastolic blood pressure, systolic blood pressure, blood glucose, respiratory rate per minute, and two measures from body temperature) or laboratory tests (10 variables, peripheral caillary oxygen saturation, two measures from each of hematocrit level, platelets count and red blood cell count, and three measures from blood urea nitrogen).
We choose these covariates through assessing whether the absolute correlation with the outcome $Y$ is greater than 0.2 and whether the missing rate is less than 1\%.
In our analysis, we only include the first admission if the patient was admitted to the intensive care unit more than once.
We also exclude patients whose outcome $Y$ is greater than or equal to 20, whose age is greater than or equal to 65, and who has missing values in any of the covariates.
This results in a total of $n=16,691$ records.

In our analysis, we define the population $\mp$ as patients with private, goverment, and self-pay insurances ($R=1, n_1=11,695$), and population $\mq$ as patients whose insurance type is either Medicaid or Medicare ($R=0, n_0=4,996$).
The label shift assumption that the conditional distribution of $\X$ given $Y$ remains the same can be tested via the conditional independence of $\X$ and $R$ given $Y$.
In our analysis, we test the conditional independence between $R$ and each of covariates by the
invariant environment prediction test \citep{heinze2018invariant} in R package \verb"CondIndTests", and the p-values range from 0.460 to 0.628.
This indicates the label shift assumption is indeed sensible in our analysis.
We first compute the sample mean (3.7409) and sample $t$-th quantiles (1,1,3,5,8 for $t=(10, 25, 50, 75, 90)\%$) of SOFA scores among patients whose insurance type is either Medicaid or Medicare. We regard these estimates as \verb"oracle" in order to compare with our proposed methods.

To identify a reasonable working model $\rho^*(y)$, we model the data $Y+0.001$ from population $\mp$ as a parametric gamma distribution $f(y,\alpha,\beta)=\Gamma(\alpha)^{-1}\beta^{-\alpha}y^{\alpha-1}\exp(-y/\beta)$ with $\Gamma(\cdot)$ the $\Gamma$-function, the shape parameter $\alpha>0$ and the scale parameter $\beta>0$.
We estimate the unknown parameters $\alpha$ and $\beta$ as $\wh\alpha$ and $\wh\beta$ using the MLE.
For the $Y$-data in population $\mq$, we assume $Y+0.001$ follows a similar gamma distribution with shape parameter $\wh\alpha+1$ and scale parameter $\wh\beta$.
Hence, we use the working model
\bse
\rho^*(y) = \frac{f(y+0.001,\wh\alpha+1,\wh\beta)}{f(y+0.001,\wh\alpha,\wh\beta)}.
\ese
To implement the estimator \verb"doubly-flexible"$^{*\star}$, we
impose a parametric model $\pyx^\star(y,\x,\bzeta)$ by regressing
$Y+0.001$ on $\X$ as a generalized linear model with gamma
distribution, and estimate $\bzeta$ using the MLE.
To implement the estimator \verb"singly-flexible"$^*$, we identify the
first three principal components from the 16 covariates, and then
estimate $\E_p(\cdot\mid\x)$ as a function of those three principal
components using the Nadaraya-Watson estimator based on the product
Gaussian kernel with bandwidth $0.5n_1^{-1/7}$.
To solve the corresponding integral equations, similar to
Section~\ref{sec:sim}, we approximate $\E(\cdot\mid y)$ by its
Nadaraya-Watson estimator with the Gaussian kernel and bandwidth
$h=n_1^{-1/3}$.
In addition, for numerical implementation, the integration is
approximated at 50 equally-spaced points on the interval $[0, 19]$,
and the integral equations are evaluated at each supporting point of
$\{y_i:i=1,\dots,n_1\}$. See Section \ref{sec:fredholm} of the
Supplement for technical details on the implementation.

The results are summarized in Table~\ref{tab:realmean1} for the mean
and in Table~\ref{tab:realquantile1} for the quantiles.
The estimator \verb"shift-dependent"$^*$ that relies on a misspecified
model $\rho^*(y)$ severely over-estimates the quantities compared to the \verb"oracle" estimate, in most of the scenarios including estimating the mean, 25\%, 75\% and 90\% quantiles.
As a consequence, the \verb"oracle" estimate cannot be covered by the confidence intervals.
In contrast, the proposed estimators \verb"doubly-flexible"$^{*\star}$
and \verb"singly-flexible"$^*$, although also rely on $\rho^*(y)$,
provide almost identical estimates as \verb"oracle".
Accordingly, the confidence intervals from the proposed methods all
cover the \verb"oracle" estimate.
In these scenarios, \verb"singly-flexible"$^*$ is more efficient than
\verb"doubly-flexible"$^{*\star}$, which echoes our findings in
Section~\ref{sec:sim}.

When estimating 10\% and 50\% quantiles, we find that \verb"shift-dependent"$^*$ gives almost the same estimate as \verb"oracle".
It might be plausible that the difference between $\rho^*(y)$ and the true $\rho(y)$ is minor for estimating these two quantities.
Nevertheless, the proposed estimators \verb"doubly-flexible"$^{*\star}$ and \verb"singly-flexible"$^*$ are still more efficient than the estimator \verb"shift-dependent"$^*$.

Finally it is interesting to observe that, the estimator \verb"oracle" is even less efficient than the proposed estimators \verb"doubly-flexible"$^{*\star}$ and \verb"singly-flexible"$^*$, in estimating all of the quantiles.
This is because \verb"oracle" is $\sqrt{n_0}$-consistent whereas the two proposed estimators are both $\sqrt{n_1}$-consistent.
In this application, $n_1$ is 2.34 times greater than $n_0$, which might result in the situation that the \verb"oracle" estimate being less efficient.
In Section~\ref{sec:sim}, we also considered situations that $n_1$ is much larger than $n_0$ and similar phenomenon was observed as well, with detailed results omitted.

\section{Conclusion}\label{sec:disc}

In this paper, we estimate a characteristic of a target population
$\mq$, via exploiting the data and information from a different but
relevant population $\mp$, under the label shift assumption.
Different from most existing literatures, our proposal is devised to
accommodate  both classification and regression problems.
We primarily propose the doubly flexible estimate, whose unique
feature is to simultaneously allow both models to be misspecified thus
is flexible: the density ratio model $\rho(y)$ that governs the label
shift mechanism, and the conditional distribution model $\pyx(y,\x)$
of population $\mp$.
While the estimation of the latter can be done via off-the-shelf
procedures sometimes, it often faces curse of dimensionality or
computational challenges. Further,
estimating $\rho(y)$ is even more difficult
because the $Y$-data in population $\mq$ is not accessible in our
procedure.

\begin{table}[hp]
    \centering
    \caption{Mean estimation results in the data application.}
    \label{tab:realmean1}
    \begin{tabular}{lcc|rrrr}
    Estimator & $\rho(y)$ & $\pyx(y,\x)$ & Estimate & diff. with oracle & $\wh{\rm SE}$ & CI \\
    \hline
    shift-dependent$^*$ & $\rho^*(y)$ & - &
    4.0529 & .3120 & .0593 & [3.9367, 4.1691] \\
    doubly-flexible$^{*\star}$ & $\rho^*(y)$ & $\pyx^\star(y,\x,\wh\bzeta)$ &
    3.7579 & .0170 & .0803 & [3.6005, 3.9153] \\
    singly-flexible$^*$ &  $\rho^*(y)$ & $\wh\E_p(\cdot\mid \x)$ &
    3.7542 & .0133 & .0496 & [3.6570, 3.8514] \\
    oracle & - & - &
    3.7409 & -    & .0405 & [3.6616, 3.8202]
    \end{tabular}
\end{table}

\begin{table}[hp]
    \centering
    \small
    \caption{Quantile estimation results in the data application.}
    \label{tab:realquantile1}
    \begin{tabular}{c|lcc|rrrr}
    $\tau$ & Estimator & $\rho(y)$ & $\pyx(y,\x)$ & Estimate & diff. with oracle & $\wh{\rm SE}$ & CI \\
    \hline
    \multirow{4}{*}{10\%}
    & shift-dependent$^*$ & $\rho^*(y)$ & - &
    1.0000 &  .0000 & .0102 & [0.9801, 1.0199] \\
    & doubly-flexible$^{*\star}$ &  $\rho^*(y)$ & $\pyx^\star(y,\x,\wh\bzeta)$ &
    0.9998 & -.0002 & .0095 & [0.9812, 1.0185] \\
    & singly-flexible$^*$ &  $\rho^*(y)$ & $\wh\E_p(\cdot\mid \x)$ &
    0.9998 & -.0002 & .0099 & [0.9803, 1.0192] \\
    & oracle & - & - &
    1.0000 &     - & .0218 & [0.9572, 1.0428] \\
    \hline
    \multirow{4}{*}{25\%}
    & shift-dependent$^*$ & $\rho^*(y)$ & - &
    1.9995 &  .9995 & .0249 & [1.9508, 2.0483] \\
    & doubly-flexible$^{*\star}$ &  $\rho^*(y)$ & $\pyx^\star(y,\x,\wh\bzeta)$ &
    1.0002 &  .0002 & .0276 & [0.9460, 1.0543] \\
    & singly-flexible$^*$ &  $\rho^*(y)$ & $\wh\E_p(\cdot\mid \x)$ &
    1.0004 &  .0004 & .0196 & [0.9620, 1.0389] \\
    & oracle & - & - &
    1.0000 &     - & .0315 & [0.9383, 1.0617] \\
    \hline
    \multirow{4}{*}{50\%}
    & shift-dependent$^*$ & $\rho^*(y)$ & - &
    3.0002 &  .0002 & .0408 & [2.9203, 3.0801] \\
    & doubly-flexible$^{*\star}$ &  $\rho^*(y)$ & $\pyx^\star(y,\x,\wh\bzeta)$ &
    3.0001 &  .0001 & .0333 & [2.9348, 3.0654] \\
    & singly-flexible$^*$  & $\rho^*(y)$ & $\wh\E_p(\cdot\mid \x)$ &
    3.0005 &  .0005 & .0334 & [2.9350, 3.0659] \\
    & oracle & - & - &
    3.0000 &     - & .0550 & [2.8923, 3.1077] \\
    \hline
    \multirow{4}{*}{75\%}
    & shift-dependent$^*$ & $\rho^*(y)$ & - &
    5.9999 &  .9999 & .0742 & [5.8544, 6.1454] \\
    & doubly-flexible$^{*\star}$  & $\rho^*(y)$ & $\pyx^\star(y,\x,\wh\bzeta)$ &
    5.0001 &  .0001 & .0759 & [4.8512, 5.1489] \\
    & singly-flexible$^*$ &  $\rho^*(y)$ & $\wh\E_p(\cdot\mid \x)$ &
    5.0001 &  .0001 & .0381 & [4.9254, 5.0748] \\
    & oracle & - & - &
    5.0000 &     - & .0591 & [4.8842, 5.1158] \\
    \hline
    \multirow{4}{*}{90\%}
    & shift-dependent$^*$ & $\rho^*(y)$ & - &
    8.9999 &  .9999 & .1380 & [8.7295, 9.2703] \\
    & doubly-flexible$^{*\star}$ &  $\rho^*(y)$ & $\pyx^\star(y,\x,\wh\bzeta)$ &
    8.0004 &  .0004 & .1003 & [7.8039, 8.1969] \\
    & singly-flexible$^*$ &  $\rho^*(y)$ & $\wh\E_p(\cdot\mid \x)$ &
    8.0004 &  .0004 & .0638 & [7.8754, 8.1253] \\
    & oracle & - & - &
    8.0000 &     - & .1093 & [7.7858, 8.2142]
    \end{tabular}
\end{table}

\newpage
\bibliographystyle{agsm}
\bibliography{refLabelShift}

\clearpage
\pagenumbering{arabic}
{\centering
\section*{SUPPLEMENTARY MATERIAL}}
\setcounter{equation}{0}\renewcommand{\theequation}{S.\arabic{equation}}
\setcounter{subsection}{0}\renewcommand{\thesubsection}{S.\arabic{subsection}}
\setcounter{pro}{0}\renewcommand{\thepro}{S.\arabic{pro}}
\setcounter{Th}{0}\renewcommand{\theTh}{S.\arabic{Th}}
\setcounter{lemma}{0}\renewcommand{\thelemma}{S.\arabic{lemma}}
\setcounter{Rem}{0}\renewcommand{\theRem}{S.\arabic{Rem}}
\setcounter{Cor}{0}\renewcommand{\theCor}{S.\arabic{Cor}}
\setcounter{algorithm}{0}\renewcommand{\thealgorithm}{S.\arabic{algorithm}}

\subsection{Proof of Lemma~\ref{lem:iden}}\label{sec:slemmaproof}
Suppose we have two different sets of models: $g(\x, y)$, $\py(y)$,
$\rho(y)$ and $\wt g(\x, y)$, $\wt p_Y(y)$, $\wt \rho(y)$.
From the
likelihood (\ref{eq:likelihood}), it is obvious that
\bse
g(\x, y)\py(y)&=&\wt g(\x, y) \wt p_Y(y),\\
\int g(\x, y)\py(y)\rho(y)\dd y &=& \int \wt g(\x, y)\wt p_Y(y)\wt \rho(y)\dd y.
\ese
Integrate with respect to $\x$ on both sides of the first equation
above, it is apparent that $\py(y)=\wt p_Y(y)$ and $g(\x, y)=\wt
g(\x, y)$. Therefore, the second equation above becomes
\bse
&&\int g(\x, y)\py(y)\rho(y)\dd y = \int g(\x, y)\py(y)\wt \rho(y)\dd y\\
&=& \int \pyx(y\mid \x)\px(\x)\rho(y)\dd y = \int \pyx(y\mid
\x)\px(\x)\wt \rho(y)\dd y.
\ese
Hence, $\int \pyx(y\mid \x)\rho(y)\dd y = \int \pyx(y\mid \x)\wt
\rho(y)\dd y$. Using the completeness condition, clearly $\rho(y)=\wt
\rho(y)$.
\qed

\subsection{Derivation of $\calF$}\label{sec:influence}

We first state and prove a result regarding the tangant space.
\begin{pro}\label{pro:tangentspace}
The tangent space of (\ref{eq:likelihood}) is
$\calT\equiv\calT_\ba\oplus(\calT_\bb+\calT_\bg)$, where
\bse
\calT_\ba&=&\left[r\a_1(y):\Ep\{\a_1(Y)\}=\0\right],\\
\calT_\bb&=&\left[r\a_2(\x,y)+(1-r)\Eq\{\a_2(\x,Y)\mid\x\}:\E\{\a_2(\X,y)\mid y\}=\0\right],\\
\calT_\bg&=&\left[(1-r)\Eq\{\a_3(Y)\mid\x\}:\Eq\{\a_3(Y)\}=\0\right].
\ese
\end{pro}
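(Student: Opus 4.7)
The plan is the standard semiparametric one: parametrize each unknown component of the likelihood by a regular one-parameter submodel, read off the corresponding log-likelihood score at the truth, identify the mean-zero constraint that normalization imposes on the score, and then verify the claimed direct-sum structure. I would first write the log-likelihood of a single observation from (\ref{eq:likelihood}) as
\bse
\ell(\x,r,y)=r\log g(\x,y)+r\log\py(y)+(1-r)\log\left\{\int g(\x,y')\qy(y')\dd y'\right\}+r\log\pi+(1-r)\log(1-\pi),
\ese
and introduce submodels $\py(y;\alpha)$, $g(\x,y;\beta)$, $\qy(y;\gamma)$ passing through the truth at $0$, with scores $\a_1(y)$, $\a_2(\x,y)$, $\a_3(y)$.

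Differentiating $\ell$ along each submodel gives three contributions. The $\alpha$-score equals $r\a_1(y)$, and $\int\py(y;\alpha)\dd y=1$ forces $\Ep\{\a_1(Y)\}=\0$; this is $\calT_\ba$. The $\gamma$-score is nonzero only on $\{r=0\}$, where the chain rule applied to $\log\int g(\x,y')\qy(y';\gamma)\dd y'$, combined with $g(\x,y)\qy(y)/\qx(\x)=\qyx(y,\x)$, yields $(1-r)\Eq\{\a_3(Y)\mid\x\}$ with constraint $\Eq\{\a_3(Y)\}=\0$; this is $\calT_\bg$. The $\beta$-score receives contributions from both $r=1$ and $r=0$, giving $r\a_2(\x,y)+(1-r)\Eq\{\a_2(\x,Y)\mid\x\}$, and the normalization $\int g(\x,y;\beta)\dd\x=1$ for every $y$ forces $\E\{\a_2(\X,y)\mid y\}=\0$; this is $\calT_\bb$. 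By standard semiparametric theory the tangent space is then the closure of $\calT_\ba+\calT_\bb+\calT_\bg$.

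To upgrade the sum to $\calT_\ba\oplus(\calT_\bb+\calT_\bg)$ I would check that $\calT_\ba$ is $L^2$-orthogonal to each of $\calT_\bb$ and $\calT_\bg$. Orthogonality with $\calT_\bg$ is immediate from $r(1-r)\equiv 0$. Orthogonality with $\calT_\bb$ reduces, again by $r(1-r)=0$, to
\bse
\E\{r\a_1(Y)\a_2(\X,Y)\}=\pi\Ep\left[\a_1(Y)\E\{\a_2(\X,Y)\mid Y\}\right]=0,
\ese
which vanishes by the constraint on $\a_2$. The main obstacle I anticipate is the chain-rule bookkeeping on the $\{r=0\}$ component: one must cleanly separate the variation of $g$ from that of $\qy$ inside $\log\int g\,\qy\dd y$, recognize each piece as a $\mq$-conditional expectation given $\X$, and confirm that the three induced constraints $\Ep\{\a_1(Y)\}=\0$, $\E\{\a_2(\X,y)\mid y\}=\0$, $\Eq\{\a_3(Y)\}=\0$ are exactly those inherited from the three normalizations—no more and no less—so that the tangent subspaces are precisely as stated.
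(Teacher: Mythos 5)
Your proposal is correct and follows essentially the same route as the paper: a parametric submodel of (\ref{eq:likelihood}), differentiation of the log-likelihood to obtain the three scores $r\a_1(y)$, $r\a_2(\x,y)+(1-r)\Eq\{\a_2(\x,Y)\mid\x\}$, and $(1-r)\Eq\{\a_3(Y)\mid\x\}$, with the mean-zero constraints inherited from the normalizations of $\py$, $g$, and $\qy$. Your explicit $L^2$-orthogonality check of $\calT_\ba$ against $\calT_\bb$ and $\calT_\bg$ (via $r(1-r)\equiv0$ and the constraint $\E\{\a_2(\X,y)\mid y\}=\0$) is a small addition the paper leaves implicit, but it is the same argument, not a different one.
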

\noindent Proof.
Consider a parametric submodel of \eqref{eq:likelihood},
\be\label{eq:submodel}
f_{\X,R,RY}(\x,r,ry,\bd)
&=&\pi^r(1-\pi)^{1-r}\{g(\x,y,\bb)\py(y,\ba)\}^{r}
\left\{\int g(\x,y,\bb)\qy(y,\bg)dy\right\}^{1-r},
\ee
where $\bd=(\ba\trans,\bb\trans,\bg\trans)\trans$.
We can derive that the score function associated with an arbitrary $\bd$ is
$\S_\bd\equiv(\S_\ba\trans,\S_\bb\trans,\S_\bg\trans)\trans$,
where
\bse
\S_\ba(\x,r,ry)
&\equiv&r\a_\ba(y),\\
\S_\bb(\x,r,ry)
&\equiv&r\a_\bb(\x,y)+(1-r)\Eq\{\a_\bb(\x,Y)\mid\x\},\\
\S_\bg(\x,r,ry)
&\equiv&(1-r)\Eq\{\a_\bg(Y)\mid\x\},
\ese
$\Ep\{\a_\ba(Y)\}=\0$,
$\Ep\{\a_\bb(\X,y)|y\}=\Eq\{\a_\bb(\X,y)|y\}=\E\{\a_\bb(\X,y)|y\}=\0$,
and $\Eq\{\a_\bg(Y)\}=\0$.
The above derivation directly leads to Proposition
\ref{pro:tangentspace}.\qed

We are now ready to establish the result regarding $\calF$, which we
explicitly write out as Proposition \ref{pro:influ}.
\begin{pro}\label{pro:influ}
The set of the influence functions for $\theta$
is
\be\label{eq:influence}
\calF
\equiv\left[\frac{r}{\pi}\{y\rho(y)-b(\x)\rho(y)+c\}+\frac{1-r}{1-\pi}\{b(\x)-\theta-c\}:
\E\{b(\X)\mid y\}=y,\forall c\right].
\ee
\end{pro}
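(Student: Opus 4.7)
The plan is to invoke the standard semiparametric characterization that the set of influence functions for a regular, pathwise-differentiable parameter equals $\phi^{\rm eff}+\calT^\perp$, with $\calT^\perp$ the orthogonal complement (in the Hilbert space $L_0^2$ of mean-zero square-integrable functions) of the tangent space $\calT=\calT_\ba\oplus(\calT_\bb+\calT_\bg)$ identified in Proposition~\ref{pro:tangentspace}, and with $\phi^{\rm eff}$ the efficient influence function derived in Proposition~\ref{pro:eff}. Since $\phi^{\rm eff}$ is already a member of the right-hand side of \eqref{eq:influence} (the choice of $b(\x)$ displayed after Proposition~\ref{pro:eff} satisfies $\E\{b(\X)\mid y\}=y$, paired with $c=0$), the proof reduces to two inclusions: (a) each element of $\calF$ is a valid influence function for $\theta$, and (b) each element of $\calT^\perp$ arises as the difference of two members of $\calF$.

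For (a), I would first check that every $\phi\in\calF$ has mean zero: using $\E\{Rh(\X,Y)/\pi\}=\Ep\{h(\X,Y)\}$ and $\E\{(1-R)h(\X,Y)/(1-\pi)\}=\Eq\{h(\X,Y)\}$ together with $\E\{b(\X)\mid y\}=\int b(\x)g(\x,y)\,d\x=y$, the $c$ terms cancel and $\Eq\{b(\X)\}=\Eq(Y)=\theta$ yields mean-zeroness. I would then verify the pathwise-derivative identity $\E(\phi\,\S_\bd)=\partial\theta/\partial\bd$ blockwise against the submodel \eqref{eq:submodel}. Against $\S_\ba=R\a_\ba(Y)$ the constraint $\E(b(\X)\mid y)=y$ collapses $y\rho(y)-b(\x)\rho(y)$ after conditioning, leaving a $c\,\Ep\{\a_\ba(Y)\}=0$ contribution that matches $\partial\theta/\partial\ba=0$. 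Against $\S_\bb$, the condition $\E\{\a_\bb(\X,y)\mid y\}=0$ annihilates the $y\rho(y)$ and $c$ pieces, and the two $b(\x)$ pieces from the $R=1$ and $R=0$ branches cancel after $\rho(y)\py(y)=\qy(y)$. Against $\S_\bg=(1-R)\Eq\{\a_\bg(Y)\mid\X\}$, integrating $b(\x)g(\x,y)$ in $\x$ produces $y-\theta-c$, so $\Eq\{\a_\bg(Y)\}=0$ leaves precisely $\Eq\{Y\a_\bg(Y)\}=\partial\theta/\partial\bg$.

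For (b), I would characterize $\calT^\perp$ directly. Writing $\psi(\x,r,ry)=r\psi_1(\x,y)+(1-r)\psi_0(\x)$, orthogonality to $\calT_\ba$ forces $\Ep\{\psi_1(\X,Y)\mid Y\}\equiv c_1$, orthogonality to $\calT_\bg$ forces $\int\psi_0(\x)g(\x,y)\,d\x\equiv c_0$, and orthogonality to $\calT_\bb$ forces $\pi\py(y)\psi_1(\x,y)+(1-\pi)\qy(y)\psi_0(\x)$ to be a function of $y$ alone on the support of $g(\x,y)$. Solving this last identity for $\psi_1$, substituting into the first, and imposing overall mean-zeroness $\pi c_1+(1-\pi)c_0=0$ yields
\[
\psi_1(\x,y)=c_1+\frac{(1-\pi)\rho(y)}{\pi}\{c_0-\psi_0(\x)\},\qquad \int\{c_0-\psi_0(\x)\}g(\x,y)\,d\x=0.
\]
Setting $\delta b(\x)\equiv(1-\pi)\{c_0-\psi_0(\x)\}$ and $\delta c\equiv\pi c_1$, a direct comparison shows that $\psi$ is exactly the difference between the element of $\calF$ with parameters $(b+\delta b,c+\delta c)$ and the element with $(b,c)$; the derived constraint on $c_0-\psi_0$ is precisely $\E\{\delta b(\X)\mid y\}=0$, which keeps $b+\delta b$ inside the constraint set $\E\{b(\X)\mid y\}=y$. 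Hence $\phi^{\rm eff}+\calT^\perp=\calF$, completing the proof.

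The main obstacle I anticipate is the $\calT_\bb^\perp$ calculation: the condition that $\pi\py\psi_1+(1-\pi)\qy\psi_0$ be a function of $y$ on the support of $g(\x,y)$ must be fused with the other two orthogonality conditions, and the apparent two-constant freedom $(c_0,c_1)$ must be shown to collapse, via mean-zeroness, to the single scalar $c$ appearing in \eqref{eq:influence}. The remaining steps are routine manipulations of iterated conditional expectations.
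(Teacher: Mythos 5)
Your proposal is correct in substance but follows a genuinely different route from the paper. The paper's proof works directly with the defining conditions of an influence function: it writes a generic candidate $\phi=\tfrac{r}{\pi}\phi_1(\x,y)+\tfrac{1-r}{1-\pi}\phi_2(\x)$ and solves the inhomogeneous system $\E(\phi)=0$, $\E(\phi\S_\ba\trans)=\0\trans$, $\E(\phi\S_\bb\trans)=\0\trans$, $\E(\phi\S_\bg\trans)=\Eq\{Y\a_\bg\trans(Y)\}$ for $\phi_1,\phi_2$, which yields the parametrization by $(b,c)$ in one pass. You instead use the affine structure ``influence functions $=\phi_0+\calT^\perp$'': part (a) is the paper's computation with a particular solution plugged in, and part (b) is the homogeneous version of the same computation, i.e., the classical particular-plus-homogeneous decomposition. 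Your characterization of $\calT^\perp$ is right (the three orthogonality conditions and the collapse of $(c_0,c_1)$ to one free scalar via $\pi c_1+(1-\pi)c_0=0$ all check out), and the approach buys a cleaner separation between ``why each element of $\calF$ works'' and ``why nothing else does,'' at the cost of needing an anchor element. Two small caveats. First, your $\delta b$ has the wrong sign: matching $\tfrac{1}{1-\pi}\{\delta b(\x)-\delta c\}=\psi_0(\x)$ and $\tfrac{1}{\pi}\{-\delta b(\x)\rho(y)+\delta c\}=\psi_1(\x,y)$ forces $\delta b(\x)=(1-\pi)\{\psi_0(\x)-c_0\}$ and $\delta c=\pi c_1=-(1-\pi)c_0$; since $\calT^\perp$ and the constraint set $\E\{\delta b(\X)\mid y\}=0$ are both closed under negation, this is immaterial, but as written your displayed identification produces $-\psi$ rather than $\psi$. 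Second, anchoring at $\phi\eff$ from Proposition~\ref{pro:eff} is mildly circular in the paper's logical order, since the paper derives $\phi\eff$ \emph{from} $\calF$; this is easily repaired because your part (a) already certifies that any element of $\calF$ (any solution $b$ of $\E\{b(\X)\mid y\}=y$, whose existence is what the completeness condition and the integral equation \eqref{eq:a} guarantee) can serve as the anchor.
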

\noindent Proof.
Note that $\theta=\Eq(Y)$, hence
\bse
\frac{\partial\theta}{\partial\ba\trans}
&=&\0\trans,\\
\frac{\partial\theta}{\partial\bb\trans}
&=&\0\trans,\\
\frac{\partial\theta}{\partial\bg\trans}
&=&\Eq\{Y\a_\bg\trans(Y)\}.
\ese

Let $\phi(\x,r,ry)$ be
\bse
\phi(\x,r,ry)\equiv\frac{r}{\pi}\phi_1(\x,y)+\frac{1-r}{1-\pi}\phi_2(\x).
\ese
For $\phi(\x,r,ry)$ to be an influence function,
it must satisfy
\be\label{eq:hilbert}
\E(\phi)=\Ep\{\phi_1(\X,Y)\}+\Eq\{\phi_2(\X)\}=0
\ee
and $\E(\phi\S_\bd\trans)=\partial\theta/\partial\bd\trans$.
$\E(\phi\S_\ba\trans)=\partial\theta/\partial\ba\trans=\0\trans$ is equivalent to
\be\label{eq:alpha}
\E\{\phi_1(\X,y)\mid y\}=c
\ee
for some constant $c$.
In addition, since
\bse
\E(\phi\S_\bb\trans)
&=&\Ep\{\phi_1(\X,Y)\a_\bb\trans(\X,Y)\}+\Eq\{\phi_2(\X)\a_\bb\trans(\X,Y)\}\\
&=&\Ep[\{\phi_1(\X,Y)+\phi_2(\X)\rho(Y)\}\a_\bb\trans(\X,Y)]
\ese
and $\a_\bb(\x,y)$ is an arbitrary function which satisfies $\E\{\a_\bb(\X,y)|y\}=\0$,
$\E(\phi\S_\bb\trans)=\partial\theta/\partial\bb\trans=\0\trans$ implies
$\phi_1(\x,y)+\phi_2(\x)\rho(y)=a(y)$ for some function $a(y)$.
Then \eqref{eq:alpha} yields
\bse
\phi_1(\x,y)=[\E\{\phi_2(\X)\mid y\}-\phi_2(\x)]\rho(y)+c.
\ese
Also, noting that
$\E(\phi\S_\bg\trans)=\partial\theta/\partial\bg\trans=\Eq\{Y\a_\bg\trans(Y)\}$ is equivalent to
$\E\{\phi_2(\X)\mid y\}=y+c^*$ for some constant $c^*$,
we have from \eqref{eq:hilbert} and \eqref{eq:alpha} that
\bse
\E\{\phi_2(\X)\mid y\}=y-\Eq(Y)-c.
\ese
Therefore, defining $b(\x)\equiv \phi_2(\x)+\Eq(Y)+c$,
the summary description of the influence function is
\bse
\phi(\x,r,ry)=\frac{r}{\pi}\{y\rho(y)-b(\x)\rho(y)+c\}+\frac{1-r}{1-\pi}\{b(\x)-\Eq(Y)-c\},
\ese
where $b(\x)$ satisfies $\E\{b(\X)\mid y\}=y$ and $c$ is a constant.
Hence, we get the result \eqref{eq:influence}.
\qed

\subsection{Proof of Proposition \ref{pro:eff}}\label{sec:eff}

Note that
\bse
\frac{\Ep\{a(\x,Y)\rho(Y)\mid\x\}}{\Ep\{\rho(Y)\mid\x\}}
=\frac{\int a(\x,y)\rho(y)g(\x,y)\py(y)dy}{\int \rho(y)g(\x,y)\py(y)dy}
=\frac{\int a(\x,y)g(\x,y)\qy(y)dy}{\int g(\x,y)\qy(y)dy}
=\Eq\{a(\x,Y)\mid\x\},
\ese
then $\phi\eff(\x,r,ry)$ can be alternatively written as
\bse
&&\phi\eff(\x,r,ry)\\
&=&\frac{r}{\pi}\rho(y)
\left[y-\frac{\Eq\{a(Y)\mid\x\}}{\Eq\{\rho(Y)\mid\x\}+\pi/(1-\pi)}\right]
+\frac{1-r}{1-\pi}
\left[\frac{\Eq\{a(Y)\mid\x\}}{\Eq\{\rho(Y)\mid\x\}+\pi/(1-\pi)}-\theta\right],
\ese
where $a(y)$ satisfies
\bse
\E\left[\frac{\Eq\{a(Y)\mid\X\}}{\Eq\{\rho(Y)\mid\X\}+\pi/(1-\pi)}\mid y\right]
=y.
\ese
First, it is immediate that $\phi\eff(\x,r,ry)$ is an influence function for $\theta$,
i.e., belongs to $\cal F$ given in \eqref{eq:influence}
from letting
\bse
b(\x)&\equiv&\frac{\Eq\{a(Y)\mid\x\}}{\Eq\{\rho(Y)\mid\x\}+\pi/(1-\pi)},\\
c&\equiv&0.
\ese
Next, we show that $\phi\eff(\x,r,ry)$ is in the tangent space $\calT$ of \eqref{eq:submodel}.
We decompose $\phi\eff(\x,r,ry)$ into
\bse
\phi\eff(\x,r,ry)
&=&\frac{r}{\pi}\left\{y\rho(y)-b(\x)\rho(y)\right\}
+\frac{1-r}{1-\pi}\left\{b(\x)-\theta\right\}\\
&=&r\{a_1(y)+a_2(\x,y)\}+(1-r)[\Eq\{a_2(\x,Y)\mid\x\}+\Eq\{a_3(Y)\mid\x\}],
\ese
where
\bse
a_1(y)&\equiv&0,\\
a_2(\x,y)&\equiv&\frac{1}{\pi}\{y\rho(y)-b(\x)\rho(y)\},\\
a_3(y)&\equiv&\frac{1}{\pi}\{a(y)-y\rho(y)\}-\frac{1}{1-\pi}\theta.
\ese
Then it is easy to show $ra_1(y)\in\calT_\ba$, and
$ra_2(\x,y)+(1-r)\Eq\{a_2(\x,Y)\mid\x\}\in\calT_\bb$
because $\E\{b(\X)\mid y\}=y$.
Further, $(1-r)\Eq\{a_3(Y)\mid\x\}\in\calT_\bg$ since
\bse
\Eq\{a_3(Y)\}
&=&\frac{1}{\pi}\Eq[\Eq\{a(Y)\mid\X\}]-\frac{1}{\pi}\Eq\{Y\rho(Y)\}
-\frac{1}{1-\pi}\theta\\
&=&\frac{1}{\pi}\Eq[b(\X)\Eq\{\rho(Y)\mid\X\}]+\frac{1}{1-\pi}\Eq\{b(\X)\}
-\frac{1}{\pi}\Eq\{Y\rho(Y)\}
-\frac{1}{1-\pi}\theta\\
&=&0.
\ese
Hence $\phi\eff(\x,r,ry)$ belongs to $\calT$, which proves the result.
\qed

\subsection{Proof of Proposition \ref{pro:consistency}}\label{sec:consistency}
The definition of $b^{*\star}(\x)$ immediately leads to $\E\{b^{*\star}(\X)\mid y\}=y$.
Therefore,
\bse
\wh\theta_t
&\xrightarrow{P}&\E\left[\frac{R}{\pi}\rho^*(Y)\{Y-b^{*\star}(\X)\}+\frac{1-R}{1-\pi}b^{*\star}(\X)\right]\\
&=&\Ep[\rho^*(Y)\{Y-b^{*\star}(\X)\}]+\Eq\{b^{*\star}(\X)\}\\
&=&\Ep[\rho^*(Y)[Y-\E\{b^{*\star}(\X)\mid Y\}])+\Eq[\E\{b^{*\star}(\X)\mid Y\}]\\
&=&\Eq(Y).
\ese
\qed

\subsection{Algorithm to Solve Equations (\ref{eq:a*hat}) and (\ref{eq:a*approx}) in Algorithms~\ref{algdouble} and \ref{algsingle}}\label{sec:fredholm}
We first illustrate how to solve \eqref{eq:a*hat}. Let
\be\label{eq:fredholmkernel}
\phi^{*\star}(y,t)\equiv\rho^*(t)\sumi\frac{\pyx^\star(t,\x_i)}{\Ep^\star\{\rho^{*2}(Y)\mid\x_i\}+\pi/(1-\pi)\Ep^\star\{\rho^*(Y)\mid\x_i\}}\frac{r_iK_h(y-y_i)}{\sumj
  r_jK_h(y-y_j)},
\ee
then \eqref{eq:a*hat} is equivalently written as
 $y=\int \phi^{*\star}(y,t)a^{*\star}(t)\dd t$, which is a Fredholm
 integral equation of the first type. We can find the solution $\wh
 a^{*\star}$ through Landweber's iterative method
 \citep{landweber1951iteration}. Choose a starting point $a_{0}$,
 then iterate the formula
\bse
a_{k+1}(y)\leftarrow a_{k}(y)+\int\phi^{*\star}(t,y)t\dd t -
\int\left\{\int\phi^{*\star}(z,y)\phi^{*\star}(z,t)\dd z\right\}
a_{k}(t)\dd t
\ese
while $\int\{a_{k+1}(t)-a_{k}(t)\}^2\dd t/\int\{a_{k}(t)\}^2\dd t$ is
greater than a chosen tolerance.

For practical implementation, we can approximate the integration in
\eqref{eq:a*hat} using  quadrature method. For a given quadrature
rule, let the quadrature nodes be $\t\equiv(t_1,\dots,t_m)\trans$ and
weights be $\w\equiv(w_1,\dots,w_m)\trans$. Also, let $\wt\y\equiv(\wt
y_1,\dots,\wt y_l)\trans,
\a^{*\star}\equiv\{a^{*\star}(t_1),\dots,a^{*\star}(t_m)\}\trans$, and
$\bPhi^{*\star}$ be an $l\times m$ matrix whose $(i,j)$ component
is $\phi^{*\star}(\wt y_i,t_j)$. Then \eqref{eq:a*hat} can be
discretized as $\wt\y=\bPhi^{*\star}(\a^{*\star}\cdot\w)$ where
$\a\cdot\b\equiv(a_1b_1,\dots,a_mb_m)\trans$ is the Hadamard product. Also, the above
iterative formula can be approximated as
\bse
\a_{k+1}\leftarrow \a_{k}+l^{-1}\bPhi^{*\star\rm T}\wt\y-l^{-1}\bPhi^{*\star\rm T}\bPhi^{*\star}(\a_{k}\cdot\w).
\ese
We iterately use this formula with the starting point $\a_{0}$ as long
as$\|\a_{k+1}-\a_{k}\|_2^2/\|\a_{k}\|_2^2$ is greater than a chosen tolerance.
We summarize the algorithm in Algorithm \ref{algfredholm}.
\begin{algorithm}[H]
    \caption{Solving the integral equation \eqref{eq:a*hat}}\label{algfredholm}
 	\begin{algorithmic}
        \STATE \textbf{Input}: a function $\phi^{*\star}(y,t)$ in \eqref{eq:fredholmkernel} and a tolerance $\Delta$.
 	    \STATE \textbf{do}
        \STATE (a) adopt a set of evaluation points $\wt\y=(\wt y_1,\dots,\wt y_l)\trans$;\\
        \STATE (b) adopt a quadrature rule $(\t,\w)$, where nodes $\t=(t_1,\dots,t_m)\trans$ and     weights $\w=(w_1,\dots,w_m)\trans$;\\
        \STATE (c) compute a matrix $\bPhi^{*\star}$ where $\bPhi^{*\star}_{(i,j)}=\phi^{*\star}(\wt     y_i,t_j),i=1,\dots,l,j=1,\dots,m$;\\
        \STATE (d) Declare a starting point $\a_0=(a_{01},\dots,a_{0m})\trans$;\\
        \While{$\|\a_{k+1}-\a_{k}\|_2^2/\|\a_{k}\|_2^2>\Delta$}{
            $\a_{k+1}\leftarrow \a_{k}+l^{-1}\bPhi^{*\star\rm T}\wt\y-l^{-1}\bPhi^{*\star\rm     T}\bPhi^{*\star}(\a_{k}\cdot\w)$;}
 	    \STATE \textbf{Output}: $\wh\a^{*\star}\leftarrow\a_{k+1}$.
    \end{algorithmic}
\end{algorithm}

We now describe how to solve \eqref{eq:a*approx}. If
$\wh\E_p\{a(Y)\mid\x\}=\int a(t)\wh p_{Y\mid\X}(t,\x_i)\dd t$ for some
$\wh p_{Y\mid\X}$, then we can follow the above procedure to obtain
the solution while replacing $\pyx^\star(t,\x_i)$ in
\eqref{eq:fredholmkernel} by $\wh p_{Y\mid\X}(t,\x_i)$. Now, suppose
$\wh\E_p\{a(Y)\mid\x\}=\sum_{i=1}^{n_1} a(y_i)w_i(\x)$ for some
$w_i,i=1,\dots,n_1$. This form includes a general class of
nonparametric regression estimators, for instance, the Nadaraya-Watson
estimator is of this form with $w_i(\x)=r_iK_h(\x-\x_i)/\sumj
r_jK_h(\x-\x_j)$. Then \eqref{eq:a*approx} is equivalent to
\bse
y&=&\sumi\frac{\sum_{k=1}^{n_1} a^{*}(y_k)\rho^*(y_k)w_k(\x_i)}
{\wh\E_p\{\rho^{*2}(Y)\mid\x_i\}+\pi/(1-\pi)\wh\E_p\{\rho^*(Y)\mid\x_i\}}
\frac{r_iK_h(y-y_i)}{\sumj r_jK_h(y-y_j)}\n\\
&=&\sum_{k=1}^{n_1} \phi^*(y,y_k,w_k)a^{*}(y_k),
\ese
where
\be\label{eq:fredholmkernel2}
\phi^*(y,y_k,w_k)\equiv\sumi\frac{\rho^*(y_k)w_k(\x_i)}
{\wh\E_p\{\rho^{*2}(Y)\mid\x_i\}+\pi/(1-\pi)\wh\E_p\{\rho^*(Y)\mid\x_i\}}
\frac{r_iK_h(y-y_i)}{\sumj r_jK_h(y-y_j)}.
\ee
Now, let $\wt\y\equiv(\wt y_1,\dots,\wt y_l)\trans,
\a^*\equiv\{a^*(y_1),\dots,a^*(y_{n_1})\}\trans$, and $\bPhi^*$ be a
$l\times n_1$ matrix whose $(i,j)$th component is $\phi^*(\wt
y_i,y_j,w_j)$. Then we can find the approximate solution $\wh\a^*$ via
iteratively using the formula
\bse
\a_{k+1}\leftarrow \a_{k}+l^{-1}\bPhi^{*\rm T}\wt\y-l^{-1}\bPhi^{*\rm T}\bPhi^*\a_{k}
\ese
with a starting point $\a_{0}$,
as long as $\|\a_{k+1}-\a_{k}\|_2^2/\|\a_{k}\|_2^2$ is greater than a chosen tolerance.
We summarize the algorithm in Algorithm \ref{algfredholm2}.
\begin{algorithm}[H]
    \caption{Solving the integral equation \eqref{eq:a*approx}}\label{algfredholm2}
 	\begin{algorithmic}
        \STATE \textbf{Input}: a function $\phi^*(y,y_k,w_k)$ in \eqref{eq:fredholmkernel2} and a tolerance $\Delta$.
 	    \STATE \textbf{do}
        \STATE (a) adopt a set of evaluation points $\wt\y=(\wt y_1,\dots,\wt y_l)\trans$;\\
        \STATE (b) compute a matrix $\bPhi^{*}$ where $\bPhi^{*}_{(i,j)}=\phi^{*}(\wt     y_i,y_j,w_j),i=1,\dots,l,j=1,\dots,n_1$;\\
        \STATE (c) Declare a starting point $\a_0=(a_{01},\dots,a_{0m})\trans$;\\
        \While{$\|\a_{k+1}-\a_{k}\|_2^2/\|\a_{k}\|_2^2>\Delta$}{
            $\a_{k+1}\leftarrow \a_{k}+l^{-1}\bPhi^{*\rm T}\wt\y-l^{-1}\bPhi^{*\rm     T}\bPhi^{*}(\a_{k}\cdot\w)$;}
 	    \STATE \textbf{Output}: $\wh\a^{*}\leftarrow\a_{k+1}$.
    \end{algorithmic}
\end{algorithm}

\subsection{Proof of Lemma \ref{lem:l}}\label{sec:ldoubleproof}
From the definition of $\calL^{*\star}(a)$, Conditions \ref{con:3boundedstar}, and
\ref{con:4compact},
it is immediate that there exists a constant $0<c_2<\infty$
such that $\|\calL^{*\star}(a)\|_\infty\leq c_2\|a\|_\infty$.
Now we show there is a constant $0<c_1<\infty$
such that $c_1\|a\|_\infty\leq\|\calL^{*\star}(a)\|_\infty$.
Note that if $\calL^{*\star}$ is invertible,
by the bounded inverse theorem
we have $\|\calL^{*\star-1}(v)\|_\infty\leq c_1^{-1}\|v\|_\infty$
for some constant $0<c_1<\infty$, i.e.,
$c_1\|a\|_\infty\leq\|\calL^{*\star}(a)\|_\infty$.
Hence it suffices to show that $\calL^{*\star}$ is invertible.
We prove this by contradiction.
Suppose there are $a_1(y)$ and $a_2(y)$ such that
$\calL^{*\star}(a_1)(y)=\calL^{*\star}(a_2)(y)=v(y)$ and $a_1\neq a_2$.
Then by Conditions \ref{con:1complete} and \ref{con:2rho}, we have
\bse
\Ep^\star\{a_1(Y)\rho^*(Y)\mid\x\}\neq\Ep^\star\{a_2(Y)\rho^*(Y)\mid\x\}.
\ese
Now, the efficient score  calculated under the posited models is
\bse
\phi\eff^{*\star}(\x,r,ry)
&=&\frac{r}{\pi}\rho^*(y)\left[y
-\frac{\Ep^\star\{a(Y)\rho^*(Y)\mid\x\}}
{\Ep^\star\{\rho^{*2}(Y)\mid\x\}+\pi/(1-\pi)\Ep^\star\{\rho^*(Y)\mid\x\}}\right]\\
&&+\frac{1-r}{1-\pi}
\left[\frac{\Ep^\star\{a(Y)\rho^*(Y)\mid\x\}}
{\Ep^\star\{\rho^{*2}(Y)\mid\x\}+\pi/(1-\pi)\Ep^\star\{\rho^*(Y)\mid\x\}}-\theta\right],
\ese
where $a(y)$ satisfies $\calL^{*\star}(a)(y)=v(y)$.
Then letting $a=a_1$ and $a=a_2$ gives two distinct efficient scores,
which contradicts the uniqueness of the efficient score.
Therefore, there is a unique solution $a^{*\star}(y)$ for $\calL^{*\star}(a^{*\star})(y)=v(y)$,
hence $\calL^{*\star}$ is invertible.
\qed

\subsection{Proof of Theorem \ref{th:theta}}\label{sec:doubleproof}
We define
\bse
b^{*\star}(\x,a,\bzeta)\equiv\frac{\Ep^\star\{a(Y)\rho^*(Y)\mid\x,\bzeta\}}
{\Ep^\star\{\rho^{*2}(Y)\mid\x,\bzeta\}+\pi/(1-\pi)\Ep^\star\{\rho^*(Y)\mid\x,\bzeta\}},
\ese
and analyze $b^{*\star}(\x,\wh a^{*\star},\wh\bzeta)$.
First, under Conditions
\ref{con:3boundedstar}-\ref{con:6bandwidth}
and $\|\wh\bzeta-\bzeta\|_2=O_p(n_1^{-1/2})$,
we have
\be\label{eq:lhatastar}
\wh\calL^{*\star}(a^{*\star})(y)
&=&n_1^{-1}\sumi r_iK_h(y-y_i)b^{*\star}(\x_i,a^{*\star},\wh\bzeta)\n\\
&=&n_1^{-1}\sumi r_iK_h(y-y_i)\left\{b^{*\star}(\x_i,a^{*\star},\bzeta)
+\frac{\partial b^{*\star}(\x_i,a^{*\star},\bzeta)}{\partial\bzeta\trans}
(\wh\bzeta-\bzeta)+o_p(n_1^{-1/2})\right\}\n\\
&=&n_1^{-1}\sumi\calL_{i,h}^{*\star}(a^{*\star})(y)
+\left[\py(y)\frac{\partial E\{b^{*\star}(\X,a^{*\star},\bzeta)\mid y\}}{\partial\bzeta\trans}+o_p(1)\right]
(\wh\bzeta-\bzeta)+o_p(n_1^{-1/2})\n\\
&=&n_1^{-1}\sumi\calL_{i,h}^{*\star}(a^{*\star})(y)
+o_p(n_1^{-1/2})
\ee
uniformly in $y$ since $E\{b^{*\star}(\X,a^{*\star},\bzeta)\mid y\}=y$.
Here, we define
\bse
\calL_{i,h}^{*\star}(a)(y) \equiv r_i K_h(y-y_i)\frac{\Ep^\star\{a(Y)\rho^*(Y)\mid\x_i\}}
{\Ep^\star\{\rho^{*2}(Y)\mid\x_i\}+\pi/(1-\pi)\Ep^\star\{\rho^*(Y)\mid\x_i\}}.
\ese
We also define $g_{i,h}(y)\equiv\calL^{*\star-1}\{v_{i,h}-\calL_{i,h}^{*\star}(a^{*\star})\}(y)$.
Note that
$\|E(V_{i,h})-v\|_\infty=O(h^2)$ by Conditions \ref{con:4compact},\ref{con:5kernel},
then since $\calL^{*\star-1}$ is a bounded linear operator by Lemma \ref{lem:l},
\bse
\|E\{\calL^{*\star-1}(V_{i,h})\}-\calL^{*\star-1}(v)\|_\infty
&=&\|\calL^{*\star-1}\{E(V_{i,h})\}-\calL^{*\star-1}(v)\|_\infty\\
&=&\|\calL^{*\star-1}\{E(V_{i,h})-v\}\|_\infty\\
&=&O(h^2).
\ese
Similarly, $\|E\{(\calL^{*\star-1}\calL_{i,h}^{*\star})(a^{*\star})\}-a^{*\star}\|_\infty=O(h^2)$.
Then using $\calL^{*\star-1}(v)(y)=a^{*\star}(y)$, we get
\be\label{eq:g}
\|E(G_{i,h})\|_\infty
&=&\|E\{\calL^{*\star-1}(V_{i,h})\}-E\{(\calL^{*\star-1}\calL_{i,h}^{*\star})(a^{*\star})\}-\{\calL^{*\star-1}(v)-a^{*\star}\}\|_\infty\n\\
&\leq&\|E\{\calL^{*\star-1}(V_{i,h})\}-\calL^{*\star-1}(v)\|_\infty+\|E\{(\calL^{*\star-1}\calL_{i,h})(a^{*\star})\}-a^{*\star}\|_\infty\n\\
&=&O(h^2).
\ee
Now, for any bounded function $a(y)$,
$\|(\wh\calL^{*\star}-\calL^{*\star})(a)\|_\infty=O_p\left\{(n_1h)^{-1/2}\log
  n_1+h^2\right\}=o_p(n_1^{-1/4})$
under Conditions
\ref{con:3boundedstar}-\ref{con:6bandwidth}
and the assumption $\|\wh\bzeta-\bzeta\|_2=O_p(n_1^{-1/2})$.
Similarly, $\|\wh v-v\|_\infty=o_p(n_1^{-1/4})$.
Then using that $\calL^{*\star-1}$ is a bounded linear operator by Lemma \ref{lem:l},
$\wh a^{*\star}(y)$ can be expressed as
\bse
\wh a^{*\star}(y)&=&\wh\calL^{*\star-1}(\wh v)(y)\\
&=&\{\calL^{*\star}+(\wh\calL^{*\star}-\calL^{*\star})\}^{-1}(\wh v)(y)\\
&=&\{\calL^{*\star-1}-\calL^{*\star-1}(\wh\calL^{*\star}-\calL^{*\star})\calL^{*\star-1}\}\{v+(\wh v-v)\}(y)+o_p(n_1^{-1/2})\\
&=&a^{*\star}(y)+\calL^{*\star-1}(\wh v-v)(y)-\{\calL^{*\star-1}(\wh\calL^{*\star}-\calL^{*\star})\}(a^{*\star})(y)+o_p(n_1^{-1/2})\\
&=&a^{*\star}(y)+\calL^{*\star-1}\{\wh v-\wh\calL^{*\star}(a^{*\star})\}(y)+o_p(n_1^{-1/2})\\
&=&a^{*\star}(y)+n_1^{-1}\sumi g_{i,h}(y)+o_p(n_1^{-1/2})
\ese
uniformly in $y$, where the last step is by \eqref{eq:lhatastar} and the definition of
$g_{i,h}(y)$.
Then we further get
\bse
&&\left|\frac{\partial b^{*\star}(\x,\wh a^{*\star},\bzeta)}
{\partial\bzeta\trans}(\wh\bzeta-\bzeta)
-\frac{\partial b^{*\star}(\x,a^{*\star},\bzeta)}
{\partial\bzeta\trans}(\wh\bzeta-\bzeta)\right|\\
&=&\left|\frac{\partial}{\partial\bzeta\trans}
\left[\frac{\Ep^\star\{(\wh a^{*\star}-a^{*\star})(Y)\rho^*(Y)\mid\x,\bzeta\}}
{\Ep^\star\{\rho^{*2}(Y)+\pi/(1-\pi)\rho^*(Y)\mid\x,\bzeta\}}\right](\wh\bzeta-\bzeta)\right|\\
&=&\left|\frac{\Ep^\star\{(\wh a^{*\star}-a^{*\star})(Y)\rho^*(Y)
\S_\bzeta^{\star\rm T}(Y,\x,\bzeta)\mid\x\}}
{\Ep^\star\{\rho^{*2}(Y)+\pi/(1-\pi)\rho^*(Y)\mid\x\}}(\wh\bzeta-\bzeta)\right.\\
&&\left.-\Ep^\star\{(\wh a^{*\star}-a^{*\star})(Y)\rho^*(Y)\mid\x\}
\frac{\Ep[\{\rho^{*2}(Y)+\pi/(1-\pi)\rho^*(Y)\}\S_\bzeta^{\star\rm T}(Y,\x,\bzeta)\mid\x]}
{[\Ep^\star\{\rho^{*2}(Y)+\pi/(1-\pi)\rho^*(Y)\mid\x\}]^2}(\wh\bzeta-\bzeta)\right|\\
&\leq&\left\|\wh a^{*\star}-a^{*\star}\right\|_\infty
\frac{\Ep^\star\{\rho^*(Y)\|\S_\bzeta^\star(Y,\x,\bzeta)\|_2\mid\x\}\|\wh\bzeta-\bzeta\|_2}
{\Ep^\star\{\rho^{*2}(Y)+\pi/(1-\pi)\rho^*(Y)\mid\x\}}\\
&&+\left\|\wh a^{*\star}-a^{*\star}\right\|_\infty\Ep^\star\{\rho^*(Y)\mid\x\}
\frac{\Ep^\star[\{\rho^{*2}(Y)+\pi/(1-\pi)\rho^*(Y)\}
\|\S_\bzeta^\star(Y,\x,\bzeta)\|_2\mid\x]\|\wh\bzeta-\bzeta\|_2}
{[\Ep^\star\{\rho^{*2}(Y)+\pi/(1-\pi)\rho^*(Y)\mid\x\}]^2}\\
&=&o_p(n_1^{-1/4})O_p(n_1^{-1/2})
=o_p(n_1^{-1/2}),
\ese
because $\|\wh\bzeta-\bzeta\|_2=O_p(n_1^{-1/2})$,
$\Ep^\star\{\|\S_\bzeta^\star(Y,\x,\bzeta)\|_2\mid\x\}$ is bounded,
and $\|\wh a^{*\star}-a^{*\star}\|_\infty
=O_p\left\{(n_1h)^{-1/2}\log n_1+h^2\right\}=o_p(n^{-1/4})$
by \eqref{eq:g} and Condition \ref{con:6bandwidth}.
Hence, noting that $b^{*\star}(\x,a,\bzeta)$ is linear with respect to $a$, we get
\be\label{eq:bpar}
b^{*\star}(\x,\wh a^{*\star},\wh\bzeta)
&=&b^{*\star}(\x,\wh a^{*\star},\bzeta)
+\frac{\partial b^{*\star}(\x,\wh a^{*\star},\bzeta)}{\partial\bzeta\trans}(\wh\bzeta-\bzeta)+o_p(n_1^{-1/2})\n\\
&=&b^{*\star}(\x,a^{*\star},\bzeta)+b^{*\star}(\x,\wh a^{*\star}-a^{*\star},\bzeta)
+\frac{\partial b^{*\star}(\x,a^{*\star},\bzeta)}{\partial\bzeta\trans}(\wh\bzeta-\bzeta)+o_p(n_1^{-1/2})\n\\
&=&b^{*\star}(\x,a^{*\star},\bzeta)+n_1^{-1}\sumi b^{*\star}(\x,g_{i,h},\bzeta)
+\frac{\partial b^{*\star}(\x,a^{*\star},\bzeta)}{\partial\bzeta\trans}(\wh\bzeta-\bzeta)+o_p(n_1^{-1/2})
\ee
uniformly in $\x$ by Condition \ref{con:4compact}.
Now we analyze $\wh\theta$. By the definition of $\wh\theta$,
\be\label{eq:theta}
\sqrt{n_1}(\wh\theta-\theta)
&=&\sqrt{n_1}n^{-1}\sumi\left[\frac{r_i}{\pi}\rho^*(y_i)\{y_i-b^{*\star}(\x_i,\wh a^{*\star},\wh\bzeta)\}
+\frac{1-r_i}{1-\pi}\{b^{*\star}(\x_i,\wh a^{*\star},\wh\bzeta)-\theta\}\right]\n\\
&=&\sqrt{n_1}n^{-1}\sumi\phi\eff^{*\star}(\x_i,r_i,r_iy_i)
+\sqrt{n_1}n^{-1}\sumi\left\{\frac{r_i}{\pi}\rho^*(y_i)-\frac{1-r_i}{1-\pi}\right\}
\{b^{*\star}(\x_i,a^{*\star},\bzeta)-b^{*\star}(\x_i,\wh a^{*\star},\wh\bzeta)\}\n\\
&=&\sqrt{n_1}n^{-1}\sumi\phi\eff^{*\star}(\x_i,r_i,r_iy_i)-T_1-T_2+o_p(1),
\ee
where
\bse
T_1&\equiv&n_1^{-1/2}n^{-1}\sumi\sumj
\left\{\frac{r_i}{\pi}\rho^*(y_i)-\frac{1-r_i}{1-\pi}\right\}b^{*\star}(\x_i,g_{j,h},\bzeta),\\
T_2&\equiv&n^{-1}\sumi\left\{\frac{r_i}{\pi}\rho^*(y_i)-\frac{1-r_i}{1-\pi}\right\}
\frac{\partial b^{*\star}(\x_i,a^{*\star},\bzeta)}{\partial\bzeta\trans}\sqrt{n_1}(\wh\bzeta-\bzeta).
\ese
Note that since $\|E(G_{j,h})\|_\infty=O(h^2)$ by \eqref{eq:g}
and $b^{*\star}(\x,a,\bzeta)$ is linear with respect to $a$,
\be\label{eq:eg}
\|E\{b^{*\star}(\x,G_{j,h},\bzeta)\}\|_\infty=\|b^{*\star}\{\x,E(G_{j,h}),\bzeta\}\|_\infty=O(h^2).
\ee
Hence using the property of the U-statistic and Condition \ref{con:6bandwidth},
we can rewrite $T_1$ as
\bse
T_1
&=&n_1^{-1/2}\sumi
\left\{\frac{r_i}{\pi}\rho^*(y_i)-\frac{1-r_i}{1-\pi}\right\}
E\{b^{*\star}(\x_i,G_{j,h},\bzeta)\mid\x_i,r_i,r_iy_i\}\n\\
&&+n_1^{-1/2}\sumj
\E\left[\left\{\frac{R_i}{\pi}\rho^*(Y_i)-\frac{1-R_i}{1-\pi}\right\}
b^{*\star}(\X_i,g_{j,h},\bzeta)\mid\x_j,r_j,r_jy_j\right]\n\\
&&-n^{1/2}n_1^{-1/2}
\E\left[\left\{\frac{R_i}{\pi}\rho^*(Y_i)-\frac{1-R_i}{1-\pi}\right\}
b^{*\star}(\X_i,G_{j,h},\bzeta)\right]+O_p(n_1^{-1/2})\n\\
&=&n_1^{-1/2}\sumj
\E\left[\left\{\frac{R_i}{\pi}\rho^*(Y_i)-\frac{1-R_i}{1-\pi}\right\}
b^{*\star}(\X_i,g_{j,h},\bzeta)\mid\x_j,r_j,r_jy_j\right]+O_p(n_1^{-1/2}+n n_1^{-1/2}h^2)\n\\
&=&n_1^{-1/2}\sumj
\int\left[\rho^*(y)\E\{b^{*\star}(\X,g_{j,h},\bzeta)\mid y\}\py(y)
-\E\{b^{*\star}(\X,g_{j,h},\bzeta)\mid y\}\qy(y)\right]dy+o_p(1)\n\\
&=&n_1^{-1/2}\sumj\int\calL^{*\star}(g_{j,h})(y)
\left\{\rho^*(y)-\rho(y)\right\}dy+o_p(1)\n\\
&=&n_1^{-1/2}\sumj\int\{v_{j,h}(y)-\calL_{j,h}^{*\star}(a^{*\star})(y)\}
\left\{\rho^*(y)-\rho(y)\right\}dy+o_p(1)\n\\
&=&n_1^{-1/2}\sumj r_j\int K_h(y-y_j)\{y-b^{*\star}(\x_j,a^{*\star},\bzeta)\}
\left\{\rho^*(y)-\rho(y)\right\}dy+o_p(1).
\ese
On the other hand, using $\E\{b^{*\star}(\X,a^{*\star},\bzeta)\mid y\}=y$, $T_2$ can be written as
\bse
T_2
&=&\E\left[\left\{\frac{R}{\pi}\rho^*(Y)-\frac{1-R}{1-\pi}\right\}
\frac{\partial b^{*\star}(\X,a^{*\star},\bzeta)}{\partial\bzeta\trans}\right]
\sqrt{n_1}(\wh\bzeta-\bzeta)+o_p(1)\\
&=&\frac{\partial}{\partial\bzeta\trans}
\E\left[\left\{\frac{R}{\pi}\rho^*(Y)-\frac{1-R}{1-\pi}\right\}Y\right]
\sqrt{n_1}(\wh\bzeta-\bzeta)+o_p(1)\\
&=&o_p(1).
\ese
Therefore, \eqref{eq:theta} leads to
\bse
\sqrt{n_1}(\wh\theta-\theta)
&=&\sqrt{n_1}n^{-1}\sumi\phi\eff^{*\star}(\x_i,r_i,r_iy_i)+n_1^{-1/2}\sumi r_ih(\x_i,y_i)+o_p(1)\\
&=&n^{-1/2}\sumi\left\{\sqrt{\pi}\phi\eff^{*\star}(\x_i,r_i,r_iy_i)+\frac{r_i}{\sqrt{\pi}}h(\x_i,y_i)\right\}+o_p(1),
\ese
where
\bse
h(\x_i,y_i)
&\equiv&\int K_h(y-y_i)\{b^{*\star}(\x_i,a^{*\star},\bzeta)-y\}\{\rho^*(y)-\rho(y)\}dy\n\\
&=&\int K(t)\{b^{*\star}(\x_i,a^{*\star},\bzeta)-y_i-ht\}\{\rho^*(y_i+ht)-\rho(y_i+ht)\}dt\n\\
&=&\{b^{*\star}(\x_i,a^{*\star},\bzeta)-y_i\}\{\rho^*(y_i)-\rho(y_i)\}
+[
\{b^{*\star}(\x_i,a^{*\star},\bzeta)-y_i\}\{{\rho^*}''(y_i)-\rho''(y_i)\}
-2\{{\rho^*}'(y_i)-\rho'(y_i)\}]\n\\
&&\times\frac{h^2}{2}\int t^2K(t)dt+O(h^4)
\ese
by Conditions \ref{con:2rho} and \ref{con:5kernel}.
Hence, by Condition \ref{con:6bandwidth}, we have
\bse
\sqrt{n_1}(\wh\theta-\theta)
&=&n^{-1/2}\sumi\left[\sqrt{\pi}\phi\eff^{*\star}(\x_i,r_i,r_iy_i)
+\frac{r_i}{\sqrt{\pi}}\{b^{*\star}(\x_i,a^{*\star},\bzeta)-y_i\}\{\rho^*(y_i)-\rho(y_i)\}\right]+o_p(1).
\ese
\qed

\subsection{Proof of Theorem \ref{th:theta2}}\label{sec:singleproof}
First let
\bse
b^*(\x,a,\Ep)&\equiv&
\frac{\Ep\{a(Y)\rho^*(Y)\mid\x\}}{\Ep\{\rho^{*2}(Y)+\pi/(1-\pi)\rho^*(Y)\mid\x\}},
\ese
and for any function $g(\cdot,\Ep)$,
its $k$th Gateaux derivative with respect to $\Ep$ at $\mu_1$ in the direction $\mu_2$ as
\bse
\frac{\partial^k g(\cdot,\mu_1)}{\partial{\Ep}^k}(\mu_2)&\equiv&
\frac{\partial^k g(\cdot,\mu_1+h\mu_2)}{\partial h^k}\bigg|_{h=0}.
\ese
We have
\bse
\frac{\partial b^*(\x,a,\Ep)}{\partial\Ep}(\wh\E_p-\Ep)
&=&\frac{(\wh\E_p-\Ep)\{a(Y)\rho^*(Y)\mid\x\}\Ep\{\rho^{*2}(Y)+\pi/(1-\pi)\rho^*(Y)\mid\x\}}
{[\Ep\{\rho^{*2}(Y)+\pi/(1-\pi)\rho^*(Y)\mid\x\}]^2}\\
&&-\frac{(\wh\E_p-\Ep)\{\rho^{*2}(Y)+\pi/(1-\pi)\rho^*(Y)\mid\x\}\Ep\{a(Y)\rho^*(Y)\mid\x\}}
{[\Ep\{\rho^{*2}(Y)+\pi/(1-\pi)\rho^*(Y)\mid\x\}]^2}\\
&=&b^*(\x,a,\Ep)\left[\frac{(\wh\E_p-\Ep)\{a(Y)\rho^*(Y)\mid\x\}}{\Ep\{a(Y)\rho^*(Y)\mid\x\}}
-\frac{(\wh\E_p-\Ep)\{\rho^{*2}(Y)+\pi/(1-\pi)\rho^*(Y)\mid\x\}}
{\Ep\{\rho^{*2}(Y)+\pi/(1-\pi)\rho^*(Y)\mid\x\}}\right]\\
&=&b^*(\x,a,\Ep)o_p(n_1^{-1/4}),\\
\frac{\partial^2 b^*(\x,a,\mu)}{\partial{\Ep}^2}(\wh\E_p-\Ep)
&=&\frac{-2(\wh\E_p-\Ep)\{\rho^{*2}(Y)+\pi/(1-\pi)\rho^*(Y)\mid\x\}}
{[\Ep\{\rho^{*2}(Y)+\pi/(1-\pi)\rho^*(Y)\mid\x\}]^3}\\
&&\times\left[(\wh\E_p-\Ep)\{a(Y)\rho^*(Y)\mid\x\}\mu\{\rho^{*2}(Y)+\pi/(1-\pi)\rho^*(Y)\mid\x\}\right.\\
&&\left.-(\wh\E_p-\Ep)\{\rho^{*2}(Y)+\pi/(1-\pi)\rho^*(Y)\mid\x\}\mu\{a(Y)\rho^*(Y)\mid\x\}\right]\\
&=&o_p(n_1^{-1/2})
\ese
for any bounded $a(y)$ and $\mu(\cdot\mid\x)$
since $|(\wh\E_p-\Ep)(\cdot\mid\x)|=o_p(n_1^{-1/4})$ by the assumption,
and these hold uniformly with respect to $\x$ by Condition \ref{con:4compact}.
Then by the Taylor expansion and mean value theorem, for
any bounded $a(y)$ and some $\alpha\in(0,1)$,
\be\label{eq:b}
b^*(\x,a,\wh\E_p)
&=&b^*(\x,a,\Ep)
+\frac{\partial b^*(\x,a,\Ep)}{\partial{\Ep}}(\wh\E_p-\Ep)
+\frac{1}{2}\frac{\partial^2b^*\{\x,a,\Ep+\alpha(\wh\E_p-\Ep)\}}{\partial{\Ep}^2}(\wh\E_p-\Ep)\n\\
&=&b^*(\x,a,\Ep)+\frac{\partial b^*(\x,a,\Ep)}{\partial{\Ep}}(\wh\E_p-\Ep)+o_p(n_1^{-1/2}).
\ee
Noting that $a^*(y)$ is bounded under Condition \ref{con:3bounded}, we further get
\bse
\wh\calL^*(a^*)(y)
&=&n_1^{-1}\sumi r_iK_h(y-y_i)b^*(\x_i,a^*,\wh\E_p)\n\\
&=&n_1^{-1}\sumi r_iK_h(y-y_i)
\left\{b^*(\x_i,a^*,\Ep)
+\frac{\partial b^*(\x_i,a^*,\Ep)}{\partial{\Ep}}(\wh\E_p-\Ep)+o_p(n_1^{-1/2})\right\}\n\\
&=&n_1^{-1}\sumi\calL_{i,h}^*(a^*)(y)+o_p(n_1^{-1/2})
\ese
uniformly in $y$ by Condition \ref{con:4compact}, where we define
\bse
\calL_{i,h}^*(a)(y) \equiv r_i K_h(y-y_i)\frac{\Ep\{a(Y)\rho^*(Y)\mid\x_i\}}
{\Ep\{\rho^{*2}(Y)\mid\x_i\}+\pi/(1-\pi)\Ep\{\rho^*(Y)\mid\x_i\}}.
\ese
The last equality above is because $E\{b^*(\X,a^*,\Ep)\mid y\}=y$ from
the definition of $a^*$, hence
\bse
&&n_1^{-1}\sumi r_iK_h(y-y_i)\frac{\partial b^*(\x_i,a^*,\Ep)}{\partial{\Ep}}(\wh\E_p-\Ep)\\
&=&n_1^{-1}\sumi r_iK_h(y-y_i)\frac{\partial b^*(\x_i,a^*,\Ep)}{\partial{\Ep}}(\wh\E_p-\Ep)
-\frac{\partial[\py(y)E\{b^*(\X,a^*,\Ep)\mid y\}]}{\partial{\Ep}}(\wh\E_p-\Ep)\\
&=&n_1^{-1/4}\left[n_1^{-1}\sumi r_iK_h(y-y_i)n_1^{1/4}\frac{\partial b^*(\x_i,a^*,\Ep)}{\partial{\Ep}}(\wh\E_p-\Ep)
-\py(y)E\left\{n_1^{1/4}\frac{\partial b^*(\X,a^*,\Ep)}{\partial{\Ep}}(\wh\E_p-\Ep)\mid y\right\}\right]\\
&=&n_1^{-1/4}O_p\left\{(n_1h)^{-1/2}\log{n_1}+h^2\right\}\\
&=&o_p(n_1^{-1/2})
\ese
under Conditions \ref{con:4compact}-\ref{con:3bounded}.
In addition, for any bounded function $a(y)$,
\bse
\|(\wh\calL^*-\calL^*)(a)\|_\infty
=O_p\left\{(n_1h)^{-1/2}\log{n_1}+h^2\right\}+o_p(n_1^{-1/4})
=o_p(n_1^{-1/4})
\ese
under Conditions \ref{con:4compact}-\ref{con:3bounded},
and the assumption $|\wh\E_p\{a(Y)\mid\x\}-\Ep\{a(Y)\mid\x\}|=o_p(n_1^{-1/4})$.
Similarly, $\|\wh v-v\|_\infty=o_p(n_1^{-1/4})$.
Then using that $\calL^{*-1}$ is a bounded linear operator by Lemma \ref{lem:l},
$\wh a^*(y)$ can be expressed as
\bse
\wh a^*(y)
&=&\{\calL^*+(\wh\calL^*-\calL^*)\}^{-1}(\wh v)(y)\\
&=&\{\calL^{*-1}-\calL^{*-1}(\wh\calL^*-\calL^*)\calL^{*-1}\}\{v+(\wh v-v)\}(y)+o_p(n_1^{-1/2})\\
&=&a^*(y)+\calL^{*-1}(\wh v-v)(y)-\{\calL^{*-1}(\wh\calL^*-\calL^*)\}(a^*)(y)+o_p(n_1^{-1/2})\\
&=&a^*(y)+\calL^{*-1}\{\wh v-\wh\calL^*(a^*)\}(y)+o_p(n_1^{-1/2})\\
&=&a^*(y)+n_1^{-1}\sumi g_{i,h}(y)+o_p(n_1^{-1/2})
\ese
uniformly in $y$ where $g_{i,h}(y)\equiv\calL^{*-1}\{v_{i,h}-\calL_{i,h}^*(a^*)\}(y)$.
Hence, noting that $b^*(\x,a,\Ep)$ is linear with respect to $a$ and
using \eqref{eq:b}, we get
\bse
b^*(\x,\wh a^*,\wh\E_p)
&=&b^*(\x,a^*,\Ep)+b^*(\x,\wh a^*-a^*,\Ep)
+\frac{\partial b^*(\x,\wh a^*,\Ep)}{\partial{\Ep}}(\wh\E_p-\Ep)+o_p(n_1^{-1/2})\\
&=&b^*(\x,a^*,\Ep)+n_1^{-1}\sumi b^*(\x,g_{i,h},\Ep)
+\frac{\partial b^*(\x,a^*,\Ep)}{\partial{\Ep}}(\wh\E_p-\Ep)+o_p(n_1^{-1/2})
\ese
uniformly in $\x$ by Condition \ref{con:4compact}.
The last equality holds because for any bounded $a(y)$,
\bse
|b^*(\x,a,\Ep)|\leq
\|a\|_\infty\frac{\Ep\{\rho^*(Y)\mid\x\}}{\Ep\{\rho^{*2}(Y)+\pi/(1-\pi)\rho^*(Y)\mid\x\}}
=O(\|a\|_\infty),
\ese
and
\bse
\frac{\partial b^*(\x,\wh a^*,\Ep)}{\partial{\Ep}}(\wh\E_p-\Ep)
-\frac{\partial b^*(\x,a^*,\Ep)}{\partial{\Ep}}(\wh\E_p-\Ep)
&=&\frac{\partial b^*(\x,\wh a^*-a^*,\Ep)}{\partial{\Ep}}(\wh\E_p-\Ep)\\
&=&b^*(\x,\wh a^*-a^*,\Ep)o_p(n_1^{-1/4})\\
&=&\left[O_p(h^2)+O_p\{(n_1h)^{-1/2}\log (n_1)\}\right]o_p(n_1^{-1/4})\\
&=&o_p(n_1^{-1/2}),
\ese
where the third equality is due to $\|E(G_{i,h})\|_\infty=O(h^2)$ by \eqref{eq:g},
and the last equality holds because $h^2=o_p(n_1^{-1/2})$
and $n_1\{\log(n_1)\}^{-4}h^2\to\infty$
by Condition \ref{con:6bandwidth}.
Now, by the definition of $\wt\theta$,
\be\label{eq:theta2}
&&\sqrt{n_1}(\wt\theta-\theta)\n\\
&=&\sqrt{n_1}n^{-1}\sumi\left[\frac{r_i}{\pi}\rho^*(y_i)\{y_i-b^*(\x_i,\wh a^*,\wh\E_p)\}
+\frac{1-r_i}{1-\pi}\{b^*(\x_i,\wh a^*,\wh\E_p)-\theta\}\right]\n\\
&=&\sqrt{n_1}n^{-1}\sumi\phi\eff^*(\x_i,r_i,r_iy_i)
+\sqrt{n_1}n^{-1}\sumi\left\{\frac{r_i}{\pi}\rho^*(y_i)-\frac{1-r_i}{1-\pi}\right\}
\left\{b^*(\x_i,a^*,\Ep)-b^*(\x_i,\wh a^*,\wh\E_p)\right\}\n\\
&=&\sqrt{n_1}n^{-1}\sumi\phi\eff^*(\x_i,r_i,r_iy_i)-T_1-T_2+o_p(1),
\ee
where
\bse
T_1&\equiv&n_1^{-1/2}n^{-1}\sumi\sumj
\left\{\frac{r_i}{\pi}\rho^*(y_i)-\frac{1-r_i}{1-\pi}\right\}b^*(\x_i,g_{j,h},\Ep),\\
T_2&\equiv&\sqrt{n_1}n^{-1}\sumi\left\{\frac{r_i}{\pi}\rho^*(y_i)-\frac{1-r_i}{1-\pi}\right\}
\frac{\partial b^*(\x_i,a^*,\Ep)}{\partial{\Ep}}(\wh\E_p-\Ep).
\ese
Note that $E\{b^*(\x_i,G_{j,h},\Ep)\mid\x_i\}=b^*\{\x_i,E(G_{j,h}),\Ep\}=O(h^2)$ by \eqref{eq:g}.
Then using the property of the U-statistic and Condition \ref{con:6bandwidth},
we can express $T_1$ as
\bse
T_1
&=&n_1^{-1/2}\sumi
\left\{\frac{r_i}{\pi}\rho^*(y_i)-\frac{1-r_i}{1-\pi}\right\}
E\{b^*(\x_i,G_{j,h},\Ep)\mid\x_i,r_i,r_iy_i\}\\
&&+n_1^{-1/2}\sumj
E\left[\left\{\frac{R_i}{\pi}\rho^*(Y_i)-\frac{1-R_i}{1-\pi}\right\}
b^*(\X_i,g_{j,h},\Ep)\mid\x_j,r_j,r_jy_j\right]\\
&&-n^{1/2}n_1^{-1/2}
E\left[\left\{\frac{R_i}{\pi}\rho^*(Y_i)-\frac{1-R_i}{1-\pi}\right\}
b^*(\X_i,G_{j,h},\Ep)\right]+O_p(n_1^{-1/2})\\
&=&n_1^{-1/2}\sumj
E\left[\left\{\frac{R_i}{\pi}\rho^*(Y_i)-\frac{1-R_i}{1-\pi}\right\}
b^*(\X_i,g_{j,h},\Ep)\mid\x_j,r_j,r_jy_j\right]+O_p(n n_1^{-1/2}h^2+n_1^{-1/2})\\
&=&n_1^{-1/2}\sumj
\int\left[\rho^*(y)E\{b^*(\X,g_{j,h},\Ep)\mid y\}\py(y)
-E\{b^*(\X,g_{j,h},\Ep)\mid y\}\qy(y)\right]dy+o_p(1)\\
&=&n_1^{-1/2}\sumj\int\calL^*(g_{j,h})(y)
\left\{\rho^*(y)-\rho(y)\right\}dy+o_p(1)\\
&=&n_1^{-1/2}\sumj\int\{v_{j,h}(y)-\calL_{j,h}^*(a^*)(y)\}
\left\{\rho^*(y)-\rho(y)\right\}dy+o_p(1)\\
&=&n_1^{-1/2}\sumj r_j\int K_h(y-y_j)\{y-b^*(\x_j,a^*,\Ep)\}
\left\{\rho^*(y)-\rho(y)\right\}dy+o_p(1),
\ese
where
\bse
&&\int K_h(y-y_j)\{y-b^*(\x_j,a^*,\Ep)\}\left\{\rho^*(y)-\rho(y)\right\}dy\\
&=&\int K(t)\{y_j+ht-b^*(\x_j,a^*,\Ep)\}\{\rho^*(y_j+ht)-\rho(y_j+ht)\}dt\\
&=&\{y_j-b^*(\x_j,a^*,\Ep)\}\{\rho^*(y_j)-\rho(y_j)\}\\
&&+\left[\{y_j-b^*(\x_j,a^*,\Ep)\}\{{\rho^*}''(y_j)-\rho''(y_j)\}
+2\{{\rho^*}'(y_j)-\rho'(y_j)\}\right]\frac{h^2}{2}\int t^2K(t)dt+O(h^4)\\
&=&\{y_j-b^*(\x_j,a^*,\Ep)\}\{\rho^*(y_j)-\rho(y_j)\}+o(n_1^{-1/2})
\ese
under Conditions \ref{con:2rho}, \ref{con:5kernel}, and \ref{con:6bandwidth}.
On the other hand, using $E\{b^*(\X,a^*,\Ep)\mid y\}=y$,
$T_2$ can be written as
\bse
T_2
&=&\sqrt{n_1}n^{-1}\sumi\left\{\frac{r_i}{\pi}\rho^*(y_i)-\frac{1-r_i}{1-\pi}\right\}
\frac{\partial b^*(\x_i,a^*,\Ep)}{\partial{\Ep}}(\wh\E_p-\Ep)\\
&&-\sqrt{n_1}\frac{\partial}{\partial\Ep}E\left[\left\{\frac{R}{\pi}\rho^*(Y)-\frac{1-R}{1-\pi}\right\}
b^*(\X,a^*,\Ep)\right](\wh\E_p-\Ep)\\
&=&n_1^{1/4}\left(n^{-1}\sumi\left\{\frac{r_i}{\pi}\rho^*(y_i)-\frac{1-r_i}{1-\pi}\right\}
n_1^{1/4}\frac{\partial b^*(\x_i,a^*,\Ep)}{\partial{\Ep}}(\wh\E_p-\Ep)\right.\\
&&\left.-E\left[\left\{\frac{R}{\pi}\rho^*(Y)-\frac{1-R}{1-\pi}\right\}
n_1^{1/4}\frac{\partial b^*(\X,a^*,\Ep)}{\partial\Ep}(\wh\E_p-\Ep)\right]\right)\\
&=&n_1^{1/4}O_p(n^{-1/2})\\
&=&o_p(1).
\ese
Therefore, \eqref{eq:theta2} leads to
\bse
\sqrt{n_1}(\wt\theta-\theta)
&=&n^{-1/2}\sumi\left[\sqrt{\pi}\phi\eff^*(\x_i,r_i,r_iy_i)
+\frac{r_i}{\sqrt{\pi}}\{b^*(\x_i,a^*,\Ep)-y_i\}\{\rho^*(y_i)-\rho(y_i)\}\right]+o_p(1).
\ese
\qed

\subsection{Proposed Doubly Flexible Estimation for $\bt$ such that $\Eq\{\U(\X,Y,\bt)\}=\0$}\label{sec:generalU}

\subsubsection{General Approach to Estimating $\bt$}

Let $\dim(\U)=\dim(\bt)$,
$\Eq\{\partial\U(\X,Y,\bt)/\partial\bt\trans\}$ be invertible,
and $\A\equiv[\Eq\{\partial\U(\X,Y,\bt)/\partial\bt\trans\}]^{-1}$.
We first establish $\calF$, the family of all influence functions for
estimating $\bt$. In Section~\ref{sec:influence2}, we show that
\bse
\calF
\equiv\left[\frac{r}{\pi}[\rho(y)\A\{\U(\x,y,\bt)-\b(\x)\}+\c]
+\frac{1-r}{1-\pi}\{\A\b(\x)-\c\}:
\E\{\b(\X)\mid y\}=\E\{\U(\X,y,\bt)\mid y\},\forall \c\right].
\ese

We now derive the efficient influence function $\bphi\eff(\x,r,ry)$
which corresponds to the semiparametric efficiency bound and also
provides guidance on constructing flexible estimators for $\bt$. The
derivation is provided in Section \ref{sec:eff2proof}.

\begin{pro}\label{pro:eff2}
The efficient influence function $\bphi\eff(\x,r,ry)$ for $\bt$ is
\bse
\bphi\eff(\x,r,ry)
&=&\frac{r}{\pi}\rho(y)\A\left[\U(\x,y,\bt)
-\frac{\Ep\{\U(\x,Y,\bt)\rho^2(Y)+\a(Y)\rho(Y)\mid\x\}}
{\Ep\{\rho^2(Y)+\pi/(1-\pi)\rho(Y)\mid\x\}}\right]\\
&&+\frac{1-r}{1-\pi}\frac{\A\Ep\{\U(\x,Y,\bt)\rho^2(Y)+\a(Y)\rho(Y)\mid\x\}}
{\Ep\{\rho^2(Y)+\pi/(1-\pi)\rho(Y)\mid\x\}},
\ese
where $\a(y)$ satisfies
\bse
\E\left[\frac{\Ep\{\U(\X,Y,\bt)\rho^2(Y)+\a(Y)\rho(Y)\mid\X\}}
{\Ep\{\rho^2(Y)+\pi/(1-\pi)\rho(Y)\mid\X\}}\mid y\right]
=\E\{\U(\X,y,\bt)\mid y\}.
\ese
\end{pro}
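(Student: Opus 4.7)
The strategy mirrors the proof of Proposition~\ref{pro:eff}. The efficient influence function is uniquely characterized as the element lying in both $\calF$ (the family of all influence functions for $\bt$ displayed just above the proposition) and the tangent space $\calT=\calT_\ba\oplus(\calT_\bb+\calT_\bg)$ from Proposition~\ref{pro:tangentspace} (applied componentwise to the vector case). It therefore suffices to verify these two memberships.

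For $\bphi\eff\in\calF$, I would take $\c=\0$ and
\bse
\b(\x)\equiv\frac{\Ep\{\U(\x,Y,\bt)\rho^2(Y)+\a(Y)\rho(Y)\mid\x\}}{\Ep\{\rho^2(Y)+\pi/(1-\pi)\rho(Y)\mid\x\}}.
\ese
Substituting this $\b(\x)$ and $\c$ into the template of $\calF$ recovers the stated expression for $\bphi\eff$, and the required constraint $\E\{\b(\X)\mid y\}=\E\{\U(\X,y,\bt)\mid y\}$ is precisely the integral equation displayed in the proposition as the defining condition of $\a(y)$.

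For $\bphi\eff\in\calT$, I would exhibit the decomposition
\bse
\bphi\eff = r\{\a_1(y)+\a_2(\x,y)\} + (1-r)[\Eq\{\a_2(\x,Y)\mid\x\}+\Eq\{\a_3(Y)\mid\x\}],
\ese
with $\a_1(y)\equiv\0$ and $\a_2(\x,y)\equiv\pi^{-1}\rho(y)\A\{\U(\x,y,\bt)-\b(\x)\}$. Placing $r\a_2+(1-r)\Eq\{\a_2\mid\x\}$ into $\calT_\bb$ demands $\E\{\a_2(\X,y)\mid y\}=\0$, which reduces to the already-verified $\E\{\b(\X)\mid y\}=\E\{\U(\X,y,\bt)\mid y\}$.

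The main obstacle is the $\calT_\bg$ piece: I must produce $\a_3(y)$ with $\Eq\{\a_3(Y)\}=\0$ for which $\Eq\{\a_3(Y)\mid\x\}=(1-\pi)^{-1}\A\b(\x)-\Eq\{\a_2(\x,Y)\mid\x\}$. Using the label-shift identity $\Eq\{h(\X,Y)\mid\x\}=\Ep\{h(\X,Y)\rho(Y)\mid\x\}/\Ep\{\rho(Y)\mid\x\}$, one can rewrite the right-hand side entirely in terms of $\Ep$-conditional-on-$\x$ expressions, which should allow an explicit $\a_3(y)$ (a vector analog of the scalar $a_3(y)=\pi^{-1}\{a(y)-y\rho(y)\}-(1-\pi)^{-1}\theta$) to be read off. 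Verification of $\Eq\{\a_3(Y)\}=\0$ will then reduce to iterating expectations and invoking both the defining integral equation for $\a(y)$ and $\Eq\{\U(\X,Y,\bt)\}=\0$, paralleling the matching calculation at the end of the proof of Proposition~\ref{pro:eff}. Uniqueness of the efficient influence function in $\calF\cap\calT$ then identifies the constructed element as the desired $\bphi\eff$.
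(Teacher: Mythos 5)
Your proposal follows essentially the same route as the paper's proof: verify $\bphi\eff\in\calF$ with $\c=\0$ and the stated $\b(\x)$, then exhibit the tangent-space decomposition with $\a_1(y)=\0$ and $\a_2(\x,y)=\pi^{-1}\rho(y)\A\{\U(\x,y,\bt)-\b(\x)\}$. The only piece you leave implicit, the explicit $\a_3$, works out to the simple form $\a_3(y)=\pi^{-1}\A\a(y)$ --- simpler than the direct vector analog of the scalar $a_3(y)$ you anticipated, because in this proposition $\a(y)$ enters the numerator of $\b(\x)$ as $\U\rho^2+\a\rho$ rather than as $a\rho$ --- and your described verification of $\Eq\{\a_3(Y)\}=\0$ (iterated expectations, the defining integral equation for $\a$, and $\Eq\{\U(\X,Y,\bt)\}=\0$) is exactly the paper's closing computation.
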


The efficient estimator 
can be obtained
by solving the estimating equation $\sumi\bphi\eff(\x_i,r_i,r_iy_i)=\0$,
where the efficient influence function $\bphi\eff(\x,r,ry)$ is
given in Proposition \ref{pro:eff2}.
However, constructing $\bphi\eff(\x,r,ry)$ requires
two possibly unknown functions $\pyx(y,\x)$ and $\rho(y)$.
We will consider replacing the two functions
by working models $\pyx^\star(y,\x)$ and $\rho^*(y)$.

\subsubsection{Proposed Estimator $\wh\bt$: Doubly Flexible in $\rho^*(y)$ and $\pyx^\star(y,\x)$}
Under the adopted working models $\pyx^\star(y,\x)$ and $\rho^*(y)$,
we obtain an estimator $\wh\bt_t$
by solving the estimating equation $\sumi\bphi\eff^{*\star}(\x_i,r_i,r_iy_i)=\0$.
Given that $\A$ is invertible, this is equivalent to solving
\be\label{eq:thetahat2}
\sumi\left[\frac{r_i}{\pi}\rho^*(y_i)\{\U(\x_i,y_i,\bt)-\b^{*\star}(\x_i,\bt)\}
+\frac{1-r_i}{1-\pi}\b^{*\star}(\x_i,\bt)\right]=\0,
\ee
where
\bse
\b^{*\star}(\x,\bt)\equiv
\frac{\Ep^\star\{\U(\x,Y,\bt)\rho^{*2}(Y)+\a^{*\star}(Y,\bt)\rho^*(Y)\mid\x\}}
{\Ep^\star\{\rho^{*2}(Y)+\pi/(1-\pi)\rho^*(Y)\mid\x\}},
\ese
and $\a^{*\star}(y,\bt)$ is a solution for
\be\label{eq:a*2}
\E\left[\frac{\Ep^\star\{\U(\X,Y,\bt)\rho^{*2}(Y)+\a^{*\star}(Y,\bt)\rho^*(Y)\mid\X\}}
{\Ep^\star\{\rho^{*2}(Y)+\pi/(1-\pi)\rho^*(Y)\mid\X\}}\mid y\right]
=\E\{\U(\X,y,\bt)\mid y\}.
\ee
Our analysis shows that $\wh\bt_t$ is still consistent for $\bt$ under
suitable conditions.
We state this result as a proposition below, and provide its proof in
Section \ref{sec:consistency2proof}.

\begin{pro}\label{pro:consistency2}
Assume
$\E\{\partial\bphi\eff^{*\star}(\X,R,RY,\bt)/\partial\bt\trans\}$ is
invertible,
$\bt\in\bT$,  where $\bT$ is compact. Assume
$\E\{\sup_{\bt\in\boldsymbol{\Theta}}\|\bphi\eff^{*\star}(\X,R,RY,\bt)\|_2\}<\infty$.
Then $\wh\bt_t$ is consistent for $\bt$.
\end{pro}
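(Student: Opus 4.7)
The plan is to treat $\wh\bt_t$ as a standard Z-estimator and apply a classical consistency theorem (e.g.\ Theorem~5.9 of van der Vaart). Two ingredients are needed: (i) the population estimating function $\M(\bt)\equiv\E\{\bphi\eff^{*\star}(\X,R,RY,\bt)\}$ vanishes at the true $\bt$; and (ii) the empirical version converges uniformly to $\M$ on $\bT$, with the true $\bt$ a well-separated zero.

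For (i), factoring out the invertible $\A$, one must show
\bse
\Ep[\rho^*(Y)\{\U(\X,Y,\bt)-\b^{*\star}(\X,\bt)\}]+\Eq\{\b^{*\star}(\X,\bt)\}=\0.
\ese
The defining integral equation~\eqref{eq:a*2} rewrites as $\E\{\b^{*\star}(\X,\bt)\mid y\}=\E\{\U(\X,y,\bt)\mid y\}$. Under label shift, the conditional density of $\X$ given $Y=y$ under $\mp$, $\mq$, and the mixture $\E$ all coincide with $g(\x,y)$, so this identity holds equally when the outer expectation is $\Ep$ or $\Eq$. Iterated expectation then gives $\Ep[\rho^*(Y)\b^{*\star}(\X,\bt)]=\Ep[\rho^*(Y)\U(\X,Y,\bt)]$, which cancels the first summand, while $\Eq\{\b^{*\star}(\X,\bt)\}=\Eq\{\U(\X,Y,\bt)\}=\0$ at the true $\bt$ by the defining moment condition. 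Thus $\M(\bt)=\0$.

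For (ii), compactness of $\bT$, continuity of $\bphi\eff^{*\star}$ in $\bt$, and the domination hypothesis $\E\{\sup_{\bt\in\bT}\|\bphi\eff^{*\star}(\X,R,RY,\bt)\|_2\}<\infty$ yield a uniform law of large numbers for $\M_n(\bt)\equiv n^{-1}\sumi\bphi\eff^{*\star}(\x_i,r_i,r_iy_i,\bt)$. Invertibility of the Jacobian $\E\{\partial\bphi\eff^{*\star}/\partial\bt\trans\}$ at the true $\bt$ gives local identifiability, and combined with compactness and continuity makes the true $\bt$ a well-separated zero of $\M$; the standard Z-estimator argument then delivers $\wh\bt_t\xrightarrow{P}\bt$.

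The only substantive step is (i), and the key insight is why $\M(\bt)$ vanishes despite both $\rho^*$ and $\pyx^\star$ being misspecified. This is the double-flexibility mechanism at work: equation~\eqref{eq:a*2} for $\a^{*\star}$ is engineered precisely so that $\b^{*\star}(\X,\bt)$ matches the label-shift-invariant conditional mean $\E\{\U(\X,y,\bt)\mid y\}$, and this matching is exactly what produces both cancellations above. The computation mirrors, at the general $\bt$ level, the one carried out for $\theta=\Eq(Y)$ in Proposition~\ref{pro:consistency}.
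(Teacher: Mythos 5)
Your proposal is correct and follows essentially the same route as the paper: both reduce the problem to verifying $\E\{\bphi\eff^{*\star}(\X,R,RY,\bt)\}=\0$ at the true $\bt$ via the conditional-mean matching $\E\{\b^{*\star}(\X,\bt)\mid y\}=\E\{\U(\X,y,\bt)\mid y\}$ implied by~\eqref{eq:a*2} together with the label-shift invariance of $g(\x,y)$, and then invoke a standard Z-estimator consistency theorem (the paper cites Theorem~2.6 of Newey--McFadden where you cite van der Vaart, an immaterial difference). Your two cancellations are exactly the paper's display $\A^{*\star}\Ep[\rho^*(Y)\E\{\U-\b^{*\star}\mid Y\}]+\A^{*\star}\Eq[\E\{\b^{*\star}\mid Y\}]=\A^{*\star}\Eq\{\U(\X,Y,\bt)\}=\0$.
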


Practically, we may choose to adopt a parametric working model
$\pyx^\star(y,\x,\bzeta)$
instead of a fixed function $\pyx^\star(y,\x)$.
In such construction,
a natural way will be to find an estimator for $\bzeta$, say $\wh\bzeta$,
using samples drawn from the population $\cal P$,
and use the estimated model $\pyx^\star(y,\x,\wh\bzeta)$
to form the estimating equation \eqref{eq:thetahat2}.
We will analyze this strategy in the sequel.

In addition, constructing \eqref{eq:thetahat2} requires $g(\x,y)$
since $\a^{*\star}(y,\bt)$ is obtained by solving \eqref{eq:a*2}.
To overcome this issue,
we incorporate nonparametric estimation. 
Note that directly estimating $g(\x,y)$
is subject to the curse of dimensionality since $\x$ is possibly high-dimensional.
Hence, instead, we estimate the both sides of \eqref{eq:a*2}
using nonparametric regression.
In summary, we approximate \eqref{eq:a*2}
using a kernel-based estimator of $\E(\cdot\mid y)$ as
\be\label{eq:a*hat2}
&&\int\a^{*\star}(t,\bt)\rho^*(t)
\sumi\frac{\pyx^\star(t,\x_i)r_iK_h(y-y_i)}
{\Ep^\star\{\rho^{*2}(Y)+\pi/(1-\pi)\rho^*(Y)\mid\x_i\}}dt\n\\
&=&\sumi\left[\U(\x_i,y,\bt)-\frac{\Ep^\star\{\U(\x_i,Y,\bt)\rho^{*2}(Y)\mid\x_i\}}
{\Ep^\star\{\rho^{*2}(Y)+\pi/(1-\pi)\rho^*(Y)\mid\x_i\}}\right]r_iK_h(y-y_i),
\ee
and solve this integral equation with respect to $\a^{*\star}(\cdot,\bt)$
to obtain $\wh\a^{*\star}(\cdot,\bt)$.

We summarize the estimation procedure in Algorithm \ref{algdouble2}.

\begin{algorithm}[H]
    \caption{Proposed Estimator $\wh\bt$: Doubly Flexible in $\rho^*(y)$ and $\pyx^\star(y,\x)$}\label{algdouble2}
 	\begin{algorithmic}
        \STATE \textbf{Input}: data from population $\mp$: $(y_i, \x_i, r_i=1)$, $i=1,\ldots,n_1$, data from population $\mq$: $(\x_j, r_j=0)$, $j=n_1+1,\ldots,n$, and value $\pi=n_1/n$.
            \STATE \textbf{do}
            \STATE (a) adopt a working model for $\rho(y)$, denoted as $\rho^*(y)$;
            \STATE (b) adopt a working model for $\pyx(y,\x)$, denoted as $\pyx^\star(y,\x)$ or $\pyx^\star(y,\x,\wh\bzeta)$;
            \STATE (c) compute
    $ w_i=[\E_p^\star\{\rho^{*2}(Y)+\pi/(1-\pi)\rho^*(Y)\mid\x_i\}]^{-1}$ for $i=1,\dots,n$;
            \STATE (d) obtain $\wh\a^{*\star}(\cdot,\bt)$ by solving the integral equation (\ref{eq:a*hat2});
            \STATE (e) compute
    $\wh\b^{*\star}(\x_i,\bt)
    =w_i\Ep^\star\{\U(\x_i,Y,\bt)\rho^{*2}(Y)+\wh\a^{*\star}(Y,\bt)\rho^*(Y)\mid\x_i\}$ for $i=1,\dots,n$;
            \STATE (f) obtain $\wh\bt$ by solving the estimating equation
    \bse
    \sumi\left[\frac{r_i}{\pi}\rho^*(y_i)\{\U(\x_i,y_i,\bt)-\wh\b^{*\star}(\x_i,\bt)\}
    +\frac{1-r_i}{1-\pi}\wh\b^{*\star}(\x_i,\bt)\right]=\0.
    \ese
 		\STATE \textbf{Output}: $\wh\bt$.
    \end{algorithmic}
\end{algorithm}

We now study the theoretical properties of $\wh\bt$. The main
technical challenge arises from the gap between the solutions for
\eqref{eq:a*2} and \eqref{eq:a*hat2}. For quantifying the gap,
we introduce some notations. Let
\bse
\Ep^\star\{\a(\x,Y,\bt)\mid\x,\wh\bzeta\}
&\equiv&\int \a(\x,y,\bt)\pyx^\star(y,\x,\wh\bzeta)dy,\\
u^{*\star}(t,y)&\equiv&\py(y)\int\frac{\rho^*(t)\pyx^\star(t,\x,\bzeta)}
{\Ep^\star\{\rho^{*2}(Y)\mid\x\}+\pi/(1-\pi)\Ep^\star\{\rho^*(Y)\mid\x\}}g(\x,y)d\x,\\
\calL^{*\star}(\a)(y,\bt)&\equiv&\py(y)\E\left[\frac{\Ep^\star\{\a(Y,\bt)\rho^*(Y)\mid\X\}}
{\Ep^\star\{\rho^{*2}(Y)\mid\X\}+\pi/(1-\pi)\Ep^\star\{\rho^*(Y)\mid\X\}}\mid y\right]
=\int \a(t,\bt)u^{*\star}(t,y)dt,\\
\calL_{i,h}^{*\star}(\a)(y,\bt)&\equiv&r_i K_h(y-y_i)\frac{\Ep^\star\{\a(Y,\bt)\rho^*(Y)\mid\x_i\}}
{\Ep^\star\{\rho^{*2}(Y)\mid\x_i\}+\pi/(1-\pi)\Ep^\star\{\rho^*(Y)\mid\x_i\}},\\
\wh\calL^{*\star}(\a)(y,\bt)&\equiv&
n_1^{-1}\sumi r_i K_h(y-y_i)\frac{\Ep^\star\{\a(Y,\bt)\rho^*(Y)\mid\x_i,\wh\bzeta\}}
{\Ep^\star\{\rho^{*2}(Y)\mid\x_i,\wh\bzeta\}+\pi/(1-\pi)\Ep^\star\{\rho^*(Y)\mid\x_i,\wh\bzeta\}},\\
\v^{*\star}(y,\bt)&\equiv&\py(y)\E\left[\U(\X,y,\bt)
-\frac{\Ep^\star\{\U(\X,Y,\bt)\rho^{*2}(Y)\mid\X\}}
{\Ep^\star\{\rho^{*2}(Y)+\pi/(1-\pi)\rho^*(Y)\mid\X\}}\mid y\right],\\
\v_{i,h}^{*\star}(y,\bt)&\equiv&r_iK_h(y-y_i)\left[\U(\x_i,y,\bt)
-\frac{\Ep^\star\{\U(\x_i,Y,\bt)\rho^{*2}(Y)\mid\x_i\}}
{\Ep^\star\{\rho^{*2}(Y)+\pi/(1-\pi)\rho^*(Y)\mid\x_i\}}\right],\\
\wh\v^{*\star}(y,\bt)&\equiv&n_1^{-1}\sumi r_iK_h(y-y_i)\left[\U(\x_i,y,\bt)
-\frac{\Ep^\star\{\U(\x_i,Y,\bt)\rho^{*2}(Y)\mid\x_i,\wh\bzeta\}}
{\Ep^\star\{\rho^{*2}(Y)+\pi/(1-\pi)\rho^*(Y)\mid\x_i,\wh\bzeta\}}\right].
\ese
By the definitions of the solutions for \eqref{eq:a*2} and \eqref{eq:a*hat2},
$\a^{*\star}(y,\bt)$ and $\wh\a^{*\star}(y,\bt)$ satisfy
$\calL^{*\star}(\a^{*\star})(y,\bt)=\v^{*\star}(y,\bt)$ and
$\wh\calL^{*\star}(\wh\a^{*\star})(y,\bt)=\wh\v^{*\star}(y,\bt)$.

We list a set of regularity conditions.
\begin{enumerate}[label=(B\arabic*),ref=(B\arabic*),start=1]
    \item\label{con:0existence}
    $\A^{*\star}\equiv[\Eq^{*\star}\{\partial\U(\X,Y,\bt)/\partial\bt\trans\}]^{-1}$ and $\E\{\partial\bphi\eff^{*\star}(\X,R,RY,\bt)/\partial\bt\trans\}$
    are invertible,
    $\bt\in\boldsymbol{\Theta}$ where $\boldsymbol{\Theta}$ is compact,
    and $\E\{\sup_{\bt\in\boldsymbol{\Theta}}\|\bphi\eff^{*\star}(\X,R,RY,\bt)\|_2\}<\infty$.
    $\U(\x,y,\bt)$ is twice differentiable with respect to $y$
    and its derivative is bounded.
    \item\label{con:1complete2}
    $\pyx^\star(y,\x)$ is complete.
    \item\label{con:2rho2}
    $\rho^*(y)>\delta$ for all $y$ on the support of $\py(y)$,
    where $\delta$ is a constant such that $\delta>0$.
    $\rho^*(y)$ is twice differentiable and its derivative is bounded.
    \item\label{con:3bounded2}
    The function $u^{*\star}(t,y)$ is bounded
    and has bounded derivatives with respect to $t$ and $y$ on its support.
    $\a^{*\star}(y,\bt)$ in \eqref{eq:a*2} is bounded.
    \item\label{con:4compact2}
    The support sets of $g(\x,y),\py(y),\rho^*(y)$ are compact.
    \item\label{con:5kernel2}
    The kernel function $K(\cdot)\ge0$ is
    symmetric, bounded, and twice differentiable with bounded first derivative.
    It has support on $(-1,1)$ and satisfies $\int_{-1}^1K(t)dt=1$.
    \item\label{con:6bandwidth2}
    The bandwidth $h$ satisfies $n_1(\log n_1)^{-4}h^2\to\infty$ and
    $n^2n_1^{-1}h^4\to 0$.
\end{enumerate}

\begin{lemma}\label{lem:l2}
For $\a(y,\bt)=\{a_1(y,\bt),\dots,a_d(y,\bt)\}\trans$,
let $\|\a\|_\infty\equiv\max_{k=1\dots d}\|a_k\|_\infty$.
Under the regularity conditions \ref{con:0existence}-\ref{con:4compact2},
the linear operator $\calL^{*\star}: L^\infty(R^d)\to L^\infty(R^d)$ is invertible.
In addition,
there exist constants $c_1, c_2$ such that $0<c_1,c_2<\infty$ and
for all $\a(y)\in L^\infty(R^d)$,
\begin{enumerate}[label=(\roman*)]
    \item $c_1\|\a\|_\infty\leq\|\calL^{*\star}(\a)\|_\infty\leq c_2\|\a\|_\infty$,
    \item $\|{\cal L}^{*\star-1}(\a)\|_\infty\leq c_1^{-1}\|\a\|_\infty$.
\end{enumerate}
\end{lemma}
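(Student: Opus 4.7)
The plan is to reduce the vector-valued statement to the scalar version of Lemma~\ref{lem:l} by noting that $\calL^{*\star}$ acts componentwise on vector-valued inputs. Indeed, from the alternative representation $\calL^{*\star}(\a)(y,\bt)=\int \a(t,\bt)\,u^{*\star}(t,y)\,dt$ given just before the statement, the scalar kernel $u^{*\star}(t,y)$ does not depend on the coordinate index $k$, so
\bse
[\calL^{*\star}(\a)]_k(y,\bt) \;=\; \int a_k(t,\bt)\,u^{*\star}(t,y)\,dt \qquad (k=1,\dots,d).
\ese
For each fixed $\bt$, the map $a_k(\cdot,\bt)\mapsto[\calL^{*\star}(\a)]_k(\cdot,\bt)$ is precisely the scalar linear operator already analyzed in Lemma~\ref{lem:l}, whose hypotheses \ref{con:1complete}--\ref{con:4compact} are subsumed by the present \ref{con:1complete2}--\ref{con:4compact2}.

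First I would establish the two-sided norm bound (i). By Lemma~\ref{lem:l}, there exist finite positive constants $c_1\le c_2$, depending only on the common kernel $u^{*\star}$ and hence independent of $k$, such that for every bounded scalar function $a_k$,
\bse
c_1\|a_k\|_\infty \;\le\; \|\calL^{*\star}(a_k)\|_\infty \;\le\; c_2\|a_k\|_\infty.
\ese
Taking the maximum over $k\in\{1,\dots,d\}$ preserves both inequalities and, via the definition $\|\a\|_\infty=\max_k\|a_k\|_\infty$, delivers $c_1\|\a\|_\infty \le \|\calL^{*\star}(\a)\|_\infty \le c_2\|\a\|_\infty$.

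Next I would handle invertibility. Injectivity follows because $\calL^{*\star}(\a)=\0$ forces each scalar equation $\calL^{*\star}(a_k)=0$, so $a_k\equiv 0$ by the scalar invertibility of Lemma~\ref{lem:l}. Surjectivity is obtained by coordinatewise inversion: given $\v=(v_1,\dots,v_d)\trans\in L^\infty$, set $a_k\equiv\calL^{*\star-1}(v_k)$ using the scalar inverse and assemble $\a=(a_1,\dots,a_d)\trans$; then $\calL^{*\star}(\a)=\v$, and aggregating the scalar bound $\|a_k\|_\infty\le c_1^{-1}\|v_k\|_\infty$ via the maximum gives $\|\calL^{*\star-1}(\v)\|_\infty\le c_1^{-1}\|\v\|_\infty$, which is (ii).

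I do not anticipate a substantive obstacle: the vector operator decouples completely across output coordinates, so all the nontrivial analytic content---completeness-based injectivity, the bounded-inverse theorem, and the uniform upper bound from boundedness of $u^{*\star}$ and compactness of supports---is inherited verbatim from the scalar proof in Section~\ref{sec:ldoubleproof}. The only conceptual check is that the scalar operator in Lemma~\ref{lem:l} and the one defined here share the identical kernel $u^{*\star}$ under the current hypotheses, which is immediate from the two definitions.
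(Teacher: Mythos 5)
Your proof is correct, but it takes a genuinely different route from the paper's. The paper's proof in Section~\ref{sec:l2proof} simply re-runs the scalar argument of Lemma~\ref{lem:l} at the vector level: the upper bound comes from Conditions~\ref{con:3bounded2}--\ref{con:4compact2}, the lower bound is reduced to invertibility via the bounded inverse theorem, and invertibility is proved by contradiction --- two distinct solutions $\a_1\neq\a_2$ of $\calL^{*\star}(\a)(y,\bt)=\v^{*\star}(y,\bt)$ would, through $\A^{*\star}$, yield two distinct versions of the vector efficient score $\bphi\eff^{*\star}$, contradicting its uniqueness. You instead observe that the kernel $u^{*\star}(t,y)$ is scalar and common to all output coordinates, so $\calL^{*\star}$ decouples into $d$ independent copies of the scalar operator already treated in Lemma~\ref{lem:l}, and all three claims (two-sided bound, invertibility, bound on the inverse) follow by taking the maximum over $k$, which preserves the inequalities under the norm $\|\a\|_\infty=\max_k\|a_k\|_\infty$. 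Your reduction is shorter and makes transparent that the vector lemma contains no new analytic content beyond the scalar one; the paper's version is self-contained and keeps the injectivity argument phrased in terms of the uniqueness of the efficient score for the vector parameter $\bt$, which is the object actually used downstream. One pedantic caveat: Condition~\ref{con:3boundedstar} of Lemma~\ref{lem:l} also requires boundedness of the particular solution $a^{*\star}(y)$ of \eqref{eq:a*star}, which is not literally implied by \ref{con:0existence}--\ref{con:4compact2}; since that part of the hypothesis is never used in the paper's proof of the operator bounds and invertibility, your claim that the scalar hypotheses are subsumed is harmless, but a careful write-up should note that only the boundedness of $u^{*\star}$, the completeness of $\pyx^\star$, the lower bound on $\rho^*$, and compactness of supports are actually needed from Lemma~\ref{lem:l}.
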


The proof of Lemma \ref{lem:l2} is in Section \ref{sec:l2proof}.
Below, we present the asymptotic normality of $\wh\bt$, and provide
its proof in Section \ref{sec:thetanewproof}.

\begin{Th}\label{th:thetanew}
Assume $\wh\bzeta$ satisfies $\|\wh\bzeta-\bzeta\|_2=O_p(n_1^{-1/2})$
and $\Ep^\star\{\|\S_\bzeta^\star(Y,\x,\bzeta)\|_2\mid\x\}$ is bounded,
where $\S_\bzeta^\star(y,\x,\bzeta)\equiv\partial\log\pyx^\star(y,\x,\bzeta)/\partial\bzeta$.
For any choice of $\pyx^\star(y,\x,\bzeta)$ and $\rho^*(y)$,
under Conditions \ref{con:0existence}-\ref{con:6bandwidth2},
\bse
\sqrt{n_1}(\wh\bt-\bt)\to N\left\{\0,\A\A^{*\star-1}\bSigma(\A\A^{*\star-1})\trans\right\}
\ese
in distribution as $n_1\to\infty$, where
\be
\bSigma&\equiv&\var\left[\sqrt{\pi}\bphi\eff^{*\star}(\X,R,RY,\bt)
+\frac{R}{\sqrt{\pi}}\A^{*\star}
\left\{\b^{*\star}(\X,\bt)-\U(\X,Y,\bt)\right\}\{\rho^*(Y)-\rho(Y)\}\right].\label{eq:bSigma}
\ee
\end{Th}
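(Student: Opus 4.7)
The plan is to follow the same blueprint as the proof of Theorem~\ref{th:theta} and generalize it to a vector-valued Z-estimation problem. By Proposition~\ref{pro:consistency2}, $\wh\bt\xrightarrow{P}\bt$, so a Taylor expansion of the estimating equation in step (f) of Algorithm~\ref{algdouble2} about the true $\bt$ gives
$$\sqrt{n_1}(\wh\bt-\bt)=-\widehat\M^{-1}\cdot\sqrt{n_1}\,n^{-1}\sumi\widehat\bpsi_i(\bt)+o_p(1),$$
where $\widehat\bpsi_i(\bt)$ is the $i$th summand in \eqref{eq:thetahat2} with $\wh\a^{*\star}$ and $\wh\bzeta$ plugged in, and the Jacobian $\widehat\M$ converges in probability to a limit whose inverse equals $\A\A^{*\star-1}$ under Condition~\ref{con:0existence}. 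It therefore suffices to derive an asymptotically linear representation of $n^{-1/2}\sumi\widehat\bpsi_i(\bt)$ and apply the multivariate CLT.

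Next, I would quantify the gap $\wh\a^{*\star}(y,\bt)-\a^{*\star}(y,\bt)$ using Lemma~\ref{lem:l2}, which provides a bounded inverse for $\calL^{*\star}$. A first-order Taylor expansion in $\bzeta$ combined with $\|\wh\bzeta-\bzeta\|_2=O_p(n_1^{-1/2})$ and the assumed bound on $\Ep^\star\{\|\S_\bzeta^\star\|_2\mid\x\}$ yields $\wh\calL^{*\star}(\a^{*\star})(y,\bt)=n_1^{-1}\sumi\calL_{i,h}^{*\star}(\a^{*\star})(y,\bt)+o_p(n_1^{-1/2})$ uniformly in $y$, exactly as in \eqref{eq:lhatastar}. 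A Neumann-series argument based on $\|(\wh\calL^{*\star}-\calL^{*\star})(\a)\|_\infty=o_p(n_1^{-1/4})$ and $\|\wh\v^{*\star}-\v^{*\star}\|_\infty=o_p(n_1^{-1/4})$ then produces
$$\wh\a^{*\star}(y,\bt)=\a^{*\star}(y,\bt)+n_1^{-1}\sumi\g_{i,h}(y,\bt)+o_p(n_1^{-1/2}),$$
uniformly in $y$, where $\g_{i,h}(\cdot,\bt)\equiv\calL^{*\star-1}\{\v_{i,h}^{*\star}(\cdot,\bt)-\calL_{i,h}^{*\star}(\a^{*\star})(\cdot,\bt)\}$, and Conditions~\ref{con:4compact2}--\ref{con:5kernel2} give $\|E(\G_{i,h})\|_\infty=O(h^2)$.

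Third, substituting this expansion into $\wh\b^{*\star}(\x,\bt)$ and then into $n^{-1/2}\sumi\widehat\bpsi_i(\bt)$ produces a double sum over $(i,j)$ that I would handle by the Hoeffding projection for U-statistics, mirroring the treatment of $T_1$ in Section~\ref{sec:doubleproof}. The balancing identity $\E\{\b^{*\star}(\X,\bt)\mid y\}=\E\{\U(\X,y,\bt)\mid y\}$, which is the defining property of $\a^{*\star}$ in \eqref{eq:a*2}, kills the degenerate piece and simplifies the dominant linear piece via $\calL^{*\star}(\g_{j,h})=\v_{j,h}^{*\star}-\calL_{j,h}^{*\star}(\a^{*\star})$ to the kernel-smoothed term
$$n_1^{-1/2}\sumj r_j\int K_h(y-y_j)\{\U(\x_j,y,\bt)-\b^{*\star}(\x_j,\bt)\}\{\rho^*(y)-\rho(y)\}\,dy.$$
A kernel-bias Taylor expansion under Conditions~\ref{con:5kernel2}--\ref{con:6bandwidth2} replaces $K_h$ by a point mass at $y_j$, recovering the $\{\rho^*(Y)-\rho(Y)\}$ correction appearing in \eqref{eq:bSigma}. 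The contribution from $\wh\bzeta-\bzeta$ vanishes analogously because $\partial\E[\{R\rho^*(Y)/\pi-(1-R)/(1-\pi)\}\b^{*\star}(\X,\bt)]/\partial\bzeta\trans=\0$, again a consequence of the balancing identity. Collecting everything yields an iid sum whose variance equals $\bSigma$, and the sandwich factor $\A\A^{*\star-1}$ from the Jacobian produces the stated limit.

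The main obstacle is the uniform-in-$(y,\bt)$ control of the ill-posed inverse $\calL^{*\star-1}$ while simultaneously absorbing the parametric nuisance error from $\wh\bzeta$ and the kernel bias $O(h^2)$: one has to push three separate $o_p$ terms through both the operator inversion and the U-statistic projection without losing $\sqrt{n_1}$-rates. Once the scalar proof in Section~\ref{sec:doubleproof} is augmented with the vector-matrix bookkeeping that $\U(\x,y,\bt)$ and $\b^{*\star}(\x,\bt)$ require, and Lemma~\ref{lem:l2} supplies the requisite operator bounds on the larger space $L^\infty(\mR^d)$, the remainder of the argument is mechanical.
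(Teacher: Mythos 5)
Your proposal follows essentially the same route as the paper's proof in Section~\ref{sec:thetanewproof}: the $\bzeta$-Taylor expansion and Neumann-series inversion of $\wh\calL^{*\star}$ via Lemma~\ref{lem:l2}, the U-statistic (Hoeffding) projection of the resulting double sum, the kernel-bias expansion producing the $\{\rho^*(Y)-\rho(Y)\}$ correction, the vanishing of the $\wh\bzeta$ contribution through the balancing identity, and the sandwich Jacobian $\A^{*\star}\A^{-1}$. One small caveat: Proposition~\ref{pro:consistency2} only gives consistency of the idealized $\wh\bt_t$ (exact $\a^{*\star}$, no estimated $\bzeta$ or kernel smoothing), so consistency of the feasible $\wh\bt$ must be re-established --- the paper does this inside the proof by showing the feasible estimating equation converges uniformly in $\bt$ to $\E\{\bphi\eff^{*\star}(\X,R,RY,\bt)\}$ and invoking Theorem 2.1 of Newey and McFadden, a step your uniform expansion already supplies but which you should state explicitly rather than citing Proposition~\ref{pro:consistency2}.
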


\begin{Rem}\label{rem:thetanewpar}
In Theorem \ref{th:thetanew},
$\|\wh\bzeta-\bzeta\|_2=O_p(n_1^{-1/2})$ is the only assumption
imposed on $\wh\bzeta$.
That is, the result in Theorem \ref{th:thetanew} holds
as long as $\wh\bzeta$ is $\sqrt{n_1}$-consistent for $\bzeta$,
regardless of the asymptotic variance of $\sqrt{n_1}(\wh\bzeta-\bzeta)$.
This is easily achievable
by constructing a standard MLE or moment based estimator for $\bzeta$
in $\pyx^\star(y,\x,\bzeta)$,
based on the $n_1$ observations from population $\cal P$.
\end{Rem}

In addition to the above remark,
we can see from Theorem \ref{th:thetanew} that
$\wh\bt$ is indeed the efficient estimator for $\bt$
when the working models $\pyx^\star(y,\x,\bzeta)$ and $\rho^*(y)$ are
correctly specified,
and $\wh\bzeta$ satisfies $\|\wh\bzeta-\bzeta\|_2=O_p(n_1^{-1/2})$.
We formally state this result as Corollary \ref{th:thetaneweff}.

\begin{Cor}\label{th:thetaneweff}
Assume $\wh\bzeta$ satisfies $\|\wh\bzeta-\bzeta\|_2=O_p(n_1^{-1/2})$
and $\Ep^\star\{\|\S_\bzeta^\star(Y,\x,\bzeta)\|_2\mid\x\}$ is bounded.
If $\pyx^\star(y,\x,\bzeta)=\pyx(y,\x)$ and $\rho^*(y)=\rho(y)$,
under Conditions \ref{con:0existence}-\ref{con:6bandwidth2},
\bse
\sqrt{n_1}(\wh\bt\eff-\bt)\to N\left[0,\var\left\{\sqrt{\pi}\bphi\eff(\X,R,RY,\bt)\right\}\right]
\ese
in distribution as $n_1\to\infty$.
\end{Cor}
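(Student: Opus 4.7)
The plan is to obtain Corollary \ref{th:thetaneweff} as a direct specialization of Theorem \ref{th:thetanew}, since under the correct specification assumptions $\pyx^\star(y,\x,\bzeta)=\pyx(y,\x)$ and $\rho^*(y)=\rho(y)$, every ``starred'' quantity appearing in Theorem \ref{th:thetanew} collapses onto its true analog. First I would verify that Conditions \ref{con:0existence}--\ref{con:6bandwidth2} are inherited by the true models: completeness of $\pyx$, positivity and smoothness of $\rho$, boundedness of the corresponding kernel $u(t,y)$ and solution $\a(y,\bt)$ of the true integral equation, and compactness of supports are precisely the conditions one expects in the semiparametric efficiency analysis. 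Under these, Proposition \ref{pro:eff2} provides the efficient influence function $\bphi\eff(\x,r,ry,\bt)$, and its starred counterpart satisfies $\bphi\eff^{*\star}=\bphi\eff$.

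Next I would substitute into the conclusion of Theorem \ref{th:thetanew}. The matrix $\A^{*\star}=[\Eq^{*\star}\{\partial\U/\partial\bt\trans\}]^{-1}$ reduces to $\A=[\Eq\{\partial\U/\partial\bt\trans\}]^{-1}$ since $\pyx^\star=\pyx$ and $\rho^*=\rho$ imply $\qy^*=\qy$; consequently $\A\A^{*\star-1}=\I$. Moreover $\b^{*\star}(\X,\bt)$ is defined from the solution $\a^{*\star}(y,\bt)$ of the integral equation (\ref{eq:a*2}), which under correct specification coincides with the analog $\a(y,\bt)$ and $\b(\X,\bt)$ appearing in the efficient influence function of Proposition \ref{pro:eff2}.

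Finally, the key simplification is that the correction term in (\ref{eq:bSigma}) involving $\{\rho^*(Y)-\rho(Y)\}$ vanishes identically. Hence
\begin{equation*}
\bSigma=\var\left\{\sqrt{\pi}\,\bphi\eff(\X,R,RY,\bt)\right\},
\end{equation*}
and the asymptotic covariance $\A\A^{*\star-1}\bSigma(\A\A^{*\star-1})\trans=\bSigma$ delivers the claimed limit distribution. The only delicate point, which does not truly pose an obstacle, is to note that the $\sqrt{n_1}$-consistency of $\wh\bzeta$ and the boundedness of $\Ep^\star\{\|\S_\bzeta^\star\|_2\mid\x\}$ continue to imply that the $\wh\bzeta$-induced remainder in the expansion of $\wh\bt$ is $o_p(1)$, exactly as established in the proof of Theorem \ref{th:thetanew}; nothing in that argument relies on the working models being incorrect.
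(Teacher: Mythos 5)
Your proposal is correct and matches the paper's treatment: the paper presents this corollary as a direct specialization of Theorem \ref{th:thetanew}, obtained exactly as you describe by noting that under $\pyx^\star=\pyx$ and $\rho^*=\rho$ all starred quantities reduce to their true analogs, $\A\A^{*\star-1}=\I$, and the $\{\rho^*(Y)-\rho(Y)\}$ correction term in $\bSigma$ vanishes, leaving $\var\{\sqrt{\pi}\bphi\eff(\X,R,RY,\bt)\}$.
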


\subsubsection{Alternative Estimator $\wt\bt$: Singly Flexible in $\rho^*(y)$}

We now consider replacing $\Ep(\cdot\mid\x)$ in the efficient
influence function in Proposition~\ref{pro:eff2} by an arbitrary
estimator $\wh\E_p(\cdot\mid\x)$
with convergence rate faster than $n_1^{-1/4}$.
We first describe the estimation procedure in Algorithm \ref{algsingle2}.

\begin{algorithm}[H]
    \caption{Alternative Estimator $\wt\bt$: Singly Flexible in $\rho^*(y)$}\label{algsingle2}
 	\begin{algorithmic}
        \STATE \textbf{Input}: data from population $\mp$: $(y_i, \x_i, r_i=1)$, $i=1,\ldots,n_1$, data from population $\mq$: $(\x_j, r_j=0)$, $j=n_1+1,\ldots,n$, and value $\pi=n_1/n$.
            \STATE \textbf{do}
            \STATE (a) adopt a working model for $\rho(y)$, denoted as $\rho^*(y)$;
            \STATE (b) adopt a nonparametric or machine learning algorithm for estimating $\Ep(\cdot\mid\x)$, denoted as $\wh\E_p(\cdot\mid\x)$;
            \STATE (c) compute
    $ \wh w_i=[\wh\E_p\{\rho^{*2}(Y)+\pi/(1-\pi)\rho^*(Y)\mid\x_i\}]^{-1}$ for $i=1,\dots,n$;
            \STATE (d) obtain $\wh\a^*(\cdot,\bt)$ by solving the integral equation (\ref{eq:a*hat2}) with $\Ep$ replaced by $\wh\E_p$;
            \STATE (e) compute
    $\wh\b^*(\x_i,\bt)
    =\wh w_i\wh\E_p\{\U(\x_i,Y,\bt)\rho^{*2}(Y)+\wh\a^*(Y,\bt)\rho^*(Y)\mid\x_i\}$ for $i=1,\dots,n$;
            \STATE (f) obtain $\wh\bt$ by solving the estimating equation
    \bse
    \sumi\left[\frac{r_i}{\pi}\rho^*(y_i)\{\U(\x_i,y_i,\bt)-\wh\b^*(\x_i,\bt)\}
    +\frac{1-r_i}{1-\pi}\wh\b^*(\x_i,\bt)\right]=\0.
    \ese
 		\STATE \textbf{Output}: $\wt\bt$.
    \end{algorithmic}
\end{algorithm}

For establishing the theoretical properties of $\wt\bt$, we define
\bse
u^*(t,y)&\equiv&\py(y)\int\frac{\rho^*(t)\pyx(t,\x)}
{\Ep\{\rho^{*2}(Y)\mid\x\}+\pi/(1-\pi)\Ep\{\rho^*(Y)\mid\x\}}g(\x,y)d\x,\\
\calL^*(\a)(y,\bt)&\equiv&\py(y)\E\left[\frac{\Ep\{\a(Y,\bt)\rho^*(Y)\mid\X\}}
{\Ep\{\rho^{*2}(Y)\mid\X\}+\pi/(1-\pi)\Ep\{\rho^*(Y)\mid\X\}}\mid y\right]
=\int \a(t,\bt)u^*(t,y)dt,\\
\calL_{i,h}^*(\a)(y,\bt)&\equiv&r_i K_h(y-y_i)\frac{\Ep\{\a(Y,\bt)\rho^*(Y)\mid\x_i\}}
{\Ep\{\rho^{*2}(Y)\mid\x_i\}+\pi/(1-\pi)\Ep\{\rho^*(Y)\mid\x_i\}},\\
\wh\calL^*(\a)(y,\bt)&\equiv&
n_1^{-1}\sumi r_i K_h(y-y_i)\frac{\wh\E_p\{\a(Y,\bt)\rho^*(Y)\mid\x_i\}}
{\wh\E_p\{\rho^{*2}(Y)\mid\x_i\}+\pi/(1-\pi)\wh\E_p\{\rho^*(Y)\mid\x_i\}},\\
\v^*(y,\bt)&\equiv&\py(y)\E\left[\U(\X,y,\bt)
-\frac{\Ep\{\U(\X,Y,\bt)\rho^{*2}(Y)\mid\X\}}
{\Ep\{\rho^{*2}(Y)+\pi/(1-\pi)\rho^*(Y)\mid\X\}}\mid y\right],\\
\v_{i,h}^*(y,\bt)&\equiv&r_iK_h(y-y_i)\left[\U(\x_i,y,\bt)
-\frac{\Ep\{\U(\x_i,Y,\bt)\rho^{*2}(Y)\mid\x_i\}}
{\Ep\{\rho^{*2}(Y)+\pi/(1-\pi)\rho^*(Y)\mid\x_i\}}\right],\\
\wh\v^*(y,\bt)&\equiv&n_1^{-1}\sumi r_iK_h(y-y_i)\left[\U(\x_i,y,\bt)
-\frac{\wh\E_p\{\U(\x_i,Y,\bt)\rho^{*2}(Y)\mid\x_i\}}
{\wh\E_p\{\rho^{*2}(Y)+\pi/(1-\pi)\rho^*(Y)\mid\x_i\}}\right].
\ese
Then the solutions for \eqref{eq:a*2} and \eqref{eq:a*hat2},
$\a^*(y,\bt)$ and $\wh\a^*(y,\bt)$ satisfy
$\calL^*(\a^*)(y,\bt)=\v^*(y,\bt)$ and $\wh\calL^*(\wh\a^*)(y,\bt)=\wh\v^*(y,\bt)$.

\begin{Th}\label{th:thetanew2}
Assume $\wh\E_p$ satisfies $\|\wh\E_p\{\a(Y)\mid\x\}-\Ep\{\a(Y)\mid\x\}\|_\infty=o_p(n_1^{-1/4})$
for any bounded function $\a(y)$.
For any choice of $\rho^*(y)$,
under Conditions \ref{con:0existence}-\ref{con:6bandwidth2},
\bse
\sqrt{n_1}(\wt\bt-\bt)\to N\left\{\0,\A\A^{*-1}\bSigma(\A\A^{*-1})\trans\right\}
\ese
in distribution as $n_1\to\infty$, where
\bse
\bSigma&\equiv&\var\left[\sqrt{\pi}\bphi\eff^*(\X,R,RY,\bt)
+\frac{R}{\sqrt{\pi}}\A^*
\{\b^*(\X,\bt)-\U(\X,Y,\bt)\}\{\rho^*(Y)-\rho(Y)\}\right].
\ese
\end{Th}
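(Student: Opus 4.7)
The plan is to follow the structure of the proof of Theorem \ref{th:theta2} (the scalar singly flexible case), upgraded to the vector setting with the aid of Lemma \ref{lem:l2}. I would first rewrite $\wt\bt$ as the solution of $\sumi\bpsi^*(\x_i,r_i,r_iy_i,\wt\bt;\wh\b^*)=\0$ where $\bpsi^*(\x,r,ry,\bt;\b)\equiv(r/\pi)\rho^*(y)\{\U(\x,y,\bt)-\b(\x,\bt)\}+\{(1-r)/(1-\pi)\}\b(\x,\bt)$, noting $\bpsi^*(\cdot;\b^*)=\A^{*-1}\bphi\eff^*$. Taylor expanding in $\bt$ around the truth, using consistency of $\wt\bt$ (vector analogue of Proposition \ref{pro:consistency}) and computing $\E\{\partial\bpsi^*/\partial\bt\trans\}$ via label shift plus the defining identity $\E\{\b^*(\X,\bt)\mid y\}=\E\{\U(\X,y,\bt)\mid y\}$, one finds $\E\{\partial\bpsi^*/\partial\bt\trans\}=\Ep\{\rho^*(Y)\partial\U/\partial\bt\trans\}-\Ep\{\rho^*(Y)\partial\b^*/\partial\bt\trans\}+\Eq\{\partial\b^*/\partial\bt\trans\}=\A^{*-1}-\A^{*-1}+\A^{-1}=\A^{-1}$. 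Hence $\sqrt{n_1}(\wt\bt-\bt)=-\A\cdot\sqrt{n_1}n^{-1}\sumi\bpsi^*(\x_i,r_i,r_iy_i,\bt;\wh\b^*)+o_p(1)$, and the Jacobian factor $\A$ composed with the $\A^{*-1}$ intrinsic to $\bpsi^*(\cdot;\b^*)$ produces the $\A\A^{*-1}$ factor in the stated variance.

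The bulk of the work is the uniform linearization of $\wh\b^*(\x,\bt)$ around $\b^*(\x,\a^*,\bt,\Ep)$. Computing the first two Gateaux derivatives of $\b^*(\x,\a,\bt,\Ep)$ in $\Ep$, under the nonparametric rate $\|\wh\E_p\{a(Y)\mid\x\}-\Ep\{a(Y)\mid\x\}\|_\infty=o_p(n_1^{-1/4})$, gives $\b^*(\x,\a,\bt,\wh\E_p)=\b^*(\x,\a,\bt,\Ep)+(\partial\b^*/\partial\Ep)(\wh\E_p-\Ep)+o_p(n_1^{-1/2})$ uniformly in $\x$. The critical step is $\wh\calL^*(\a^*)(y,\bt)=n_1^{-1}\sumi\calL_{i,h}^*(\a^*)(y,\bt)+o_p(n_1^{-1/2})$ uniformly in $y$: the leading Gateaux remainder is a kernel-smoothed approximation to $\py(y)(\partial/\partial\Ep)\E\{\b^*(\X,\a^*,\bt,\Ep)\mid y\}(\wh\E_p-\Ep)$, but by construction the conditional mean equals $\E\{\U(\X,y,\bt)\mid y\}$, which is functionally $\Ep$-free, so its Gateaux derivative vanishes. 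Lemma \ref{lem:l2} then inverts the operator perturbation to give $\wh\a^*(y,\bt)=\a^*(y,\bt)+n_1^{-1}\sumi\g_{i,h}(y,\bt)+o_p(n_1^{-1/2})$ with $\g_{i,h}\equiv\calL^{*-1}\{\v_{i,h}^*-\calL_{i,h}^*(\a^*)\}$, and linearity of $\b^*$ in $\a$ propagates this to
\bsq
\wh\b^*(\x,\bt)=\b^*(\x,\a^*,\bt,\Ep)+n_1^{-1}\sumi\b^*(\x,\g_{i,h},\bt,\Ep)+\frac{\partial\b^*(\x,\a^*,\bt,\Ep)}{\partial\Ep}(\wh\E_p-\Ep)+o_p(n_1^{-1/2})
\esq
uniformly in $\x$.

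Substituting this expansion into $\sqrt{n_1}n^{-1}\sumi\bpsi^*(\cdot;\wh\b^*)$ produces two correction terms $T_1$ and $T_2$ to be handled exactly as in the scalar case. For $T_1$, the Hoeffding projection onto $(\x_j,r_j,r_jy_j)$ uses $\calL^*(\g_{j,h})(y,\bt)=\v_{j,h}^*(y,\bt)-\calL_{j,h}^*(\a^*)(y,\bt)=r_jK_h(y-y_j)\{\U(\x_j,y,\bt)-\b^*(\x_j,\a^*,\bt,\Ep)\}$, with diagonal bias $O(h^2)$ from $\|\E(\G_{i,h})\|_\infty=O(h^2)$, collapsing under Condition \ref{con:6bandwidth2} to $r_j\{\U(\x_j,y_j,\bt)-\b^*(\x_j,\a^*,\bt,\Ep)\}\{\rho^*(y_j)-\rho(y_j)\}$. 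For $T_2$, the identity $\E[\{R\rho^*(Y)/\pi-(1-R)/(1-\pi)\}\b^*(\X,\a^*,\bt,\Ep)]=\Ep\{\rho^*(Y)\U(\X,Y,\bt)\}-\Eq\{\U\}=\Ep\{\rho^*(Y)\U(\X,Y,\bt)\}$ is functionally $\Ep$-free in the Gateaux direction, so the standard centering argument yields $T_2=n_1^{1/4}O_p(n^{-1/2})o_p(n_1^{-1/4})=o_p(1)$. Combining gives $\sqrt{n_1}(\wt\bt-\bt)=-\A\A^{*-1}n^{-1/2}\sumi[\sqrt{\pi}\,\bphi\eff^*(\x_i,r_i,r_iy_i,\bt)+(r_i/\sqrt{\pi})\A^*\{\b^*(\x_i,\bt)-\U(\x_i,y_i,\bt)\}\{\rho^*(y_i)-\rho(y_i)\}]+o_p(1)$, and the multivariate Lindeberg CLT delivers the stated asymptotic variance. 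The main obstacle I anticipate is maintaining uniformity of the $o_p(n_1^{-1/2})$ remainders in $\x$ and over $\bt$ near the truth so that the Jacobian computation commutes with consistency, together with verifying the Gateaux-derivative identities above hold under only the sup-norm rate assumed on $\wh\E_p$.
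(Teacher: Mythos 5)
Your proposal is correct and follows essentially the same route as the paper's proof: the Gateaux linearization of $\b^*$ in $\Ep$, the vanishing of the leading remainder because $\E\{\b^*(\X,\a^*,\Ep,\bt)\mid y\}=\E\{\U(\X,y,\bt)\mid y\}$ is functionally free of $\Ep$, the operator inversion via Lemma~\ref{lem:l2}, the U-statistic projection of $T_1$ with the $O(h^2)$ diagonal bias absorbed by Condition~\ref{con:6bandwidth2}, the centering of $T_2$, and the Jacobian identity all play the same roles in the paper's argument. The only cosmetic differences are that you compute $\E\{\partial\bpsi^*/\partial\bt\trans\}=\A^{-1}$ term by term whereas the paper computes $\E\{\partial\bphi\eff^*/\partial\bt\trans\}=\A^*\A^{-1}$ directly, and the paper spells out the uniform-in-$\bt$ consistency step via Newey--McFadden that you correctly flag as the remaining obstacle but do not carry out.
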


The proof of Theorem \ref{th:thetanew2} is in Section \ref{sec:thetanew2proof}.
Theorem \ref{th:thetanew2} further confirms that
when the shifting model $\rho^*(y)$ is correctly posited,
the estimator $\wt\bt$ achieves the efficiency bound.
We point out this result as Corollary \ref{th:thetanew2eff}.

\begin{Cor}\label{th:thetanew2eff}
Assume $\wh\E_p$ satisfies $\|\wh\E_p\{\a(Y)\mid\x\}-\Ep\{\a(Y)\mid\x\}\|_\infty=o_p(n_1^{-1/4})$
for any bounded function $\a(y)$.
If $\rho^*(y)=\rho(y)$,
under Conditions \ref{con:0existence}-\ref{con:6bandwidth2},
\bse
\sqrt{n_1}(\wt\bt\eff-\bt)\to N\left[0,\var\{\sqrt{\pi}\bphi\eff(\X,R,RY)\}\right]
\ese
in distribution as $n_1\to\infty$.
\end{Cor}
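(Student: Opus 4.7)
The plan is to obtain Corollary \ref{th:thetanew2eff} as a direct specialization of Theorem \ref{th:thetanew2} by substituting $\rho^*(y)=\rho(y)$ and simplifying; no independent argument should be needed. First I would check that the hypotheses of Theorem \ref{th:thetanew2} carry over: the rate condition on $\wh\E_p$ is imposed identically, Conditions \ref{con:0existence}--\ref{con:6bandwidth2} are assumed, and taking $\rho^*=\rho$ automatically satisfies the positivity/smoothness requirement \ref{con:2rho2} whenever $\rho$ itself does (which is implicit in the identifiability setup through Lemma \ref{lem:iden}).

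Next I would track how each symbol in the limiting covariance degenerates. Under $\rho^*=\rho$, the posited conditional expectations $\Ep\{\rho^{*2}(Y)\mid\x\}$ and $\Ep\{\rho^*(Y)\mid\x\}$ coincide with their true counterparts, so the integral equation \eqref{eq:a*2} reduces to the defining equation for $\a(y,\bt)$ in Proposition \ref{pro:eff2}; by the invertibility of $\calL^*$ supplied by Lemma \ref{lem:l2}, the solution $\a^*(y,\bt)$ is unique and equals $\a(y,\bt)$. Consequently $\b^*(\x,\bt)$ matches the $\b(\x,\bt)$ appearing inside $\bphi\eff$, $\bphi\eff^*(\X,R,RY,\bt)$ collapses to $\bphi\eff(\X,R,RY,\bt)$, and $\A^{*}=[\Eq^{*}\{\partial\U/\partial\bt\trans\}]^{-1}$ reduces to $\A$ since $\Eq^{*}$ and $\Eq$ agree when the density ratio is correctly specified.

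The algebraic simplification of $\bSigma$ in \eqref{eq:bSigma} is then immediate: the second summand carries a factor $\{\rho^*(Y)-\rho(Y)\}$, which is identically zero, so
\[
\bSigma \;=\; \var\bigl\{\sqrt{\pi}\,\bphi\eff(\X,R,RY,\bt)\bigr\}.
\]
The sandwich $\A\A^{*-1}\bSigma(\A\A^{*-1})\trans$ becomes $\A\A^{-1}\bSigma\A^{-\mathsf{T}}\A\trans=\bSigma$, giving exactly the claimed limiting variance.

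There is no real obstacle here; the only point requiring a moment of care is the identification of $\a^*=\a$ and hence $\bphi\eff^*=\bphi\eff$, which rests on the uniqueness guaranteed by Lemma \ref{lem:l2}. With that in hand, Theorem \ref{th:thetanew2} delivers the corollary verbatim, so I would simply state the specialization and note that the proof follows from Theorem \ref{th:thetanew2} upon setting $\rho^*=\rho$.
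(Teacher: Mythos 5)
Your proposal is correct and coincides with the paper's own treatment: the paper states Corollary \ref{th:thetanew2eff} as an immediate specialization of Theorem \ref{th:thetanew2} and omits the proof, exactly as you argue. Your verification that $\a^*=\a$, $\A^*=\A$, and that the $\{\rho^*(Y)-\rho(Y)\}$ term in $\bSigma$ vanishes so the sandwich collapses to $\var\{\sqrt{\pi}\bphi\eff(\X,R,RY)\}$ is the right (and complete) chain of observations.
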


\subsection{Proofs of Section~\ref{sec:generalU}}\label{sec:proofgeneralU}

\subsubsection{Derivation of $\calF$}\label{sec:influence2}

We first state and prove a result regarding the tangent space.

\begin{pro}\label{pro:tangentspace2}
The tangent space of (\ref{eq:likelihood}) is
$\calT\equiv\calT_\ba\oplus(\calT_\bb+\calT_\bg)$, where
\bse
\calT_\ba&=&\left[r\a_1(y):\Ep\{\a_1(Y)\}=\0\right],\\
\calT_\bb&=&\left[r\a_2(\x,y)+(1-r)\Eq\{\a_2(\x,Y)\mid\x\}:\E\{\a_2(\X,y)\mid y\}=\0\right],\\
\calT_\bg&=&\left[(1-r)\Eq\{\a_3(Y)\mid\x\}:\Eq\{\a_3(Y)\}=\0\right].
\ese
\end{pro}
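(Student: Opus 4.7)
The plan is to derive the tangent space directly from a parametric submodel of the full likelihood (\ref{eq:likelihood}), exploiting the fact that $\py(y)$, $g(\x,y)$ and $\qy(y)$ are variation-independent (up to the density constraints). First I would introduce a parametric submodel
\begin{equation*}
f_{\X,R,RY}(\x,r,ry,\bd)=\pi^r(1-\pi)^{1-r}\{g(\x,y,\bb)\py(y,\ba)\}^{r}\left\{\int g(\x,y,\bb)\qy(y,\bg)\dd y\right\}^{1-r},
\end{equation*}
with $\bd=(\ba\trans,\bb\trans,\bg\trans)\trans$ ranging over smooth submodels that pass through the truth at $\bd=\bd_0$. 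I would then differentiate the log-likelihood at $\bd_0$ to read off the three score components.

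Writing $\a_\ba(y)\equiv\partial\log\py(y,\ba)/\partial\ba$, $\a_\bb(\x,y)\equiv\partial\log g(\x,y,\bb)/\partial\bb$, and $\a_\bg(y)\equiv\partial\log\qy(y,\bg)/\partial\bg$, an elementary chain-rule calculation gives
\begin{eqnarray*}
\S_\ba&=&r\,\a_\ba(y),\\
\S_\bb&=&r\,\a_\bb(\x,y)+(1-r)\,\Eq\{\a_\bb(\x,Y)\mid\x\},\\
\S_\bg&=&(1-r)\,\Eq\{\a_\bg(Y)\mid\x\},
\end{eqnarray*}
where the $(1-r)$ pieces arise because differentiating under the integral $\int g(\x,y,\bb)\qy(y,\bg)\dd y$ produces a conditional expectation with respect to $\qyx$. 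Next I would read off the constraints: normalization of $\py$ yields $\Ep\{\a_\ba(Y)\}=\0$; normalization of $g(\cdot,y)$ in its first argument (which is the same under $\mp$ and $\mq$ by label shift) yields $\Ep\{\a_\bb(\X,y)\mid y\}=\Eq\{\a_\bb(\X,y)\mid y\}=\E\{\a_\bb(\X,y)\mid y\}=\0$; and normalization of $\qy$ yields $\Eq\{\a_\bg(Y)\}=\0$. These are exactly the conditions appearing in $\calT_\ba$, $\calT_\bb$, and $\calT_\bg$, so varying the submodel shows the displayed sets are subsets of the tangent space.

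To finish, I would check that every mean-square score of a regular submodel decomposes into these three pieces, which follows from the variation-independence of $(\py,g,\qy)$: an arbitrary perturbation factors into a perturbation of $\py$ alone, of $g$ alone, and of $\qy$ alone, and each produces its own component. Finally I would verify the direct-sum statement. Orthogonality of $\calT_\ba$ to $\calT_\bb$ uses $\E[r\,\a_1(Y)\,\a_2(\X,Y)\trans]=\pi\Ep[\a_1(Y)\,\Ep\{\a_2(\X,Y)\trans\mid Y\}]=\0$ by the constraint on $\a_2$, and orthogonality of $\calT_\ba$ to $\calT_\bg$ is trivial from $r(1-r)=0$. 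The combination of $\calT_\bb$ and $\calT_\bg$ is only a sum (not direct), because elements of the form $(1-r)\Eq\{\a_3(Y)\mid\x\}$ can be matched, modulo the constraint $\Eq\{\a_3(Y)\}=0$, by suitable scores in $\calT_\bb$ when $\a_\bb$ depends on $y$ only; this is why one writes $\calT_\bb+\calT_\bg$. The main obstacle in the whole argument is bookkeeping the constraints correctly, especially verifying that the $r=0$ pieces of $\S_\bb$ and $\S_\bg$ really do lie in the claimed sets and that no additional tangent directions are generated by simultaneous perturbations; this is handled by the variation-independence of the three nuisance functions in (\ref{eq:likelihood}) together with the label-shift identity $\pxy=\qxy=g$.
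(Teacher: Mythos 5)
Your proposal is correct and follows essentially the same route as the paper: introduce the parametric submodel $f_{\X,R,RY}(\x,r,ry,\bd)$ with variation-independent components $\py(y,\ba)$, $g(\x,y,\bb)$, $\qy(y,\bg)$, differentiate the log-likelihood to obtain the three score components $\S_\ba$, $\S_\bb$, $\S_\bg$ with exactly the stated mean-zero constraints, and conclude that these generate $\calT$. The only difference is that you spell out the orthogonality of $\calT_\ba$ to the other two pieces and the reason $\calT_\bb+\calT_\bg$ is written as a sum rather than a direct sum, details the paper leaves implicit.
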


\begin{proof}
Consider a parametric submodel of \eqref{eq:likelihood},
\be\label{eq:submodel2}
f_{\X,R,RY}(\x,r,ry,\bd)
&=&\pi^r(1-\pi)^{1-r}\{g(\x,y,\bb)\py(y,\ba)\}^{r}
\left\{\int g(\x,y,\bb)\qy(y,\bg)dy\right\}^{1-r},
\ee
where $\bd=(\ba\trans,\bb\trans,\bg\trans)\trans$.
We can derive that the score function associated with an arbitrary $\bd$ is
$\S_\bd\equiv(\S_\ba\trans,\S_\bb\trans,\S_\bg\trans)\trans$,
where
\bse
\S_\ba(\x,r,ry)
&\equiv&r\a_\ba(y),\\
\S_\bb(\x,r,ry)
&\equiv&r\a_\bb(\x,y)+(1-r)\Eq\{\a_\bb(\x,Y)\mid\x\},\\
\S_\bg(\x,r,ry)
&\equiv&(1-r)\Eq\{\a_\bg(Y)\mid\x\},
\ese
$\Ep\{\a_\ba(Y)\}=\0$,
$\Ep\{\a_\bb(\X,y)\mid y\}=\Eq\{\a_\bb(\X,y)\mid y\}=\E\{\a_\bb(\X,y)\mid y\}=\0$,
and $\Eq\{\a_\bg(Y)\}=\0$.
The above derivation directly leads to Proposition \ref{pro:tangentspace2}.
\end{proof}

We now establish $\calF$ in Proposition~\ref{pro:influ2}.

\begin{pro}\label{pro:influ2}
The set of the influence functions for $\bt$
is
\bse
\calF
\equiv\left[\frac{r}{\pi}[\rho(y)\A\{\U(\x,y,\bt)-\b(\x)\}+\c]
+\frac{1-r}{1-\pi}\{\A\b(\x)-\c\}:
\E\{\b(\X)\mid y\}=\E\{\U(\X,y,\bt)\mid y\},\forall \c\right].
\ese
\end{pro}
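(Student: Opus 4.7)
\textbf{Proof plan for Proposition \ref{pro:influ2}.}

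The plan is to mirror the scalar case (Proposition \ref{pro:influ}), adapted to vector-valued $\bt$ and $\U$. First, without loss of generality since $R\in\{0,1\}$, I would write any candidate influence function as
\bse
\bphi(\x,r,ry) = \frac{r}{\pi}\bphi_1(\x,y) + \frac{1-r}{1-\pi}\bphi_2(\x).
\ese
Using the parametric submodel \eqref{eq:submodel2} and the tangent-space decomposition from Proposition \ref{pro:tangentspace2}, I would impose the defining conditions $\E(\bphi)=\0$ and $\E(\bphi\S_\bd\trans)=\partial\bt/\partial\bd\trans$ for $\bd\in\{\ba,\bb,\bg\}$.

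The pathwise derivatives $\partial\bt/\partial\bd\trans$ are obtained by implicit differentiation of $\Eq\{\U(\X,Y,\bt)\}=\0$. This yields $\partial\bt/\partial\ba\trans=\0$ (since $\ba$ only parameterizes $\py$, which does not appear in $\Eq$), and $\partial\bt/\partial\bb\trans=-\A\Eq\{\U\a_\bb\trans\}$, $\partial\bt/\partial\bg\trans=-\A\Eq\{\U\a_\bg\trans\}$. On the inner-product side, using $\Ep\{\rho(Y)f(\X,Y)\}=\Eq\{f(\X,Y)\}$ and the tower property, the three score equations reduce to: (i) $\Ep[\bphi_1\mid y]=\c$ for some constant $\c$; (ii) $\bphi_1(\x,y)+\rho(y)\bphi_2(\x)+\rho(y)\A\U(\x,y,\bt)=\f(y)$ for some function $\f(y)$, exploiting the arbitrariness of $\a_\bb$ subject to $\E\{\a_\bb\mid y\}=\0$; and (iii) $\E[\bphi_2\mid y]+\A\E[\U\mid y]=\c^*$ for some constant $\c^*$, using the arbitrariness of $\a_\bg$ subject to $\Eq\{\a_\bg\}=\0$. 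Throughout, the label-shift identity $\Ep[\cdot\mid Y]=\Eq[\cdot\mid Y]=\E[\cdot\mid Y]$ for functions of $\X$ lets me interchange expectations freely.

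Combining (i)--(iii) with $\E(\bphi)=\0$, I would solve for the free constants and functions. Taking $\Eq$ of (iii) and using $\Eq\{\U\}=\0$ gives $\Eq\{\bphi_2\}=\c^*$, while $\E(\bphi)=\0$ together with (i) gives $\Eq\{\bphi_2\}=-\c$, so $\c^*=-\c$. Substituting into (i) and (ii) then expresses $\f(y)$ and $\bphi_1$ in terms of $\bphi_2$, $\c$, $\rho$, $\A$, and $\U$. Introducing the reparameterization $\bphi_2(\x)=\A\b(\x)-\c$ converts (iii) into $\E\{\b(\X)\mid y\}=\E\{\U(\X,y,\bt)\mid y\}$, and back-substitution produces $\bphi_1(\x,y)=\rho(y)\A\{\U(\x,y,\bt)-\b(\x)\}+\c$, reproducing the stated form of $\calF$. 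The reverse inclusion (every element of $\calF$ is an influence function) is then a direct verification that any pair $(\b,\c)$ meeting the constraint yields a mean-zero function with the correct pairings against $\S_\ba,\S_\bb,\S_\bg$.

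The main obstacle will be the careful matrix bookkeeping arising from $\A$ being a matrix inverse and $\U$ being vector-valued; in particular, identifying the precise reparameterization $\bphi_2(\x)=\A\b(\x)-\c$ that collapses constraint (iii) into the clean form $\E\{\b\mid y\}=\E\{\U\mid y\}$, rather than an affine variant involving a residual constant vector, requires using the consequences of both $\E(\bphi)=\0$ and (iii) simultaneously. Sign conventions in the implicit-function derivative $\partial\bt/\partial\bd\trans=-\A\,\partial_\bd\Eq\{\U(\X,Y,\bt)\}$ at the true $\bt$, and ensuring that the constant $\c^*$ in (iii) is forced to equal $-\c$ by the mean-zero condition on $\bphi$, also require care.
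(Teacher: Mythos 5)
Your plan follows essentially the same route as the paper's own proof of Proposition~\ref{pro:influ2}: decompose $\bphi=\frac{r}{\pi}\bphi_1+\frac{1-r}{1-\pi}\bphi_2$, pair against the three score components of the submodel \eqref{eq:submodel2}, deduce (i) $\E\{\bphi_1\mid y\}=\c$, (ii) $\bphi_1+\rho\bphi_2-\rho\A\U$ is a function of $y$ alone, (iii) an affine constraint on $\E\{\bphi_2\mid y\}$, force $\c^*=-\c$ via $\E(\bphi)=\0$, and reparameterize $\bphi_2=\A\b-\c$; this is exactly the paper's argument, and your explicit verification of the reverse inclusion is a point the paper leaves implicit. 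The one substantive divergence is the sign you attach to the pathwise derivatives: you take $\partial\bt/\partial\bd\trans=-\A\Eq\{\U\S_\bd\trans\}$ from the implicit function theorem, whereas the paper's proof uses $+\A\Eq\{\U\a_\bd\trans\}$ (no minus). Carried through literally, your convention delivers the displayed family with $\A$ replaced by $-\A$, i.e.\ with $-[\Eq\{\partial\U/\partial\bt\trans\}]^{-1}$ playing the role of $\A$ --- which is in fact the convention under which the general statement specializes correctly to the scalar family of Proposition~\ref{pro:influ} (take $\U=y-\theta$, so $\A=-1$ there). So your sign is the internally consistent one; just be aware that to land on the proposition exactly as printed you must adopt the paper's sign convention for $\A$, and that this is the only place your derivation and the paper's part ways.
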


\begin{proof}
Recall the submodel \eqref{eq:submodel2}, then it can be verified that
\bse
\frac{\partial\bt}{\partial\ba\trans}&=&\0,\\
\frac{\partial\bt}{\partial\bb\trans}&=&\A\Eq\{\U(\X,Y,\bt)\a_\bb\trans(\X,Y)\}
=\A\Ep\{\U(\X,Y,\bt)\rho(Y)\a_\bb\trans(\X,Y)\},\\
\frac{\partial\bt}{\partial\bg\trans}&=&\A\Eq\{\U(\X,Y,\bt)\a_\bg\trans(Y)\},
\ese
where $\E\{\a_\bb(\X,y)\mid y\}=\0$ and $\Eq\{\a_\bg(Y)\}=\0$.
Now let $\bphi(\x,r,ry)$ be
\bse
\bphi(\x,r,ry)\equiv\frac{r}{\pi}\bphi_1(\x,y)+\frac{1-r}{1-\pi}\bphi_2(\x).
\ese
For $\bphi(\x,r,ry)$ to be an influence function,
it must satisfy
\be\label{eq:hilbert2}
\E(\bphi)=\Ep\{\bphi_1(\X,Y)\}+\Eq\{\bphi_2(\X)\}=\0
\ee
and $\E(\bphi\S_\bd\trans)=\partial\bt/\partial\bd\trans$,
where $\bd=(\ba\trans,\bb\trans,\bg\trans)\trans$ and
$\S_\bd=(\S_\ba\trans,\S_\bb\trans,\S_\bg\trans)\trans$ is the score function
of the submodel \eqref{eq:submodel2}.
First, $\E(\bphi\S_\ba\trans)=\partial\bt/\partial\ba\trans$ is equivalent to
\be\label{eq:alpha2}
\E\{\bphi_1(\X,y)\mid y\}=\c
\ee
for some constant vector $\c$.
In addition, since
$\E(\bphi\S_\bb\trans)=\Ep[\{\bphi_1(\X,Y)+\bphi_2(\X)\rho(Y)\}\a_\bb\trans(\X,Y)]$
and $\a_\bb(\x,y)$ satisfies $\E\{\a_\bb(\X,y)\mid y\}=\0$,
$\E(\bphi\S_\bb\trans)=\partial\bt/\partial\bb\trans$ implies
$\bphi_1(\x,y)+\bphi_2(\x)\rho(y)=\A\U(\x,y,\bt)\rho(y)+\a(y)$ for some $\a(y)$.
Then \eqref{eq:alpha2} yields
\bse
\bphi_1(\x,y)=\rho(y)[\A\U(\x,y,\bt)-\bphi_2(\x)
-\E\{\A\U(\X,y,\bt)-\bphi_2(\X)\mid y\}]+\c.
\ese
Also, noting that
$\E(\bphi\S_\bg\trans)=\partial\bt/\partial\bg\trans$ is equivalent to
$\E\{\bphi_2(\X)\mid y\}=\E\{\A\U(\X,y,\bt)\mid y\}+\c^*$
for some constant vector $\c^*$,
we have from \eqref{eq:hilbert2} and \eqref{eq:alpha2} that $\c^*=-\c$.
Therefore, defining $\b(\x)\equiv\A^{-1}\{\bphi_2(\x)+\c\}$,
the summary description of the influence function is
\bse
\bphi(\x,r,ry)
=\frac{r}{\pi}[\rho(y)\A\{\U(\x,y,\bt)-\b(\x)\}+\c]
+\frac{1-r}{1-\pi}\{\A\b(\x)-\c\},
\ese
where $\b(\x)$ satisfies
$\E\{\b(\X)\mid y\}=\E\{\U(\X,y,\bt)\mid y\}$ and $\c$ is a constant vector.
\end{proof}

\subsubsection{Proof of Proposition \ref{pro:eff2}}\label{sec:eff2proof}
Note that
\bse
\frac{\Ep\{a(\x,Y)\rho(Y)\mid\x\}}{\Ep\{\rho(Y)\mid\x\}}
=\frac{\int a(\x,y)\rho(y)g(\x,y)\py(y)dy}{\int \rho(y)g(\x,y)\py(y)dy}
=\frac{\int a(\x,y)g(\x,y)\qy(y)dy}{\int g(\x,y)\qy(y)dy}
=\Eq\{a(\x,Y)\mid\x\},
\ese
then $\bphi\eff(\x,r,ry)$ can be alternatively written as
\bse
&&\bphi\eff(\x,r,ry)\\
&=&\frac{r}{\pi}\rho(y)\A\left[\U(\x,y,\bt)
-\frac{\Eq\{\U(\x,Y,\bt)\rho(Y)+\a(Y)\mid\x\}}
{\Eq\{\rho(Y)+\pi/(1-\pi)\mid\x\}}\right]
+\frac{1-r}{1-\pi}\frac{\A\Eq\{\U(\x,Y,\bt)\rho(Y)+\a(Y)\mid\x\}}
{\Eq\{\rho(Y)+\pi/(1-\pi)\mid\x\}},
\ese
where $\a(y)$ satisfies
\bse
\E\left[\frac{\Eq\{\U(\X,Y,\bt)\rho(Y)+\a(Y)\mid\X\}}
{\Eq\{\rho(Y)+\pi/(1-\pi)\mid\X\}}\mid y\right]
=\E\{\U(\X,y,\bt)\mid y\}.
\ese
First, it is immediate that $\bphi\eff(\x,r,ry)$ is an influence function for $\bt$,
i.e., belongs to $\calF$ given in Proposition \ref{pro:influ2}
from letting
\bse
\b(\x)&\equiv&\frac{\Eq\{\U(\x,Y,\bt)\rho(Y)+\a(Y)\mid\x\}}
{\Eq\{\rho(Y)+\pi/(1-\pi)\mid\x\}},\\
\c&\equiv&\0.
\ese
Next, we show that $\bphi\eff(\x,r,ry)$ is
in the tangent space $\calT$ given in Proposition \ref{pro:tangentspace2}.
We decompose $\bphi\eff(\x,r,ry)$ into
\bse
\bphi\eff(\x,r,ry)
=r\{\a_1(y)+\a_2(\x,y)\}+(1-r)[\Eq\{\a_2(\x,Y)\mid\x\}+\Eq\{\a_3(Y)\mid\x\}],
\ese
where
\bse
\a_1(y)&\equiv&\0,\\
\a_2(\x,y)&\equiv&\frac{1}{\pi}\rho(y)\A\{\U(\x,y,\bt)-\b(\x)\},\\
\a_3(y)&\equiv&\frac{1}{\pi}\A\a(y).
\ese
Then it is easy to see that $r\a_1(y)\in\calT_\ba$, and
$r\a_2(\x,y)+(1-r)\Eq\{\a_2(\x,Y)\mid\x\}\in\calT_\bb$
because $\E\{\b(\X)\mid y\}=\E\{\U(\X,y,\bt)\mid y\}$.
Further, $(1-r)\Eq\{\a_3(Y)\mid\x\}\in\calT_\bg$ since
\bse
\Eq\{\a_3(Y)\}
&=&\frac{1}{\pi}\A\Eq[\Eq\{\a(Y)\mid\X\}]\\
&=&\frac{1}{\pi}\A\Eq\{\b(\X)\Eq\{\rho(Y)\mid\X\}]
+\frac{1}{1-\pi}\A\Eq\{\b(\X)\}
-\frac{1}{\pi}\A\Eq[\Eq\{\U(\X,Y,\bt)\rho(Y)\mid\X\}]\\
&=&\frac{1}{\pi}\A\Eq[\Eq\{\b(\X)-\U(\X,Y,\bt)\mid Y\}\rho(Y)]\\
&=&\0,
\ese
where the third equality holds
since $\Eq\{\b(\X)\}=\Eq\{\U(\X,Y,\bt)\}=\0$ by the definition of $\bt$.
Hence $\bphi\eff(\x,r,ry)$ belongs to $\calT$.
\qed
\subsubsection{Proof of Proposition \ref{pro:consistency2}}\label{sec:consistency2proof}
The invertibility of
$\E\{\partial\bphi\eff^{*\star}(\X,R,RY,\bt)/\partial\bt\trans\}$
implies that
$\E\{\bphi\eff^{*\star}(\X,R,RY,\bt)\}=\0$
has a unique solution in the neighborhood of $\bt$,
and the existence of $\partial\bphi\eff^{*\star}(\x,r,ry,\bt)/\partial\bt\trans$
automatically implies that $\bphi\eff^{*\star}(\x,r,ry,\bt)$ is
continuous with respect to $\bt$.
Then following Theorem 2.6 of \cite{newey1994large},
it suffices to show that
$\E\{\bphi\eff^{*\star}(\X,R,RY,\bt)\}=\0$.
It is immediate that $\E\{\b^{*\star}(\X,\bt)\mid y\}=\E\{\U(\X,y,\bt)\mid y\}$
from the definition of $\b^{*\star}(\x,\bt)$.
Hence,
\bse
\E\{\bphi\eff^{*\star}(\X,R,RY,\bt)\}
&=&\A^{*\star}\E\left[\frac{R}{\pi}\rho^*(Y)\{\U(\X,Y,\bt)-\b^{*\star}(\X,\bt)\}
+\frac{1-R}{1-\pi}\b^{*\star}(\X,\bt)\right]\\
&=&\A^{*\star}\Ep\left[\rho^*(Y)\E\{\U(\X,Y,\bt)-\b^{*\star}(\X,\bt)\mid Y\}\right]
+\A^{*\star}\Eq\left[\E\left\{\b^{*\star}(\X,\bt)\mid Y\right\}\right]\\
&=&\A^{*\star}\Eq\left[\E\left\{\U(\X,Y,\bt)\mid Y\right\}\right]\\
&=&\0
\ese
by the definition of $\bt$.
\qed

\subsubsection{Proof of Lemma \ref{lem:l2}}\label{sec:l2proof}
From the definition of $\calL^{*\star}(\a)$
and Conditions \ref{con:3bounded2}-\ref{con:4compact2},
it is immediate that there exists a constant $0<c_2<\infty$
such that $\|\calL^{*\star}(\a)\|_\infty\leq c_2\|\a\|_\infty$.
Now we show there is a constant $0<c_1<\infty$
such that $c_1\|\a\|_\infty\leq\|\calL^{*\star}(\a)\|_\infty$.
Note that if $\calL^{*\star}$ is invertible,
by the bounded inverse theorem
we have $\|\calL^{*\star-1}(\v)\|_\infty\leq c_1^{-1}\|\v\|_\infty$
for some constant $0<c_1<\infty$, i.e.,
$c_1\|\a\|_\infty\leq\|\calL^{*\star}(\a)\|_\infty$.
Hence it suffices to show that $\calL^{*\star}$ is invertible.
We prove this by contradiction.
Suppose there are $\a_1(y,\bt)$ and $\a_2(y,\bt)$ such that
$\calL^{*\star}(\a_1)(y,\bt)=\calL^{*\star}(\a_2)(y,\bt)=\v^{*\star}(y,\bt)$ and $\a_1\neq \a_2$.
Then by Conditions \ref{con:0existence}-\ref{con:2rho2}, we have
\bse
\A^{*\star}\Ep^\star\{\a_1(Y,\bt)\rho^*(Y)\mid\x\}
\neq\A^{*\star}\Ep^\star\{\a_2(Y,\bt)\rho^*(Y)\mid\x\}.
\ese
Now, the efficient score calculated under the posited models is
\bse
\bphi\eff^{*\star}(\x,r,ry,\bt)
&=&\frac{r}{\pi}\rho^*(y)\A^{*\star}\left[\U(\x,y,\bt)
-\frac{\Ep^\star\{\U(\x,Y,\bt)\rho^{*2}(Y)+\a(Y,\bt)\rho^*(Y)\mid\x\}}
{\Ep^\star\{\rho^{*2}(Y)+\pi/(1-\pi)\rho^*(Y)\mid\x\}}\right]\\
&&+\frac{1-r}{1-\pi}
\frac{\A^{*\star}\Ep^\star\{\U(\x,Y,\bt)\rho^{*2}(Y)+\a(Y,\bt)\rho^*(Y)\mid\x\}}
{\Ep^\star\{\rho^{*2}(Y)+\pi/(1-\pi)\rho^*(Y)\mid\x\}},
\ese
where $\a(y,\bt)$ satisfies $\calL^{*\star}(\a)(y,\bt)=\v^{*\star}(y,\bt)$.
Then letting $\a=\a_1$ and $\a=\a_2$ gives two distinct efficient scores,
which contradicts the uniqueness of the efficient score.
Therefore, there is a unique solution $\a^{*\star}(y,\bt)$
for $\calL^{*\star}(\a^{*\star})(y,\bt)=\v^{*\star}(y,\bt)$,
hence $\calL^{*\star}$ is invertible.
\qed

\subsubsection{Proof of Theorem \ref{th:thetanew}}\label{sec:thetanewproof}
We define
\bse
\b^{*\star}(\x,\a,\bzeta,\bt)
\equiv\frac{\Ep^\star\{\U(\x,Y,\bt)\rho^{*2}(Y)+\a(Y,\bt)\rho^*(Y)\mid\x,\bzeta\}}
{\Ep^\star\{\rho^{*2}(Y)+\pi/(1-\pi)\rho^*(Y)\mid\x,\bzeta\}},
\ese
then under Conditions \ref{con:0existence}, \ref{con:3bounded2}-\ref{con:6bandwidth2}
and $\|\wh\bzeta-\bzeta\|_2=O_p(n_1^{-1/2})$,
\bse
&&\wh\calL^{*\star}(\a^{*\star})(y,\bt)-\wh\v^{*\star}(y,\bt)\\
&=&n_1^{-1}\sumi r_iK_h(y-y_i)\{\b^{*\star}(\x_i,\a^{*\star},\wh\bzeta,\bt)-\U(\x_i,y,\bt)\}\n\\
&=&n_1^{-1}\sumi r_iK_h(y-y_i)\left\{\b^{*\star}(\x_i,\a^{*\star},\bzeta,\bt)
+\frac{\partial\b^{*\star}(\x_i,\a^{*\star},\bzeta,\bt)}{\partial\bzeta\trans}
(\wh\bzeta-\bzeta)+o_p(n_1^{-1/2})-\U(\x_i,y,\bt)\right\}\n\\
&=&n_1^{-1}\sumi\left\{\calL^{*\star}_{i,h}(\a^{*\star})(y,\bt)-\v^{*\star}_{i,h}(y,\bt)\right\}
+\left[\py(y)\frac{\partial\E\{\b^{*\star}(\X,\a^{*\star},\bzeta,\bt)\mid y\}}
{\partial\bzeta\trans}+o_p(1)\right](\wh\bzeta-\bzeta)+o_p(n_1^{-1/2})\n\\
&=&n_1^{-1}\sumi\left\{\calL^{*\star}_{i,h}(\a^{*\star})(y,\bt)-\v^{*\star}_{i,h}(y,\bt)\right\}
+o_p(n_1^{-1/2})
\ese
uniformly in $(y,\bt)$
since $\E\{\b^{*\star}(\X,\a^{*\star},\bzeta,\bt)\mid y\}=\E\{\U(\X,y,\bt)\mid y\}$.
In addition, for any bounded function $\a(y,\bt)$,
$\|(\wh\calL^{*\star}-\calL^{*\star})(\a)\|_\infty
=O_p\left\{(n_1h)^{-1/2}\log n_1+h^2\right\}=o_p(n_1^{-1/4})$
under Conditions \ref{con:0existence}, \ref{con:3bounded2}-\ref{con:6bandwidth2}
and the assumption $\|\wh\bzeta-\bzeta\|_2=O_p(n_1^{-1/2})$.
Similarly, $\|\wh\v^{*\star}-\v^{*\star}\|_\infty=o_p(n_1^{-1/4})$.
Then using that $\calL^{*\star-1}$ is a bounded linear operator by Lemma \ref{lem:l2},
$\wh\a^{*\star}(y,\bt)$ can be expressed as
\be\label{eq:a2}
\wh\a^{*\star}(y,\bt)
&=&\{\calL^{*\star}+(\wh\calL^{*\star}-\calL^{*\star})\}^{-1}(\wh\v^{*\star})(y,\bt)\n\\
&=&\{\calL^{*\star-1}-\calL^{*\star-1}(\wh\calL^{*\star}-\calL^{*\star})\calL^{*\star-1}\}
\{\v^{*\star}+(\wh\v^{*\star}-\v^{*\star})\}(y,\bt)+o_p(n_1^{-1/2})\n\\
&=&\a^{*\star}(y,\bt)+\calL^{*\star-1}(\wh\v^{*\star}-\v^{*\star})(y,\bt)
-\{\calL^{*\star-1}(\wh\calL^{*\star}-\calL^{*\star})\}(\a^{*\star})(y,\bt)+o_p(n_1^{-1/2})\n\\
&=&\a^{*\star}(y,\bt)+\calL^{*\star-1}\{\wh\v^{*\star}-\wh\calL^{*\star}(\a^{*\star})\}(y,\bt)
+o_p(n_1^{-1/2})\n\\
&=&\a^{*\star}(y,\bt)+n_1^{-1}\sumi\g_{i,h}(y,\bt)+o_p(n_1^{-1/2})
\ee
uniformly in $y$,
where $\g_{i,h}(y,\bt)\equiv
\calL^{*\star-1}\{\v^{*\star}_{i,h}-\calL_{i,h}^{*\star}(\a^{*\star})\}(y,\bt)$.
Note that
$\|E(\V_{i,h}^{*\star})-\v^{*\star}\|_\infty=O(h^2)$
by Conditions \ref{con:4compact2}-\ref{con:5kernel2},
then since $\calL^{*\star-1}$ is a bounded linear operator by Lemma \ref{lem:l2},
\bse
\|\E\{\calL^{*\star-1}(\V_{i,h}^{*\star})\}-\calL^{*\star-1}(\v^{*\star})\|_\infty
=\|\calL^{*\star-1}\{\E(\V_{i,h}^{*\star})-\v^{*\star}\}\|_\infty
=O(h^2).
\ese
Similarly, $\|\E\{(\calL^{*\star-1}\calL_{i,h}^{*\star})(\a^{*\star})\}-\a^{*\star}\|_\infty=O(h^2)$.
Using $\calL^{*\star-1}(\v^{*\star})(y,\bt)=\a^{*\star}(y,\bt)$, we further get
\be\label{eq:g3}
\|\E(\G_{i,h})\|_\infty
&=&\|\E\{\calL^{*\star-1}(\V_{i,h}^{*\star})\}
-\E\{(\calL^{*\star-1}\calL_{i,h}^{*\star})(\a^{*\star})\}
-\{\calL^{*\star-1}(\v^{*\star})-\a^{*\star}\}\|_\infty\n\\
&\leq&\|\E\{\calL^{*\star-1}(\V_{i,h}^{*\star})\}-\calL^{*\star-1}(\v)\|_\infty
+\|\E\{(\calL^{*\star-1}\calL_{i,h})(\a^{*\star})\}-\a^{*\star}\|_\infty\n\\
&=&O(h^2),
\ee
and  for $i\neq j$,
\be\label{eq:g2}
\left\|\E\left[\frac{\Ep^\star\{\G_{j,h}(Y,\bt)\rho^*(Y)\mid\x_i\}}
{\Ep^\star\{\rho^{*2}(Y)+\pi/(1-\pi)\rho^*(Y)\mid\x_i\}}\mid\x_i\right]\right\|_\infty
&=&\left\|\frac{\Ep^\star\left\{\E(\G_{j,h})(Y,\bt)\rho^*(Y)\mid\x_i\right\}}
{\Ep^\star\{\rho^{*2}(Y)+\pi/(1-\pi)\rho^*(Y)\mid\x_i\}}\right\|_\infty\n\\
&\le&\|\E(\G_{j,h})\|_\infty\left|\frac{\Ep^\star\{\rho^*(Y)\mid\x_i\}}
{\Ep^\star\{\rho^{*2}(Y)+\pi/(1-\pi)\rho^*(Y)\mid\x_i\}}\right|\n\\
&=&O(h^2).
\ee
Also, we have
\bse
\left\|\frac{\Ep^\star\{\a(Y,\bt)\rho^*(Y)\mid\x_i\}}
{\Ep^\star\{\rho^{*2}(Y)+\pi/(1-\pi)\rho^*(Y)\mid\x_i\}}\right\|_\infty
\leq
\|\a\|_\infty
\left|\frac{\Ep^\star\{\rho^*(Y)\mid\x_i\}}
{\Ep^\star\{\rho^{*2}(Y)+\pi/(1-\pi)\rho^*(Y)\mid\x_i\}}\right|
=O(\|\a\|_\infty),
\ese
and from \eqref{eq:a2}
\bse
&&\left\|\frac{\partial\b^{*\star}(\x_i,\wh\a^{*\star},\bzeta,\bt)}
{\partial\bzeta\trans}(\wh\bzeta-\bzeta)
-\frac{\partial\b^{*\star}(\x_i,\a^{*\star},\bzeta,\bt)}
{\partial\bzeta\trans}(\wh\bzeta-\bzeta)\right\|_\infty\\
&=&\left\|\frac{\partial}{\partial\bzeta\trans}
\left[\frac{\Ep^\star\{(\wh\a^{*\star}-\a^{*\star})(Y,\bt)\rho^*(Y)\mid\x_i,\bzeta\}}
{\Ep^\star\{\rho^{*2}(Y)+\pi/(1-\pi)\rho^*(Y)\mid\x_i,\bzeta\}}\right](\wh\bzeta-\bzeta)\right\|_\infty\\
&=&\left\|\frac{\Ep^\star\{(\wh\a^{*\star}-\a^{*\star})(Y,\bt)\rho^*(Y)
\S_\bzeta^{\star\rm T}(Y,\x_i,\bzeta)\mid\x_i\}}
{\Ep^\star\{\rho^{*2}(Y)+\pi/(1-\pi)\rho^*(Y)\mid\x_i\}}(\wh\bzeta-\bzeta)\right.\\
&&\left.-\Ep^\star\{(\wh\a^{*\star}-\a^{*\star})(Y,\bt)\rho^*(Y)\mid\x_i\}
\frac{\Ep[\{\rho^{*2}(Y)+\pi/(1-\pi)\rho^*(Y)\}\S_\bzeta^{\star\rm T}(Y,\x_i,\bzeta)\mid\x_i]}
{[\Ep^\star\{\rho^{*2}(Y)+\pi/(1-\pi)\rho^*(Y)\mid\x_i\}]^2}(\wh\bzeta-\bzeta)\right\|_\infty\\
&\leq&\left\|\wh\a^{*\star}-\a^{*\star}\right\|_\infty
\frac{\Ep^\star\{\rho^*(Y)\|\S_\bzeta^\star(Y,\x_i,\bzeta)\|_2\mid\x_i\}\|\wh\bzeta-\bzeta\|_2}
{\Ep^\star\{\rho^{*2}(Y)+\pi/(1-\pi)\rho^*(Y)\mid\x_i\}}\\
&&+\left\|\wh\a^{*\star}-\a^{*\star}\right\|_\infty\Ep^\star\{\rho^*(Y)\mid\x_i\}
\frac{\Ep^\star[\{\rho^{*2}(Y)+\pi/(1-\pi)\rho^*(Y)\}
\|\S_\bzeta^\star(Y,\x_i,\bzeta)\|_2\mid\x_i]\|\wh\bzeta-\bzeta\|_2}
{[\Ep^\star\{\rho^{*2}(Y)+\pi/(1-\pi)\rho^*(Y)\mid\x_i\}]^2}\\
&=&o_p(n_1^{-1/4})O_p(n_1^{-1/2})
=o_p(n_1^{-1/2}),
\ese
because $\|\wh\bzeta-\bzeta\|_2=O_p(n_1^{-1/2})$,
$\Ep^\star\{\|\S_\bzeta^\star(Y,\x,\bzeta)\|_2\mid\x\}$ is bounded,
and $\|\wh\a^{*\star}-\a^{*\star}\|_\infty
=O_p\left\{(n_1h)^{-1/2}\log n_1+h^2\right\}=o_p(n^{-1/4})$
by \eqref{eq:a2} and Condition \ref{con:6bandwidth2}. Hence, using \eqref{eq:a2} we get
\be\label{eq:b2}
&&\b^{*\star}(\x_i,\wh\a^{*\star},\wh\bzeta,\bt)\\
&=&\b^{*\star}(\x_i,\a^{*\star},\bzeta,\bt)
+\frac{\Ep^\star\{(\wh\a^{*\star}-\a^{*\star})(Y,\bt)\rho^*(Y)\mid\x_i\}}
{\Ep^\star\{\rho^{*2}(Y)+\pi/(1-\pi)\rho^*(Y)\mid\x_i\}}
+\frac{\partial\b^{*\star}(\x_i,\wh\a^{*\star},\bzeta,\bt)}{\partial\bzeta\trans}(\wh\bzeta-\bzeta)
+o_p(n_1^{-1/2})\n\\
&=&\b^{*\star}(\x_i,\a^{*\star},\bzeta,\bt)
+n_1^{-1}\sumj\frac{\Ep^\star\{\g_{j,h}(Y,\bt)\rho^*(Y)\mid\x_i\}}
{\Ep^\star\{\rho^{*2}(Y)+\pi/(1-\pi)\rho^*(Y)\mid\x_i\}}
+\frac{\partial\b^{*\star}(\x_i,\a^{*\star},\bzeta,\bt)}{\partial\bzeta\trans}(\wh\bzeta-\bzeta)
+o_p(n_1^{-1/2})\n
\ee
uniformly in $\x_i$ by Condition \ref{con:4compact2}.

Now we show the consistency of $\wh\bt$
by following Theorem 2.1 of \cite{newey1994large}.
We can view the problem of finding the solution for $\E\{\bphi\eff^{*\star}(\X,R,RY,\bt)\}=\0$
as maximizing the objective function $Q_0(\bt)\equiv -\|\E\{\bphi\eff^{*\star}(\X,R,RY,\bt)\}\|_2^2$,
then Theorem 2.1 of \cite{newey1994large} is directly applicable.
It is immediate that $\E\{\b^{*\star}(\X,\a^{*\star},\bzeta,\bt)\mid y\}=\E\{\U(\X,y,\bt)\mid y\}$
from the definition of $\b^{*\star}(\x,\a^{*\star},\bzeta,\bt)$.
Hence,
\bse
&&\E\{\bphi\eff^{*\star}(\X,R,RY,\bt)\}\\
&=&\A^{*\star}\E\left[\frac{R}{\pi}\rho^*(Y)\{\U(\X,Y,\bt)-\b^{*\star}(\X,\a^{*\star},\bzeta,\bt)\}
+\frac{1-R}{1-\pi}\b^{*\star}(\X,\a^{*\star},\bzeta,\bt)\right]\\
&=&\A^{*\star}\Ep\left[\rho^*(Y)\E\{\U(\X,Y,\bt)-\b^{*\star}(\X,\a^{*\star},\bzeta,\bt)\mid Y\}\right]
+\A^{*\star}\Eq\left[\E\left\{\b^{*\star}(\X,\a^{*\star},\bzeta,\bt)\mid Y\right\}\right]\\
&=&\A^{*\star}\Eq\left[\E\left\{\U(\X,Y,\bt)\mid Y\right\}\right]\\
&=&\0,
\ese
where the last step is by the definition of $\bt$.
Also, Condition \ref{con:0existence} implies that
$\bt$ is the unique solution for
$\E\{\bphi\eff^{*\star}(\X,R,RY,\bt)\}=\0$
in the neighborhood of $\bt$,
$\bt\in\boldsymbol{\Theta}$ which is compact,
and $\bphi\eff^{*\star}(\x,r,ry,\bt)$ is continuous with respect to $\bt$.
Therefore, it suffices to show that the estimating equation
converges in probability to $\E\{\bphi\eff^{*\star}(\X,R,RY,\bt)\}$ uniformly in $\bt$.
Using \eqref{eq:b2}, the estimating equation can be expressed as
\be\label{eq:ee}
&&\A^{*\star}n^{-1}\sumi\left[\frac{r_i}{\pi}\rho^*(y_i)
\left\{\U(\x_i,y_i,\bt)-\b^{*\star}(\x_i,\wh\a^{*\star},\wh\bzeta,\bt)\right\}
+\frac{1-r_i}{1-\pi}\b^{*\star}(\x_i,\wh\a^{*\star},\wh\bzeta,\bt)\right]\n\\
&=&n^{-1}\sumi\bphi\eff^{*\star}(\x_i,r_i,r_iy_i,\bt)\n
+\A^{*\star}n^{-1}\sumi\left\{\frac{r_i}{\pi}\rho^*(y_i)-\frac{1-r_i}{1-\pi}\right\}
\left\{\b^{*\star}(\x_i,\a^{*\star},\bzeta,\bt)
-\b^{*\star}(\x_i,\wh\a^{*\star},\wh\bzeta,\bt)\right\}\n\\
&=&n^{-1}\sumi\bphi\eff^{*\star}(\x_i,r_i,r_iy_i,\bt)
-\A^{*\star}\{T_1(\bt)+T_2(\bt)\}+o_p(n_1^{-1/2}),
\ee
where
\bse
T_1(\bt)&\equiv&n^{-1}n_1^{-1}\sumi\sumj
\left\{\frac{r_i}{\pi}\rho^*(y_i)-\frac{1-r_i}{1-\pi}\right\}
\frac{\Ep^\star\{\g_{j,h}(Y,\bt)\rho^*(Y)\mid\x_i\}}
{\Ep^\star\{\rho^{*2}(Y)+\pi/(1-\pi)\rho^*(Y)\mid\x_i\}},\\
T_2(\bt)&\equiv&n^{-1}\sumi
\left\{\frac{r_i}{\pi}\rho^*(y_i)-\frac{1-r_i}{1-\pi}\right\}
\frac{\partial\b^{*\star}(\x_i,\a^{*\star},\bzeta,\bt)}{\partial\bzeta\trans}(\wh\bzeta-\bzeta).
\ese
Using the property of the U-statistic, Condition \ref{con:6bandwidth2}, and \eqref{eq:g2},
we can rewrite $T_1(\bt)$ as
\bse
T_1(\bt)
&=&n_1^{-1}\sumi
\left\{\frac{r_i}{\pi}\rho^*(y_i)-\frac{1-r_i}{1-\pi}\right\}
\E\left[\frac{\Ep^\star\{\G_{j,h}(Y,\bt)\rho^*(Y)\mid\x_i\}}
{\Ep^\star\{\rho^{*2}(Y)+\pi/(1-\pi)\rho^*(Y)\mid\x_i\}}\mid\x_i,r_i,r_iy_i\right]\n\\
&&+n_1^{-1}\sumj
\E\left[\left\{\frac{R_i}{\pi}\rho^*(Y_i)-\frac{1-R_i}{1-\pi}\right\}
\frac{\Ep^\star\{\g_{j,h}(Y,\bt)\rho^*(Y)\mid\X_i\}}
{\Ep^\star\{\rho^{*2}(Y)+\pi/(1-\pi)\rho^*(Y)\mid\X_i\}}\mid\x_j,r_j,r_jy_j\right]\n\\
&&-n^{1/2}n_1^{-1}
\E\left[\left\{\frac{R_i}{\pi}\rho^*(Y_i)-\frac{1-R_i}{1-\pi}\right\}
\frac{\Ep^\star\{\G_{j,h}(Y,\bt)\rho^*(Y)\mid\X_i\}}
{\Ep^\star\{\rho^{*2}(Y)+\pi/(1-\pi)\rho^*(Y)\mid\X_i\}}\right]+O_p(n_1^{-1})\n\\
&=&n_1^{-1}\sumj
\E\left[\left\{\frac{R_i}{\pi}\rho^*(Y_i)-\frac{1-R_i}{1-\pi}\right\}
\frac{\Ep^\star\{\g_{j,h}(Y,\bt)\rho^*(Y)\mid\X_i\}}
{\Ep^\star\{\rho^{*2}(Y)+\pi/(1-\pi)\rho^*(Y)\mid\X_i\}}\mid\x_j,r_j,r_jy_j\right]\n\\
&&+O_p(n_1^{-1}n h^2+n_1^{-1})\n\\
&=&n_1^{-1}\sumj\int\{\rho^*(y)\py(y)-\qy(y)\}
\E\left[\frac{\Ep^\star\{\g_{j,h}(Y,\bt)\rho^*(Y)\mid\X\}}
{\Ep^\star\{\rho^{*2}(Y)+\pi/(1-\pi)\rho^*(Y)\mid\X\}}\mid y\right]dy+o_p(n_1^{-1/2})\n\\
&=&n_1^{-1}\sumj\int\calL^{*\star}(\g_{j,h})(y,\bt)
\left\{\rho^*(y)-\rho(y)\right\}dy+o_p(n_1^{-1/2})\n\\
&=&n_1^{-1}\sumj\int\{\v_{j,h}^{*\star}(y,\bt)-\calL_{j,h}^{*\star}(\a^{*\star})(y,\bt)\}
\left\{\rho^*(y)-\rho(y)\right\}dy+o_p(n_1^{-1/2})\n\\
&=&n_1^{-1}\sumj r_j\int K_h(y-y_j)
\{\U(\x_j,y,\bt)-\b^{*\star}(\x_j,\a^{*\star},\bzeta,\bt)\}
\left\{\rho^*(y)-\rho(y)\right\}dy+o_p(n_1^{-1/2}),
\ese
where
\bse
&&\int K_h(y-y_j)\{\U(\x_j,y,\bt)-\b^{*\star}(\x_j,\a^{*\star},\bzeta,\bt)\}
\left\{\rho^*(y)-\rho(y)\right\}dy\\
&=&\{\U(\x_j,y_j,\bt)-\b^{*\star}(\x_j,\a^{*\star},\bzeta,\bt)\}\{\rho^*(y_j)-\rho(y_j)\}\\
&&+\left[\{\U(\x_j,y_j,\bt)-\b^{*\star}(\x_j,\a^{*\star},\bzeta,\bt)\}
\{{\rho^*}''(y_j)-\rho''(y_j)\}
+2\U'_y(\x_j,y_j,\bt)\{{\rho^*}'(y_j)-\rho'(y_j)\}\right.\\
&&+\left.\U''_{yy}(\x_j,y_j,\bt)\{\rho^*(y_j)-\rho(y_j)\}\right]\frac{h^2}{2}\int
t^2K(t)dt+O(h^4)\\
&=&\{\U(\x_j,y_j,\bt)-\b^{*\star}(\x_j,\a^{*\star},\bzeta,\bt)\}
\{\rho^*(y_j)-\rho(y_j)\}+o(n_1^{-1/2})
\ese
under Conditions \ref{con:0existence}, \ref{con:2rho2}, \ref{con:5kernel2}, and \ref{con:6bandwidth2}.
On the other hand,
using $\E\{\b^{*\star}(\X,\a^{*\star},\bzeta,\bt)\mid y\}=\E\{\U(\X,y,\bt)\mid y\}$,
$T_2(\bt)$ can be written as
\bse
T_2(\bt)
&=&\E\left[\left\{\frac{R}{\pi}\rho^*(Y)-\frac{1-R}{1-\pi}\right\}
\frac{\partial \b^{*\star}(\X,a^{*\star},\bzeta,\bt)}{\partial\bzeta\trans}\right]
(\wh\bzeta-\bzeta)+o_p(n_1^{-1/2})\\
&=&\frac{\partial}{\partial\bzeta\trans}
\E\left[\left\{\frac{R}{\pi}\rho^*(Y)-\frac{1-R}{1-\pi}\right\}\E\{\U(\X,Y,\bt)\mid Y\}\right]
(\wh\bzeta-\bzeta)+o_p(n_1^{-1/2})\\
&=&o_p(n_1^{-1/2}).
\ese
Hence, \eqref{eq:ee} leads to
\be\label{eq:ee2}
&&\A^{*\star}n^{-1}\sumi\left[\frac{r_i}{\pi}\rho^*(y_i)
\left\{\U(\x_i,y_i,\bt)-\b^{*\star}(\x_i,\wh\a^{*\star},\wh\bzeta,\bt)\right\}
+\frac{1-r_i}{1-\pi}\b^{*\star}(\x_i,\wh\a^{*\star},\wh\bzeta,\bt)\right]\\
&=&n^{-1}\sumi\bphi\eff^{*\star}(\x_i,r_i,r_iy_i,\bt)
+\A^{*\star}n_1^{-1}\sumi r_i\{\b^{*\star}(\x_i,\a^{*\star},\bzeta,\bt)-\U(\x_i,y_i,\bt)\}
\{\rho^*(y_i)-\rho(y_i)\}+o_p(n_1^{-1/2})\n\\
&=&\E\left\{\bphi\eff^{*\star}(\X,R,RY,\bt)\right\}
+\A^{*\star}\Ep\left[\{\b^{*\star}(\X,\a^{*\star},\bzeta,\bt)-\U(\X,Y,\bt)\}
\{\rho^*(Y)-\rho(Y)\}\right]+O_p(n_1^{-1/2})\n\\
&=&\E\left\{\bphi\eff^{*\star}(\X,R,RY,\bt)\right\}+O_p(n_1^{-1/2}),\n
\ee
since $\E\{\b^{*\star}(\X,\a^{*\star},\bzeta,\bt)\mid y\}=\E\{\U(\X,y,\bt)\mid y\}$.
This implies that the estimating equation
converges in probability to $\E\{\bphi\eff^{*\star}(\X,R,RY,\bt)\}$
uniformly in $\bt$ by Condition \ref{con:0existence}.
Hence, $\wh\bt$ is consistent for $\bt$.

Finally, we derive the asymptotic distribution of $\wh\bt$.
By the definition of $\wh\bt$ and \eqref{eq:ee2},
\bse
\0
&=&\sqrt{n_1}\A^{*\star}n^{-1}\sumi\left[\frac{r_i}{\pi}\rho^*(y_i)
\left\{\U(\x_i,y_i,\wh\bt)-\b^{*\star}(\x_i,\wh\a^{*\star},\wh\bzeta,\wh\bt)\right\}
+\frac{1-r_i}{1-\pi}\b^{*\star}(\x_i,\wh\a^{*\star},\wh\bzeta,\wh\bt)\right]\\
&=&n^{-1/2}\sumi\left[\sqrt{\pi}\bphi\eff^{*\star}(\x_i,r_i,r_iy_i,\bt)
+\frac{r_i}{\sqrt{\pi}}\A^{*\star}
\{\b^{*\star}(\x_i,\a^{*\star},\bzeta,\bt)-\U(\x_i,y_i,\bt)\}
\{\rho^*(y_i)-\rho(y_i)\}\right]\\
&&+\wh\B\sqrt{n_1}(\wh\bt-\bt)+o_p(1),
\ese
where
\bse
\wh\B
&=&\E\left\{\frac{\partial\bphi\eff^{*\star}(\X,R,RY,\bt)}{\partial\bt\trans}\right\}
+\A^{*\star}\Ep\left[\frac{\partial\{\b^{*\star}(\X,\a^{*\star},\bzeta,\bt)-\U(\X,Y,\bt)\}}
{\partial\bt\trans}\{\rho^*(Y)-\rho(Y)\}\right]+o_p(1)\\
&=&\E\left\{\frac{\partial\bphi\eff^{*\star}(\X,R,RY,\bt)}{\partial\bt\trans}\right\}
+\A^{*\star}\Ep\left[\frac{\partial}{\partial\bt\trans}
E\{\b^{*\star}(\X,\a^{*\star},\bzeta,\bt)-\U(\X,Y,\bt)\mid Y\}
\{\rho^*(Y)-\rho(Y)\}\right]+o_p(1)\\
&=&\E\left\{\frac{\partial\bphi\eff^{*\star}(\X,R,RY,\bt)}{\partial\bt\trans}\right\}+o_p(1)
\ese
because $\wh\bt$ is consistent for $\bt$
and $\E\{\b^{*\star}(\X,\a^{*\star},\bzeta,\bt)\mid y\}=\E\{\U(\X,y,\bt)\mid y\}$.
In addition,
\bse
\E\left\{\frac{\partial\bphi\eff^{*\star}(\X,R,RY,\bt)}{\partial\bt\trans}\right\}
&=&\A^{*\star}\frac{\partial}{\partial\bt\trans}
\E\left[\frac{R}{\pi}\rho^*(Y)\{\U(\X,Y,\bt)-\b^{*\star}(\X,\bt)\}
+\frac{1-R}{1-\pi}\b^{*\star}(\X,\bt)\right]\\
&=&\A^{*\star}\frac{\partial}{\partial\bt\trans}
\left(\Ep\left[\rho^*(Y)\E\{\U(\X,Y,\bt)-\b^{*\star}(\X,\bt)\mid Y\}\right]
+\Eq\left[\E\left\{\b^{*\star}(\X,\bt)\mid Y\right\}\right]\right)\\
&=&\A^{*\star}\frac{\partial}{\partial\bt\trans}\Eq\left[\E\{\U(\X,Y,\bt)\mid Y\}\right]\\
&=&\A^{*\star}\A^{-1}.
\ese
Therefore, $\sqrt{n_1}(\wh\bt-\bt)$ converges in distribution to
$N\left\{\0,\A\A^{*\star-1}\bSigma(\A\A^{*\star-1})\trans\right\}$
as $n_1\to\infty$, where $\bSigma$ is given in (\ref{eq:bSigma}).
\qed

\subsubsection{Proof of Theorem \ref{th:thetanew2}}\label{sec:thetanew2proof}
We define
\bse
\b^*(\x,\a,\Ep,\bt)
\equiv\frac{\Ep\{\U(\x,Y,\bt)\rho^{*2}(Y)+\a(Y,\bt)\rho^*(Y)\mid\x\}}
{\Ep\{\rho^{*2}(Y)+\pi/(1-\pi)\rho^*(Y)\mid\x\}},
\ese
and for any function $g(\cdot,\mu)$, define
its $k$th Gateaux derivative with respect to $\mu$ at $\mu_1$ in the direction $\mu_2$ as
\bse
\frac{\partial^k g(\cdot,\mu_1)}{\partial\mu^k}(\mu_2)\equiv
\frac{\partial^k g(\cdot,\mu)}{\partial\mu^k}(\mu_2)\Big|_{\mu=\mu_1}\equiv
\frac{\partial^k g(\cdot,\mu_1+h\mu_2)}{\partial h^k}\bigg|_{h=0}.
\ese
Then we have
\bse
&&\frac{\partial\b^*(\x,\a,\Ep,\bt)}{\partial\Ep}(\wh\E_p-\Ep)\\
&=&\frac{(\wh\E_p-\Ep)\{\U(\x,Y,\bt)\rho^{*2}(Y)+\a(Y,\bt)\rho^*(Y)\mid\x\}}
{\Ep\{\rho^{*2}(Y)+\pi/(1-\pi)\rho^*(Y)\mid\x\}}\\
&&-\Ep\{\U(\x,Y,\bt)\rho^{*2}(Y)+\a(Y,\bt)\rho^*(Y)\mid\x\}
\frac{(\wh\E_p-\Ep)\{\rho^{*2}(Y)+\pi/(1-\pi)\rho^*(Y)\mid\x\}}
{[\Ep\{\rho^{*2}(Y)+\pi/(1-\pi)\rho^*(Y)\mid\x\}]^2}\\
&=&o_p(n_1^{-1/4}),\\\\
&&\frac{\partial^2\b^*(\x,\a,\mu,\bt)}{\partial{\Ep}^2}(\wh\E_p-\Ep)\\
&=&\frac{-2(\wh\E_p-\Ep)\{\rho^{*2}(Y)+\pi/(1-\pi)\rho^*(Y)\mid\x\}}
{[\Ep\{\rho^{*2}(Y)+\pi/(1-\pi)\rho^*(Y)\mid\x\}]^3}\\
&&\times\left[(\wh\E_p-\Ep)\{\U(\x,Y,\bt)\rho^{*2}(Y)+\a(Y,\bt)\rho^*(Y)\mid\x\}
\mu\{\rho^{*2}(Y)+\pi/(1-\pi)\rho^*(Y)\mid\x\}\right.\\
&&\left.-\mu\{\U(\x,Y,\bt)\rho^{*2}(Y)+\a(Y,\bt)\rho^*(Y)\mid\x\}
(\wh\E_p-\Ep)\{\rho^{*2}(Y)+\pi/(1-\pi)\rho^*(Y)\mid\x\}\right]\\
&=&o_p(n_1^{-1/2})
\ese
for any bounded $\a(y)$ and $\mu(\cdot\mid\x)$
since $\|(\wh\E_p-\Ep)(\cdot\mid\x)\|_\infty=o_p(n_1^{-1/4})$ by the
assumption,
and these hold uniformly in $\x$ by Condition \ref{con:4compact2}.
Then by the Taylor expansion and mean value theorem,
for any bounded $\a(y)$ and some $\alpha\in(0,1)$,
\be\label{eq:b3}
\b^*(\x,\a,\wh\E_p,\bt)
&=&\b^*(\x,\a,\Ep,\bt)
+\frac{\partial\b^*(\x,\a,\Ep,\bt)}{\partial{\Ep}}(\wh\E_p-\Ep)
+\frac{1}{2}\frac{\partial^2\b^*\{\x,\a,\Ep+\alpha(\wh\E_p-\Ep),\bt\}}{\partial{\Ep}^2}(\wh\E_p-\Ep)\n\\
&=&\b^*(\x,\a,\Ep,\bt)+\frac{\partial \b^*(\x,\a,\Ep,\bt)}{\partial{\Ep}}(\wh\E_p-\Ep)+o_p(n_1^{-1/2}).
\ee
Noting that $\a^*(y,\bt)$ is bounded under Condition
\ref{con:3bounded2}, we further get
\bse
&&\wh\calL^*(\a^*)(y,\bt)-\wh\v^*(y,\bt)\\
&=&n_1^{-1}\sumi r_iK_h(y-y_i)\{\b^*(\x_i,\a^*,\wh\E_p,\bt)-\U(\x_i,y,\bt)\}\n\\
&=&n_1^{-1}\sumi r_iK_h(y-y_i)\left\{\b^*(\x_i,a^*,\Ep,\bt)
+\frac{\partial\b^*(\x_i,\a^*,\Ep,\bt)}{\partial{\Ep}}(\wh\E_p-\Ep)
+o_p(n_1^{-1/2})-\U(\x_i,y,\bt)\right\}\n\\
&=&n_1^{-1}\sumi\left\{\calL_{i,h}^*(\a^*)(y,\bt)-\v^*_{i,h}(y,\bt)\right\}+o_p(n_1^{-1/2})
\ese
uniformly in $y$ by Condition \ref{con:4compact2}.
The last equality above is
because $\E\{\b^*(\X,\a^*,\Ep,\bt)\mid y\}=\E\{\U(\X,y,\bt)\mid y\}$ from the definition of $\a^*$,
hence
\bse
&&n_1^{-1}\sumi r_iK_h(y-y_i)\frac{\partial\b^*(\x_i,\a^*,\Ep,\bt)}{\partial{\Ep}}(\wh\E_p-\Ep)\\
&=&n_1^{-1}\sumi r_iK_h(y-y_i)\frac{\partial\b^*(\x_i,\a^*,\Ep,\bt)}{\partial{\Ep}}(\wh\E_p-\Ep)
-\frac{\partial[\py(y)\E\{\b^*(\X,\a^*,\Ep,\bt)\mid y\}]}{\partial{\Ep}}(\wh\E_p-\Ep)\\
&=&n_1^{-1/4}\left[n_1^{-1}\sumi r_iK_h(y-y_i)n_1^{1/4}
\frac{\partial\b^*(\x_i,\a^*,\Ep,\bt)}{\partial{\Ep}}(\wh\E_p-\Ep)
-\py(y)\E\left\{n_1^{1/4}\frac{\partial\b^*(\X,\a^*,\Ep,\bt)}
{\partial{\Ep}}(\wh\E_p-\Ep)\mid y\right\}\right]\\
&=&n_1^{-1/4}O_p\left\{(n_1h)^{-1/2}\log{n_1}+h^2\right\}\\
&=&o_p(n_1^{-1/2})
\ese
under Conditions \ref{con:3bounded2}-\ref{con:6bandwidth2}.
In addition, for any bounded function $\a(y,\bt)$,
\bse
\|(\wh\calL^*-\calL^*)(\a)\|_\infty
=O_p\left\{(n_1h)^{-1/2}\log n_1+h^2\right\}=o_p(n_1^{-1/4})
\ese
under Conditions \ref{con:0existence}, \ref{con:3bounded2}-\ref{con:6bandwidth2}
and the assumption $\|(\wh\E_p-\Ep)(\cdot\mid\x)\|_\infty=o_p(n_1^{-1/4})$.
Similarly, $\|\wh\v^*-\v^*\|_\infty=o_p(n_1^{-1/4})$.
Then using that $\calL^{*-1}$ is a bounded linear operator by Lemma \ref{lem:l2},
$\wh\a^*(y,\bt)$ can be expressed as
\be\label{eq:a3}
\wh\a^*(y,\bt)
&=&\{\calL^*+(\wh\calL^*-\calL^*)\}^{-1}(\wh\v^*)(y,\bt)\n\\
&=&\{\calL^{*-1}-\calL^{*-1}(\wh\calL^*-\calL^*)\calL^{*-1}\}
\{\v^*+(\wh\v^*-\v^*)\}(y,\bt)+o_p(n_1^{-1/2})\n\\
&=&\a^*(y,\bt)+\calL^{*-1}(\wh\v^*-\v^*)(y,\bt)
-\{\calL^{*-1}(\wh\calL^*-\calL^*)\}(\a^*)(y,\bt)+o_p(n_1^{-1/2})\n\\
&=&\a^*(y,\bt)+\calL^{*-1}\{\wh\v^*-\wh\calL^*(\a^*)\}(y,\bt)
+o_p(n_1^{-1/2})\n\\
&=&\a^*(y,\bt)+n_1^{-1}\sumi\g_{i,h}(y,\bt)+o_p(n_1^{-1/2})
\ee
uniformly in $y$,
where $\g_{i,h}(y,\bt)\equiv\calL^{*-1}\{\v^*_{i,h}-\calL_{i,h}^*(\a^*)\}(y,\bt)$.
We also have for any bounded $\a(y,\bt)$,
\bse
\left\|\frac{\Ep\{\a(Y,\bt)\rho^*(Y)\mid\x_i\}}
{\Ep\{\rho^{*2}(Y)+\pi/(1-\pi)\rho^*(Y)\mid\x_i\}}\right\|_\infty
\leq
\|\a\|_\infty
\left|\frac{\Ep\{\rho^*(Y)\mid\x_i\}}
{\Ep\{\rho^{*2}(Y)+\pi/(1-\pi)\rho^*(Y)\mid\x_i\}}\right|
=O(\|\a\|_\infty),
\ese
and
\bse
&&\left\|\frac{\partial\b^*(\x_i,\wh\a^*,\Ep,\bt)}{\partial\Ep}(\wh\E_p-\Ep)
-\frac{\partial\b^*(\x_i,\a^*,\Ep,\bt)}{\partial\Ep}(\wh\E_p-\Ep)\right\|_\infty\\
&=&\left\|\frac{\partial}{\partial\Ep}
\left[\frac{\Ep\{(\wh\a^*-\a^*)(Y,\bt)\rho^*(Y)\mid\x_i\}}
{\Ep\{\rho^{*2}(Y)+\pi/(1-\pi)\rho^*(Y)\mid\x_i\}}\right](\wh\E_p-\Ep)\right\|_\infty\\
&=&\left\|\frac{(\wh\E_p-\Ep)\{(\wh\a^*-\a^*)(Y,\bt)\rho^*(Y)\mid\x_i\}}
{\Ep\{\rho^{*2}(Y)+\pi/(1-\pi)\rho^*(Y)\mid\x_i\}}\right.\\
&&\left.-\Ep\{(\wh\a^*-\a^*)(Y,\bt)\rho^*(Y)\mid\x_i\}
\frac{(\wh\E_p-\Ep)\{\rho^{*2}(Y)+\pi/(1-\pi)\rho^*(Y)\mid\x_i\}}
{[\Ep\{\rho^{*2}(Y)+\pi/(1-\pi)\rho^*(Y)\mid\x_i\}]^2}\right\|_\infty\\
&\leq&n_1^{-1/4}\frac{\|(\wh\E_p-\Ep)\{n_1^{1/4}(\wh\a^*-\a^*)(Y,\bt)\rho^*(Y)\mid\x_i\}\|_\infty}
{\Ep\{\rho^{*2}(Y)+\pi/(1-\pi)\rho^*(Y)\mid\x_i\}}\\
&&+\left\|\wh\a^*-\a^*\right\|_\infty\Ep\{\rho^*(Y)\mid\x_i\}
\frac{|(\wh\E_p-\Ep)\{\rho^{*2}(Y)+\pi/(1-\pi)\rho^*(Y)\mid\x_i\}|}
{[\Ep\{\rho^{*2}(Y)+\pi/(1-\pi)\rho^*(Y)\mid\x_i\}]^2}\\
&=&n_1^{-1/4}o_p(n_1^{-1/4})+o_p(n_1^{-1/4})o_p(n_1^{-1/4})\\
&=&o_p(n_1^{-1/2})
\ese
because $\|(\wh\E_p-\Ep)(\cdot\mid\x)\|_\infty=o_p(n_1^{-1/4})$
and $\|\wh\a^{*\star}-\a^{*\star}\|_\infty
=O_p\left\{(n_1h)^{-1/2}\log n_1+h^2\right\}=o_p(n^{-1/4})$
by \eqref{eq:g3}, \eqref{eq:a3}, and Condition \ref{con:6bandwidth2}.
Hence using \eqref{eq:b3} and \eqref{eq:a3},
\be\label{eq:b4}
&&\b^*(\x_i,\wh\a^*,\wh\E_p,\bt)\\
&=&\b^*(\x_i,\a^*,\Ep,\bt)
+\frac{\Ep\{(\wh\a^*-\a^*)(Y,\bt)\rho^*(Y)\mid\x_i\}}
{\Ep\{\rho^{*2}(Y)+\pi/(1-\pi)\rho^*(Y)\mid\x_i\}}
+\frac{\partial\b^*(\x_i,\wh\a^*,\Ep,\bt)}{\partial\Ep}(\wh\E_p-\Ep)
+o_p(n_1^{-1/2})\n\\
&=&\b^*(\x_i,\a^*,\Ep,\bt)
+n_1^{-1}\sumj\frac{\Ep\{\g_{j,h}(Y,\bt)\rho^*(Y)\mid\x_i\}}
{\Ep\{\rho^{*2}(Y)+\pi/(1-\pi)\rho^*(Y)\mid\x_i\}}
+\frac{\partial\b^*(\x_i,\a^*,\Ep,\bt)}{\partial\Ep}(\wh\E_p-\Ep)
+o_p(n_1^{-1/2})\n
\ee
uniformly in $\x_i$ by Condition \ref{con:4compact2}.

Now we show the consistency of $\wt\bt$
by following Theorem 2.1 of \cite{newey1994large}.
We can view the problem of finding the solution for $\E\{\bphi\eff^*(\X,R,RY,\bt)\}=\0$
as maximizing the objective function $Q_0(\bt)\equiv -\|\E\{\bphi\eff^*(\X,R,RY,\bt)\}\|_2^2$,
then Theorem 2.1 of \cite{newey1994large} is directly applicable.
It is immediate that $\E\{\b^*(\X,\a^*,\Ep,\bt)\mid y\}=\E\{\U(\X,y,\bt)\mid y\}$
from the definition of $\b^*(\x,\a^*,\Ep,\bt)$.
Hence,
\bse
&&\E\{\bphi\eff^*(\X,R,RY,\bt)\}\\
&=&\A^*\E\left[\frac{R}{\pi}\rho^*(Y)\{\U(\X,Y,\bt)-\b^*(\X,\a^*,\Ep,\bt)\}
+\frac{1-R}{1-\pi}\b^*(\X,\a^*,\Ep,\bt)\right]\\
&=&\A^*\Ep\left[\rho^*(Y)\E\{\U(\X,Y,\bt)-\b^*(\X,\a^*,\Ep,\bt)\mid Y\}\right]
+\A^*\Eq\left[\E\left\{\b^*(\X,\a^*,\Ep,\bt)\mid Y\right\}\right]\\
&=&\A^*\Eq\left[\E\left\{\U(\X,Y,\bt)\mid Y\right\}\right]\\
&=&\0,
\ese
where the last step is by the definition of $\bt$.
Also, Condition \ref{con:0existence} implies that
$\bt$ is the unique solution for $\E\{\bphi\eff^*(\X,R,RY,\bt)\}=\0$
in the neighborhood of $\bt$,
$\bt\in\boldsymbol{\Theta}$ which is compact,
and $\bphi\eff^*(\x,r,ry,\bt)$ is continuous with respect to $\bt$.
Therefore, it suffices to show that the estimating equation
converges in probability to $\E\{\bphi\eff^*(\X,R,RY,\bt)\}$ uniformly in $\bt$.
Using \eqref{eq:b4}, the estimating equation can be expressed as
\be\label{eq:ee3}
&&\A^*n^{-1}\sumi\left[\frac{r_i}{\pi}\rho^*(y_i)
\left\{\U(\x_i,y_i,\bt)-\b^*(\x_i,\wh\a^*,\wh\E_p,\bt)\right\}
+\frac{1-r_i}{1-\pi}\b^*(\x_i,\wh\a^*,\wh\E_p,\bt)\right]\n\\
&=&n^{-1}\sumi\bphi\eff^*(\x_i,r_i,r_iy_i,\bt)
+\A^*n^{-1}\sumi\left\{\frac{r_i}{\pi}\rho^*(y_i)-\frac{1-r_i}{1-\pi}\right\}
\left\{\b^*(\x_i,\a^*,\Ep,\bt)
-\b^*(\x_i,\wh\a^*,\wh\E_p,\bt)\right\}\n\\
&=&n^{-1}\sumi\bphi\eff^*(\x_i,r_i,r_iy_i,\bt)-\A^*\{T_1(\bt)+T_2(\bt)\}+o_p(n_1^{-1/2}),
\ee
where
\bse
T_1(\bt)&\equiv&n^{-1}n_1^{-1}\sumi\sumj
\left\{\frac{r_i}{\pi}\rho^*(y_i)-\frac{1-r_i}{1-\pi}\right\}
\frac{\Ep\{\g_{j,h}(Y,\bt)\rho^*(Y)\mid\x_i\}}
{\Ep\{\rho^{*2}(Y)+\pi/(1-\pi)\rho^*(Y)\mid\x_i\}},\\
T_2(\bt)&\equiv&n^{-1}\sumi
\left\{\frac{r_i}{\pi}\rho^*(y_i)-\frac{1-r_i}{1-\pi}\right\}
\frac{\partial\b^*(\x_i,\a^*,\Ep,\bt)}{\partial\Ep}(\wh\E_p-\Ep).
\ese
Using the property of the U-statistic, Condition \ref{con:6bandwidth2}, and \eqref{eq:g2},
we can rewrite $T_1(\bt)$ as
\bse
T_1(\bt)
&=&n_1^{-1}\sumi
\left\{\frac{r_i}{\pi}\rho^*(y_i)-\frac{1-r_i}{1-\pi}\right\}
\E\left[\frac{\Ep\{\G_{j,h}(Y,\bt)\rho^*(Y)\mid\x_i\}}
{\Ep\{\rho^{*2}(Y)+\pi/(1-\pi)\rho^*(Y)\mid\x_i\}}\mid\x_i,r_i,r_iy_i\right]\n\\
&&+n_1^{-1}\sumj
\E\left[\left\{\frac{R_i}{\pi}\rho^*(Y_i)-\frac{1-R_i}{1-\pi}\right\}
\frac{\Ep\{\g_{j,h}(Y,\bt)\rho^*(Y)\mid\X_i\}}
{\Ep\{\rho^{*2}(Y)+\pi/(1-\pi)\rho^*(Y)\mid\X_i\}}\mid\x_j,r_j,r_jy_j\right]\n\\
&&-n^{1/2}n_1^{-1}
\E\left[\left\{\frac{R_i}{\pi}\rho^*(Y_i)-\frac{1-R_i}{1-\pi}\right\}
\frac{\Ep\{\G_{j,h}(Y,\bt)\rho^*(Y)\mid\X_i\}}
{\Ep\{\rho^{*2}(Y)+\pi/(1-\pi)\rho^*(Y)\mid\X_i\}}\right]+O_p(n_1^{-1})\n\\
&=&n_1^{-1}\sumj
\E\left[\left\{\frac{R_i}{\pi}\rho^*(Y_i)-\frac{1-R_i}{1-\pi}\right\}
\frac{\Ep\{\g_{j,h}(Y,\bt)\rho^*(Y)\mid\X_i\}}
{\Ep\{\rho^{*2}(Y)+\pi/(1-\pi)\rho^*(Y)\mid\X_i\}}\mid\x_j,r_j,r_jy_j\right]\n\\
&&+O_p(n_1^{-1}nh^2 + n_1^{-1})\n\\
&=&n_1^{-1}\sumj\int\{\rho^*(y)\py(y)-\qy(y)\}
\E\left[\frac{\Ep\{\g_{j,h}(Y,\bt)\rho^*(Y)\mid\X\}}
{\Ep\{\rho^{*2}(Y)+\pi/(1-\pi)\rho^*(Y)\mid\X\}}\mid y\right]dy+o_p(n_1^{-1/2})\n\\
&=&n_1^{-1}\sumj\int\calL^*(\g_{j,h})(y,\bt)
\left\{\rho^*(y)-\rho(y)\right\}dy+o_p(n_1^{-1/2})\n\\
&=&n_1^{-1}\sumj\int\{\v_{j,h}^*(y,\bt)-\calL_{j,h}^*(\a^*)(y,\bt)\}
\left\{\rho^*(y)-\rho(y)\right\}dy+o_p(n_1^{-1/2})\n\\
&=&n_1^{-1}\sumj r_j\int K_h(y-y_j)
\{\U(\x_j,y,\bt)-\b^*(\x_j,\a^*,\Ep,\bt)\}
\left\{\rho^*(y)-\rho(y)\right\}dy+o_p(n_1^{-1/2}),
\ese
where
\bse
&&\int K_h(y-y_j)\{\U(\x_j,y,\bt)-\b^*(\x_j,\a^*,\Ep,\bt)\}
\left\{\rho^*(y)-\rho(y)\right\}dy\\
&=&\{\U(\x_j,y_j,\bt)-\b^*(\x_j,\a^*,\Ep,\bt)\}\{\rho^*(y_j)-\rho(y_j)\}\\
&&+\left[\{\U(\x_j,y_j,\bt)-\b^*(\x_j,\a^*,\Ep,\bt)\}
\{{\rho^*}''(y_j)-\rho''(y_j)\}
+2\U'_y(\x_j,y_j,\bt)\{{\rho^*}'(y_j)-\rho'(y_j)\}\right.\\
&&+\left.\U''_{yy}(\x_j,y_j,\bt)\{\rho^*(y_j)-\rho(y_j)\}\right]\frac{h^2}{2}\int t^2K(t)dt+O(h^4)\\
&=&\{\U(\x_j,y_j,\bt)-\b^*(\x_j,\a^*,\Ep,\bt)\}
\{\rho^*(y_j)-\rho(y_j)\}+o(n_1^{-1/2})
\ese
under Conditions \ref{con:0existence}, \ref{con:2rho2},
\ref{con:5kernel2}, and \ref{con:6bandwidth2}.
On the other hand, using $\E\{\b^*(\X,\a^*,\Ep,\bt)\mid y\}=\E\{\U(\X,y,\bt)\mid y\}$,
$T_2(\bt)$ can be written as
\bse
T_2(\bt)
&=&n^{-1}\sumi\left\{\frac{r_i}{\pi}\rho^*(y_i)-\frac{1-r_i}{1-\pi}\right\}
\frac{\partial\b^*(\x_i,\a^*,\Ep,\bt)}{\partial{\Ep}}(\wh\E_p-\Ep)\\
&&-\frac{\partial}{\partial\Ep}\E\left[\left\{\frac{R}{\pi}\rho^*(Y)-\frac{1-R}{1-\pi}\right\}
\E\{\U(\X,Y,\bt)\mid y\}\right](\wh\E_p-\Ep)\\
&=&n_1^{-1/4}\left(n^{-1}\sumi\left\{\frac{r_i}{\pi}\rho^*(y_i)-\frac{1-r_i}{1-\pi}\right\}
n_1^{1/4}\frac{\partial\b^*(\x_i,\a^*,\Ep,\bt)}{\partial{\Ep}}(\wh\E_p-\Ep)\right.\\
&&\left.-\E\left[\left\{\frac{R}{\pi}\rho^*(Y)-\frac{1-R}{1-\pi}\right\}
n_1^{1/4}\frac{\partial\b^*(\X,\a^*,\Ep,\bt)}{\partial\Ep}(\wh\E_p-\Ep)\right]\right)\\
&=&n_1^{-1/4}O_p(n^{-1/2})\\
&=&o_p(n_1^{-1/2}).
\ese
Hence, \eqref{eq:ee3} leads to
\be\label{eq:ee4}
&&\A^*n^{-1}\sumi\left[\frac{r_i}{\pi}\rho^*(y_i)
\left\{\U(\x_i,y_i,\bt)-\b^*(\x_i,\wh\a^*,\wh\E_p,\bt)\right\}
+\frac{1-r_i}{1-\pi}\b^*(\x_i,\wh\a^*,\wh\E_p,\bt)\right]\\
&=&n^{-1}\sumi\bphi\eff^*(\x_i,r_i,r_iy_i,\bt)
+\A^*n_1^{-1}\sumi r_i\{\b^*(\x_i,\a^*,\Ep,\bt)-\U(\x_i,y_i,\bt)\}
\{\rho^*(y_i)-\rho(y_i)\}+o_p(n_1^{-1/2})\n\\
&=&\E\left\{\bphi\eff^*(\X,R,RY,\bt)\right\}
+\A^*\Ep\left[\{\b^*(\X,\a^*,\Ep,\bt)-\U(\X,Y,\bt)\}
\{\rho^*(Y)-\rho(Y)\}\right]+O_p(n_1^{-1/2})\n\\
&=&\E\left\{\bphi\eff^*(\X,R,RY,\bt)\right\}+O_p(n_1^{-1/2}),\n
\ee
since $\E\{\b^*(\X,\a^*,\Ep,\bt)\mid y\}=\E\{\U(\X,y,\bt)\mid y\}$.
This implies that the estimating equation
converges in probability to $\E\{\bphi\eff^*(\X,R,RY,\bt)\}$
uniformly in $\bt$ by Condition \ref{con:0existence}.
Hence, $\wt\bt$ is consistent for $\bt$.

Finally, we derive the asymptotic distribution of $\wt\bt$.
By the definition of $\wt\bt$ and \eqref{eq:ee4},
\bse
\0
&=&\sqrt{n_1}\A^*n^{-1}\sumi\left[\frac{r_i}{\pi}\rho^*(y_i)
\left\{\U(\x_i,y_i,\wt\bt)-\b^*(\x_i,\wh\a^*,\wh\E_p,\wt\bt)\right\}
+\frac{1-r_i}{1-\pi}\b^*(\x_i,\wh\a^*,\wh\E_p,\wt\bt)\right]\\
&=&n^{-1/2}\sumi\left[\sqrt{\pi}\bphi\eff^*(\x_i,r_i,r_iy_i,\bt)
+\frac{r_i}{\sqrt{\pi}}\A^*
\{\b^*(\x_i,\a^*,\Ep,\bt)-\U(\x_i,y_i,\bt)\}
\{\rho^*(y_i)-\rho(y_i)\}\right]\\
&&+\wh\B\sqrt{n_1}(\wt\bt-\bt)+o_p(1),
\ese
where
\bse
\wh\B
&=&\E\left\{\frac{\partial\bphi\eff^*(\X,R,RY,\bt)}{\partial\bt\trans}\right\}
+\A^*\Ep\left[\frac{\partial}{\partial\bt\trans}
\E\{\b^*(\X,\a^*,\Ep,\bt)-\U(\X,Y,\bt)\mid Y\}\{\rho^*(Y)-\rho(Y)\}\right]+o_p(1)\\
&=&\A^*\frac{\partial}{\partial\bt\trans}
\E\left[\frac{R}{\pi}\rho^*(Y)\{\U(\X,Y,\bt)-\b^*(\X,\bt)\}
+\frac{1-R}{1-\pi}\b^*(\X,\bt)\right]+o_p(1)\\
&=&\A^*\frac{\partial}{\partial\bt\trans}\Eq\left\{\U(\X,Y,\bt)\right\}+o_p(1)\\
&=&\A^*\A^{-1}+o_p(1),
\ese
because $\wt\bt$ is consistent for $\bt$
and $\E\{\b^*(\X,\a^*,\Ep,\bt)\mid y\}=\E\{\U(\X,y,\bt)\mid y\}$.
Therefore, $\sqrt{n_1}(\wt\bt-\bt)$ converges in distribution to
$N\left\{\0,\A\A^{*-1}\bSigma(\A\A^{*-1})\trans\right\}$ as $n_1\to\infty$.
\qed

\end{document}